\DeclareMathAlphabet{\mathbbold}{U}{bbold}{m}{n}
\renewcommand{\backref}[1]{}
\renewcommand{\backrefalt}[4]{%
\ifcase #1 %
\or
[p.\ #2]%
\else
[pp.\ #2]%
\fi}
\newcommand*\rel@kern[1]{\kern#1\dimexpr\macc@kerna}
\newcommand*\widebar[1]{%
  \begingroup
  \def\mathaccent##1##2{%
    \rel@kern{0.8}%
    \overline{\rel@kern{-0.8}\macc@nucleus\rel@kern{0.2}}%
    \rel@kern{-0.2}%
  }%
  \macc@depth\@ne
  \let\math@bgroup\@empty \let\math@egroup\macc@set@skewchar
  \mathsurround\z@ \frozen@everymath{\mathgroup\macc@group\relax}%
  \macc@set@skewchar\relax
  \let\mathaccentV\macc@nested@a
  \macc@nested@a\relax111{#1}%
  \endgroup
}
\newcommand{\para}{%
  \@startsection{paragraph}{4}%
  {\z@}{2ex \@plus 3.3ex \@minus .2ex}{-1em}%
  {\normalfont\normalsize\bfseries}%
}
\newtheorem{theorem}{Theorem}
\newtheorem{lemma}[theorem]{Lemma}
\newtheorem{proposition}[theorem]{Proposition}
\newtheorem{corollary}[theorem]{Corollary}
\newtheorem{definition}[theorem]{Definition}
\newtheorem{conjecture}{Conjecture}
\theoremstyle{definition}
\newcommand{\eq}[1]{\hyperref[eq:#1]{(\ref*{eq:#1})}}
\renewcommand{\sec}[1]{\hyperref[sec:#1]{Section~\ref*{sec:#1}}}
\newcommand{\thm}[1]{\hyperref[thm:#1]{Theorem~\ref*{thm:#1}}}
\newcommand{\lem}[1]{\hyperref[lem:#1]{Lemma~\ref*{lem:#1}}}
\newcommand{\defn}[1]{\hyperref[def:#1]{Definition~\ref*{def:#1}}}
\newcommand{\prop}[1]{\hyperref[prop:#1]{Proposition~\ref*{prop:#1}}}
\newcommand{\cor}[1]{\hyperref[cor:#1]{Corollary~\ref*{cor:#1}}}
\newcommand{\fig}[1]{\hyperref[fig:#1]{Figure~\ref*{fig:#1}}}
\newcommand{\tab}[1]{\hyperref[tab:#1]{Table~\ref*{tab:#1}}}
\newcommand{\alg}[1]{\hyperref[alg:#1]{Algorithm~\ref*{alg:#1}}}
\newcommand{\app}[1]{\hyperref[app:#1]{Appendix~\ref*{app:#1}}}
\newcommand{\MultiLineComment}[1]{}
\newcommand{\be}{\begin{equation}}
\newcommand{\ee}{\end{equation}}
\newcommand{\B}{\{0,1\}}
\newcommand{\Ba}{\{0,1,*\}}
\renewcommand{\th}[1]{$#1^\mathrm{th}$}
\DeclareMathOperator{\Bernoulli}{Bernoulli}
\DeclareMathOperator{\E}{\mathbb{E}}
\DeclareMathOperator{\adeg}{\widetilde{\deg}}
\DeclareMathOperator{\bdeg}{\widetilde{\mathrm{bdeg}}}
\DeclareMathOperator{\R}{\mathbb{R}}
\DeclareMathOperator{\Q}{Q}
\DeclareMathOperator{\Dom}{Dom}
\DeclareMathOperator{\AND}{\mathsf{AND}}
\DeclareMathOperator{\NAND}{\mathsf{NAND}}
\DeclareMathOperator{\NOTEQ}{\widebar{\mathsf{EQ}}}
\DeclareMathOperator{\OR}{\mathsf{OR}}
\DeclareMathOperator{\XOR}{\mathsf{XOR}}
\DeclareMathOperator{\polylog}{polylog}
\DeclareMathOperator{\poly}{poly}
\DeclareMathOperator{\rank}{rank}
\DeclareMathOperator{\row}{row}
\DeclareMathOperator{\col}{col}
\DeclareMathOperator{\arank}{\widetilde{rank}}
\DeclareMathOperator{\agamma}{\widetilde{\gamma}}
\newcommand{\alogrank}{\log\arank}
\newcommand{\QCC}{\Q^{*}_{\textrm{\scshape cc}}}
\DeclareMathOperator{\QIC}{QIC}
\DeclareMathOperator{\CGT}{\mathsf{CGT}}
\DeclareMathOperator{\SCGT}{\mathsf{SCGT}}
\DeclareMathOperator{\DISJ}{\mathsf{DISJ}}
\DeclareMathOperator{\IP}{\mathsf{IP}}
\DeclareMathOperator{\PrOR}{\mathsf{PrOR}}
\DeclareMathOperator{\Parity}{\mathsf{XOR}}
\DeclareMathOperator{\MAJ}{\mathsf{MAJ}}
\DeclareMathOperator{\PrTH}{\mathsf{PrTH}}
\DeclareMathOperator{\PrAND}{\mathsf{PrAND}}
\newcommand{\tO}{\widetilde{O}}
\newcommand{\tOmega}{\widetilde{\Omega}}
\newcommand{\X}{\mathcal{X}}
\newcommand{\Y}{\mathcal{Y}}
\newcommand{\cl}[1]{\mathsf{#1}}
\newcommand{\cc}{\mathrm{cc}}
\newcommand{\CC}{\mathrm{cc}}
\begin{document}
\title{\bfseries Classical lower bounds from quantum upper bounds}

\author{
Shalev Ben-David\\[.5ex]
University of Maryland\\
\texttt{shalev@umd.edu}
\and
\qquad
\and
Adam Bouland\\[.5ex]
University of California, Berkeley\\
\texttt{abouland@berkeley.edu}
\and
\\
Ankit Garg\\[.5ex]
Microsoft Research\\
\texttt{garga@microsoft.com}
\and
\qquad
\and
\\
Robin Kothari\\[.5ex]
Microsoft Research\\
\texttt{robin.kothari@microsoft.com}
}

\date{}
\maketitle
\thispagestyle{empty}
\begin{abstract}
We prove lower bounds on complexity measures, such as the approximate degree of a Boolean function and the approximate rank of a Boolean matrix, using quantum arguments. We prove these \emph{lower bounds} using a quantum query \emph{algorithm} for the combinatorial group testing problem.

We show that for any function $f$, the approximate degree of computing the $\OR$ of $n$ copies of $f$ is $\Omega(\sqrt{n})$ times the approximate degree of $f$, which is optimal. No such general result was known prior to our work, and even the lower bound for the $\OR$ of $\AND$s function was only resolved in 2013.

We then prove an analogous result in communication complexity, showing that the logarithm of the approximate rank (or more precisely, the approximate $\gamma_2$ norm) of $F:\X\times\Y\to\B$ grows by a factor of $\tOmega(\sqrt{n})$ when we take the $\OR$ of $n$ copies of $F$, which is also essentially optimal.
As a corollary, we give a new proof of Razborov's celebrated $\Omega(\sqrt{n})$ lower bound on the quantum communication complexity of the disjointness problem.

Finally, we generalize both these results from composition with the $\OR$ function to composition with arbitrary symmetric functions, yielding nearly optimal lower bounds in this setting as well.
\end{abstract}

\clearpage
{\tableofcontents}

\clearpage

\section{Introduction}
\label{sec:intro}

Quantum computing promises to allow the efficient solution of certain problems believed to be intractable for classical computers, and is therefore of great practical interest. 
From a mathematical perspective, another important contribution of quantum computing is the rise of the ``quantum method'' as a proof technique. 
That is, often one can prove purely classical (i.e., not quantum) mathematical statements using techniques from quantum information for which no classical proof is known, or where the quantum proof is substantially simpler than its classical counterpart.\footnote{This is analogous to how it is sometimes easier to prove a statement about real numbers using complex numbers, as expressed in the following quote usually attributed to Jacques Hadamard~\cite{Kah91}: ``The shortest path between two truths in the real domain passes through the complex domain''.}
For example, the non-existence of efficient 2-locally-decodable codes was first proven using quantum arguments~\cite{kerenidis2003exponential}.
The closure of the classical complexity class $\textsf{PP}$ under intersection was first shown using classical techniques by Beigel, Reingold, and Spielman \cite{beigel1995pp}, but Aaronson showed it could be reproven using quantum techniques in a simpler way~\cite{aaronsonpostbqp}.
The survey by Drucker and de Wolf provides more examples of this proof technique~\cite{DW11}. 

\para{$\OR$ composition.}
In this work, we apply the quantum method to resolve several composition questions for classical complexity measures in query complexity and communication complexity. 
A quintessential example of this type of question is the $\OR$-composition question, which asks the following:
Given a function $f$, how hard is it to compute the function $\OR_n \circ f$,  the $\OR$ of $n$ copies of $f$? 
One particular strategy for computing $\OR_n \circ f$ is to compose the best algorithms for $\OR_n$ and $f$ in the given model of computation. 
For many complexity measures (including all the measures studied in this paper), the product of the complexities of $\OR_n$ and $f$ will yield an upper bound on the complexity of $\OR_n \circ f$. 
Typically, we conjecture that this upper bound is optimal, but it is not obvious that this must be the case, and hence establishing such a lower bound is usually difficult (or possibly even false for some complexity measures).
For example, it is known that this upper bound is optimal for deterministic~\cite{Tal13,Mon14} and quantum query complexity~\cite{Rei11,LMR+11}, but was only recently established for randomized query complexity~\cite{GJPW17}.

In this paper we show an optimal $\OR$-composition result for approximate degree, a complexity measure in query complexity first studied by Nisan and Szegedy~\cite{nisan1994degree}, which lower bounds quantum query complexity~\cite{BBC+01}, and a nearly optimal $\OR$-composition theorem for approximate rank (or approximate $\gamma_2$-norm or generalized discrepancy), a measure in communication complexity which lower bounds quantum communication complexity~\cite{BdW01,LS09a}.

Our results significantly generalize previous $\OR$-composition results for these measures.
For instance, $\OR$-composition for approximate degree was open for close to 20 years \emph{just for the special case that $f$ is the $\AND$ function}! After several incremental improvements (see \tab{andor}) by Shi~\cite{shi2002approximating}, Ambainis~\cite{ambainis2005polynomial}, and Sherstov~\cite{sherstov2009intersection}, the problem was recently resolved by Sherstov~\cite{She13a} and Bun and Thaler~\cite{BT13} using a linear programming characterization of approximate degree.

In contrast, we show a tight $\OR$-composition theorem for approximate degree for \emph{arbitrary} functions $f$, generalizing these works and newer results on constant-depth compositions of the $\AND$ and $\OR$ functions~\cite{BT15}. (In fact, we also provide an optimal lower bound on the approximate degree of the $\OR$ of possibly different functions $f_i$.) 

In communication complexity, to the best of our knowledge no $\OR$-composition result was known for approximate rank. 
Indeed, such a result would directly imply Razborov's celebrated $\Omega(\sqrt{n})$  lower bound on the quantum communication complexity of the disjointness function~\cite{Raz03}.
To highlight the power of our techniques, we provide a short proof of the $\Omega(\sqrt{n})$ lower bound for disjointness.
We also provide a more direct proof of the recent lower bound on the quantum information complexity of disjointness~\cite{BGKMT15}.

\begin{table}
    \centering
        \begin{tabular}{c|c}
             Bound & Citation \\
             \hline $O(n)$ & H\o yer, Mosca and de Wolf \cite{HMdW03} \\
             $\Omega\left(\sqrt{n}\right)$ & Nisan and Szegedy \cite{nisan1994degree} \\
            $\Omega\left(\sqrt{n\log n}\right)$ & Shi \cite{shi2002approximating} \\
             $\Omega\left(n^{0.66\ldots}\right)$ & Ambainis~\cite{ambainis2005polynomial} \\
             $\Omega\left(n^{0.75}\right)$ & Sherstov \cite{sherstov2009intersection} \\
             $\Omega(n)$ & Sherstov~\cite{She13a} and Bun and Thaler~\cite{BT13}
        \end{tabular}
    \caption{History of lower bounds on the approximate degree of $\OR_n\circ \AND_n$ (from \cite{She13a})}
    \label{tab:andor}
\end{table}

\para{Symmetric function composition.} 
We then generalize our $\OR$-composition results to hold for compositions with arbitrary symmetric functions, which are functions that only depend on the Hamming weight of the input. Other than $\OR$, compositions with symmetric functions like parity and majority have been studied in complexity theory. For instance, the question of how difficult it is to compute  $\XOR_n \circ f$ was already studied in 1982 in Yao's seminal paper on the $\XOR$ lemma~\cite{yao1982theory} (see \cite{ODONNELL200468} for a general composition theorem for $g \circ f$ in this setting.). 
Since the class of symmetric functions includes the $\OR$ function, proving composition theorems for arbitrary symmetric functions is even harder. Such composition theorems are known for deterministic~\cite{Tal13,Mon14} and quantum query complexity~\cite{Rei11,LMR+11}. 
But it remains open to show a similar theorem for randomized query complexity, where only partial results are known \cite{gavinsky2018randomised,San18}.

\para{Techniques.} Although the final results for approximate degree and approximate rank are purely classical, our proofs use quantum algorithms in a crucial way, and there is no known classical proof of these results.
We therefore believe this to be a powerful example of the ``quantum method" \cite{DW11}.
However, we only use quantum algorithms in a black-box manner and the reader is not required to be familiar with quantum query complexity.
We only use its relationship with polynomials due to Beals et al.~\cite{BBC+01} and the existence of a quantum algorithm for the combinatorial group testing problem due to Belovs~\cite{Bel15}. 

Another salient feature of our proofs is that our lower bounds on various measures like approximate degree are proven using the existence of very fast quantum \emph{algorithms} for related problems. 
This is part of a recent trend in complexity theory, sometimes called ``ironic complexity theory"~\cite{aaronsonpnpsurvey}, in which lower bounds are proven using upper bounds. 
For instance, Williams' celebrated circuit lower bound for $\cl{ACC}$ uses this approach~\cite{WilliamsACC}.

Our approach of using a fast quantum algorithm (by Belovs~\cite{Bel15}) to prove lower bounds is inspired by the recent work of Hoza~\cite{hoza2017quantum}, who showed that fast quantum algorithms for certain query problems imply lower bounds in communication complexity. Hoza's work was, in turn, inspired by work of Cleve, van Dam, Nielsen, and Tapp~\cite{cleve1999quantum}, who used the Bernstein--Vazirani algorithm~\cite{BV97} to prove the first lower bound on the quantum communication complexity (with unlimited shared entanglement) of the inner product function.
Similar proof techniques were also used by Buhrman and de Wolf \cite{buhrman1998lower} to show a lower bound on the quantum query complexity of searching a sorted list by a reduction to the hardness of computing parity. 
%

\subsection{Our results}

We now describe our results in more detail.

\subsubsection{Approximate degree}

For any Boolean function $f:\B^n \to \B$, the approximate degree of $f$, denoted $\adeg(f)$, is the minimum degree of any real polynomial $p$ over the variables $x_1,\ldots,x_n$, such that $|f(x)-p(x)|\leq 1/3$ for all $x\in\B^n$. 
Note that $\adeg(f)\leq n$ for all Boolean functions since any Boolean function can be represented exactly with a polynomial of degree $n$. 
Also note that negating the output of a function does not change its approximate degree, and neither does negating input bits. Hence $\adeg(\OR_n) = \adeg(\AND_n) = \adeg(\NAND_n)$ and results for one function carry over to the others.

Approximate degree was first studied by Nisan and Szegedy~\cite{nisan1994degree}.
Since then, it has been used to prove oracle separations, design learning algorithms, and show lower bounds on quantum query complexity, formulas size, and communication complexity. 
(See \cite{She13a,She13b,BT13} and the references therein for more information.) 
It can be used to prove lower bounds on quantum query complexity because for all (total or partial) functions $f$, we have $\Q(f) \geq \frac{1}{2} \adeg(f)$~\cite{BBC+01}, where $\Q(f)$ denotes the bounded-error quantum query complexity of $f$. 

Although approximate degree has a simple definition in terms of polynomials, several simple questions about this measure remain open. Surprisingly, even the approximate degree of the depth-2 AND-OR tree $\AND_n \circ \OR_m$ 
remained open for close to 20 years! 
In 2013, after several incremental improvements (described in \tab{andor}), Sherstov~\cite{She13a} and Bun and Thaler~\cite{BT13} showed that 
\begin{equation}
    \adeg(\AND_n \circ \OR_m) = \Omega(\sqrt{nm}),
\end{equation} 
which is optimal~\cite{HMdW03}. These lower bounds were proved using a linear programming formulation of approximate degree, and exploited certain properties of the dual polynomial for the $\OR$ function.
In contrast to these approaches using dual polynomials, our $\OR$-composition result for approximate degree uses completely different techniques and is more general:

\begin{restatable}{theorem}{approxdeg}
\label{thm:approxdeg}
For any Boolean function $f:\B^m\to\B$, we have 
\begin{equation}
    \adeg(\OR_n \circ f) 
    = \Omega(\sqrt{n}\,\adeg(f)).
\end{equation}
\end{restatable}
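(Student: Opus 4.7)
The plan is to prove the contrapositive: given a polynomial $p$ of degree $\le d$ approximating $\OR_n \circ f$ to error at most $1/3$, I will construct a polynomial of degree $O(d/\sqrt n)$ approximating $f$.

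As a first step, observe that $p$ simulates any ``OR-of-subset'' query on the virtual string $w = (f(x_1), \ldots, f(x_n))$: fixing some $y^0 \in f^{-1}(0)$ and setting $x_i = y^0$ for $i \notin S$, the polynomial $p(x_1, \ldots, x_n)$ reduces to a polynomial of degree $\le d$ in $(x_i)_{i \in S}$ that approximates $\bigvee_{i \in S} f(x_i)$. Hence $p$ plays the role of an OR-of-subset oracle with polynomial overhead $d$.

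The plan is then to combine this $p$-simulated oracle with Belovs's quantum algorithm for combinatorial group testing, which locates $k$ defectives among $n$ items using $\tO(\sqrt k)$ OR-of-subset queries. By the polynomial method of Beals et al., replacing each oracle query in a $T$-query quantum algorithm by a degree-$d$ polynomial yields an overall polynomial of degree $O(Td)$ in the underlying variables. From this composition I would extract, by fixing all but one $x_i$ to the null input $y^0$ (so that the CGT promise is vacuously satisfied), a polynomial approximating $f$ on a single input.

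The main obstacle is that a naive composition places the $\sqrt n$ on the wrong side: feeding $p$ into Belovs as above produces a polynomial for $f$ of degree $O(d\sqrt k)$, which is $O(d\sqrt n)$ for interesting $k$ --- much worse than the desired $O(d/\sqrt n)$. The technical crux of the proof is to reorganize the reduction so that Belovs's $\sqrt n$ quantum \emph{speed-up} yields a $1/\sqrt n$ \emph{savings} in the degree rather than a multiplicative cost. I expect this inversion to exploit specific structural features of Belovs's learning-graph-based algorithm, rather than using it as a black box, and to parallel the inversion performed in the Hoza and Cleve--van Dam--Nielsen--Tapp frameworks cited in the introduction, which transfer fast quantum algorithms into classical lower bounds. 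Once the inversion is established, substituting bit-query access to a single $x$ recovers the claimed degree-$O(d/\sqrt n)$ approximation of $f$, yielding the desired $\adeg(\OR_n \circ f) = \Omega(\sqrt n\,\adeg(f))$ bound.
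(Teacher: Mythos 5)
There is a genuine gap. Your first step matches the paper exactly: fixing the copies outside $S$ to a $0$-input of $f$ turns the assumed degree-$d$ polynomial for $\OR_n\circ f$ into a degree-$d$ simulation of an $\OR$-of-subset oracle for the hidden string $w=(f(x_1),\dots,f(x_n))$, and this is fed into Belovs's CGT algorithm via the Beals et al.\ polynomial method. But your target for the composition is wrong, and the ``inversion'' you defer to is not some structural feature of the learning graph. The composition (after making Belovs's polynomial robust to the $1/3$ noise in its inputs, e.g.\ via Sherstov's robustification --- a step you do not address but which is necessary, since you are plugging \emph{approximating} polynomials into the variables) yields a degree-$O(d\sqrt{n})$ approximating polynomial for $\XOR_n\circ f$, the \emph{parity} of all $n$ copies of $f$, because Belovs's algorithm learns the entire hidden string with $O(\sqrt{n})$ $\OR$-queries and can therefore output its parity. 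The lower bound then comes from comparing this against Sherstov's direct-sum theorem $\adeg(\XOR_n\circ f)=\Omega(n\,\adeg(f))$: from $n\,\adeg(f)=O(\sqrt{n}\,d)$ one reads off $d=\Omega(\sqrt{n}\,\adeg(f))$. The $\sqrt{n}$ gain is exactly the gap between the $O(\sqrt{n})$ quantum query cost of learning $w$ by $\OR$-queries and the $\Omega(n)$ cost (in units of $\adeg(f)$) that parity provably requires.

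By contrast, your stated plan --- to extract a degree-$O(d/\sqrt{n})$ polynomial for a single copy of $f$ by restricting all but one $x_i$ --- cannot work as described: restricting variables never decreases degree below what the restriction forces, and there is no mechanism by which the composition ``divides'' the degree by $\sqrt{n}$. The lower bound is obtained not by constructing a too-good polynomial for $f$, but by constructing a too-good polynomial for $\XOR_n\circ f$ and invoking a known hardness result for that composed function. (A minor additional inaccuracy: the paper uses Belovs's worst-case $O(\sqrt{n})$ bound for learning all of $x$, not a $\tO(\sqrt{k})$ sparse version.)
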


This lower bound is tight due to a matching upper bound of Sherstov~\cite{She13b}. This resolves the $\OR$-composition question for approximate degree. As an example, this now allows us to show the optimal  bound $\adeg(\OR_n \circ \MAJ_n) = \Omega(n^{3/2})$, where $\MAJ$ is the majority function. Prior to our work, the best lower bound that could be proved with known techniques was $\adeg(\OR_n \circ \MAJ_n) = \Omega(n)$.

After characterizing the approximate degree of the depth-2 $\AND$-$\OR$ tree, Bun and Thaler~\cite{BT15} also proved that the approximate degree of the depth-$d$ $\AND$-$\OR$ tree on $n$ inputs is $\Omega(\sqrt{n}/\log^{d/2-1}n)$. \thm{approxdeg} straightforwardly implies the optimal bound of $\Omega(\sqrt{n})$.

We then generalize \thm{approxdeg} to a composition theorem for arbitrary symmetric functions $g$. Our $\OR$-composition theorem plays a central role in the proof of our symmetric-function composition theorem, which we discuss in \sec{prooftechniques}. 

\begin{restatable}{theorem}{approxdegsym}
\label{thm:approxdegsym}
For any symmetric Boolean function $g:\B^n\to\B$ and any Boolean function $f:\B^m\to\B$, we have
\begin{equation}
    \adeg(g \circ f) = \tOmega(\adeg(g)\adeg(f)).
\end{equation}
\end{restatable}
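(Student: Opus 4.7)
The strategy is to reduce Theorem~\ref{thm:approxdegsym} to a threshold-composition analogue of Theorem~\ref{thm:approxdeg}, which in turn is established by extending the quantum-algorithmic machinery used in Theorem~\ref{thm:approxdeg} to the threshold setting.

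First I would invoke Paturi's classical characterization of the approximate degree of symmetric functions: for any non-constant symmetric $g:\B^n\to\B$, one has $\adeg(g)=\Theta\bigl(\sqrt{n(n-\Gamma(g))}\bigr)$, where $\Gamma(g)$ quantifies how close to $n/2$ the ``critical'' transition of $g$ lies. Writing $t$ for the transition point of $g$ nearest to $n/2$, and assuming $t\le n/2$ without loss of generality (negating the output of $g$ preserves $\adeg$), Paturi's formula becomes $\adeg(g)=\Theta(\sqrt{nt})$, so the target bound reduces to showing $\adeg(g\circ f)=\tOmega\bigl(\sqrt{nt}\,\adeg(f)\bigr)$.

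The main technical step is to establish the threshold analogue
\[
   \adeg(\TH_t^n\circ f)\;=\;\tOmega\bigl(\sqrt{nt}\,\adeg(f)\bigr).
\]
I would prove this by following exactly the template of Theorem~\ref{thm:approxdeg}, but with Belovs' combinatorial-group-testing algorithm (which locates a single defective among $n$ items using $\tilde O(\sqrt n)$ tests) replaced by its $t$-threshold generalization, namely a quantum algorithm that decides whether there are at least $t$ defectives using $\tilde O(\sqrt{nt})$ tests. Concretely, a degree-$d$ polynomial approximating $\TH_t^n\circ f$ plays the role of the tester; plugging it into the threshold group-testing routine produces a bounded-error quantum algorithm for $f$ of cost $\tilde O(d/\sqrt{nt})$, and combining this with the Beals et al.\ inequality $\adeg(f)\le 2\Q(f)$ forces $d=\tOmega(\sqrt{nt}\,\adeg(f))$.

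Finally, I would pass from threshold back to arbitrary symmetric $g$ by a restriction argument: choose constant $f$-inputs for an appropriate number of blocks so that $f$ evaluates to a controlled mix of $0$s and $1$s, forcing $g$ to restrict to a threshold $\TH_t^{n'}$ on $n'=\Theta(n)$ remaining blocks at the chosen transition weight $t$. The threshold bound just established then yields $\adeg(g\circ f)\ge\adeg(\TH_t^{n'}\circ f)=\tOmega(\sqrt{n'\,t}\,\adeg(f))=\tOmega(\adeg(g)\,\adeg(f))$.

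\textbf{Main obstacle.} The crux is the threshold-composition step. The tempting route of decomposing $\TH_t^n\approx\OR_{n/t}\circ\AND_t$ and iterating Theorem~\ref{thm:approxdeg} twice only yields $\sqrt{n/t}\cdot\sqrt{t}\,\adeg(f)=\sqrt{n}\,\adeg(f)$, losing the crucial factor of $\sqrt{t}$. Recovering it genuinely requires a ``find-$t$-defectives'' quantum algorithm with $\tilde O(\sqrt{nt})$ tests (not just iterated amplitude amplification for finding a single defective), together with a careful error analysis propagating the $1/3$ approximation error of the polynomial through many sequential invocations of the group-testing procedure without a blow-up. A secondary technical point is that the final restriction step must produce a clean threshold with both $n'=\Theta(n)$ and transition weight exactly $t$, even though the original $g$ may have arbitrarily complex behavior at weights far from $n/2$; this is handled by fixing only inputs whose outputs under $f$ are guaranteed to be constants, and using the transition nearest to $n/2$ identified by Paturi.
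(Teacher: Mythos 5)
There is a genuine gap in your proposal: the central ingredient of your threshold-composition step --- a quantum ``$t$-threshold group testing'' algorithm making $\tO(\sqrt{nt})$ queries --- is asserted but never constructed, and nothing in Belovs' work supplies it. Belovs' algorithm is specifically for $\OR$-queries (and, as the paper shows, for the weaker $\PrOR$-type access of $\SCGT$); whether analogous algorithms exist for other query types such as thresholds is exactly the kind of question the paper lists as open. So your plan hinges on an unproved and quite nontrivial quantum upper bound. (Separately, the mechanics of your reduction are garbled: you cannot ``plug a polynomial into a quantum routine'' to obtain a quantum algorithm for $f$. The correct direction, as in \thm{approxdeg}, is to convert the quantum algorithm into a polynomial via Beals et al., compose polynomials, and then invoke a polynomial lower bound for $\XOR_n\circ f$ --- not to deduce $\Q(f)$ bounds.)

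The paper avoids needing any new quantum algorithm by choosing a different decomposition from the one you considered and rejected. You ruled out $\TH_t^n\approx\OR_{n/t}\circ\AND_t$ (OR outside), which indeed loses a $\sqrt{t}$ factor; but the right move is to put the \emph{threshold outside and a promise-OR inside}: $\PrTH_{2k}^k\circ\PrOR_{n/2k}$ is a sub-function of $\PrTH_n^k$. For the outer $\PrTH_{2k}^k$, which has approximate degree $\Theta(k)$ \emph{linear} in its arity $2k$, Sherstov's existing composition theorem (\thm{Sherstov_compose}) already gives the optimal multiplicative factor $k$ (not $\sqrt{k}$); the quantum method is needed only for the inner $\PrOR_{n/2k}$ composition, yielding $k\cdot\sqrt{n/2k}\,\adeg(f)=\Omega(\sqrt{nk}\,\adeg(f))$ up to a log factor. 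Note also that the inner composition must be with the \emph{partial} function $\PrOR$ rather than $\OR$ (inside the promise of $\PrTH_n^k$ each block sees Hamming weight $0$ or $1$), which is why the paper develops the $\SCGT$ generalization of Belovs' algorithm and the multilinear-robustness machinery of \sec{queryPrOR}; your proposal does not engage with these partial-function issues at all.
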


This lower bound is also tight up to log factors due to a matching upper bound of Sherstov~\cite{She13b}. This resolves the symmetric-composition question for approximate degree. 

\subsubsection{Approximate rank or \texorpdfstring{$\gamma_2$}{gamma2} norm}

In communication complexity, we have two players Alice and Bob, who hold inputs $x\in\X$ and $y\in\Y$ respectively. 
Their goal is to compute a  function $F:\X \times \Y \to \B$ on their inputs while minimizing the communication between them. 
One of the most studied functions in communication complexity is the set disjointness problem $\DISJ_n:\B^n\times\B^n\to\B$, defined as $\DISJ_n(x,y)=\bigvee_{i=1}^n (x_i \wedge y_i)$ for all $x,y\in\B^n$. 

The quantum communication complexity of the disjointness problem was one of the early open problems in quantum communication complexity. 
Let $\QCC(F)$ denote the bounded-error quantum communication complexity of a function $F$ with unlimited preshared entanglement.
Then it follows from Grover's algorithm~\cite{Gro96} and the query-to-communication simulation algorithm of Buhrman, Cleve, and Wigderson~\cite{BCW98} that $\QCC(\DISJ_n) = O(\sqrt{n}\log n)$, which was later improved to  $\QCC(\DISJ_n) = O(\sqrt{n})$~\cite{AA03}.
However the lower bound remained open until a breakthrough by Razborov~\cite{Raz03}, who showed that $\QCC(\DISJ_n) = \Omega(\sqrt{n})$.

Razborov's result actually lower bounds a smaller complexity measure. With any communication problem $F:\X\times\Y\to\B$, we can associate a $\{-1,+1\}$ matrix, called the sign matrix of $F$, whose $(x,y)$ entry is $(-1)^{F(x,y)}$. Informally, the approximate rank of $F$, denoted $\arank(F)$ is the least rank of any matrix that is entry-wise close to the sign matrix of $F$. (See \sec{comm} for a more precise definition.)
Another measure that is essentially equivalent to approximate rank 
is the approximate $\gamma_2$-norm of the sign matrix of $F$, which we denote $\agamma_2(F)$, also defined in \sec{comm}.
For any function $F$, $\log \agamma_2(F)$ lower bounds its quantum communication complexity, and Razborov's result proves the stronger statement that $\log\agamma_2(\DISJ_n) = \Omega(\sqrt{n})$.

We first show that our techniques yield a new proof of Razborov's celebrated $\Omega(\sqrt{n})$ lower bound for disjointness.

\begin{restatable}{theorem}{disjointness}
\label{thm:disjointness}
Let $\DISJ_n:\B^n\times\B^n \to \B$ be the set disjointness function defined as $\DISJ_n(x,y)=\bigvee_{i=1}^n (x_i \wedge y_i)$ for all $x,y\in\B^n$. Then
\begin{equation}
\log \arank (\DISJ_n) =  \Omega(\sqrt{n}) \text{ and } \log \agamma_2(\DISJ_n) = \Omega(\sqrt{n}).    
\end{equation}

\end{restatable}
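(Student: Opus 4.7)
The plan is to apply the $\OR$-composition theorem for the approximate $\gamma_2$-norm that we develop earlier in the paper (the communication-complexity counterpart of \thm{approxdeg}). Since $\DISJ_n(x,y) = \OR_n(x_1\wedge y_1,\dots,x_n\wedge y_n)$, we can write $\DISJ_n = \OR_n \circ \AND$, where $\AND:\B\times\B\to\B$ is viewed as a one-bit-each communication problem between Alice and Bob. The composition theorem then reduces the task to showing that $\log\agamma_2(\AND)$ is a positive constant.

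The latter is a short calculation: the sign matrix of $\AND$ is the $2\times 2$ matrix with a single $-1$ entry and three $+1$ entries, which has rank two. An elementary argument shows that no rank-one real matrix can approximate it entry-wise within $1/3$, since the rank-one constraint $ad=bc$ forces the products of diagonally opposite entries to share a sign, whereas the $1/3$-approximation forces opposite signs. Hence $\arank(\AND)=2$ and $\agamma_2(\AND)$ is bounded away from $1$. Combining with the composition theorem yields $\log\agamma_2(\DISJ_n) = \Omega(\sqrt{n})$, and the analogous bound $\log\arank(\DISJ_n) = \Omega(\sqrt{n})$ follows from the standard inequality $\agamma_2(F) \le O(\sqrt{\arank(F)})$ valid for sign matrices.

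The main obstacle lies not in this short reduction but in proving the underlying $\OR$-composition theorem for $\log\agamma_2$. In keeping with the quantum-upper-bounds-imply-classical-lower-bounds philosophy of the paper, that composition theorem is itself established by plugging Belovs' quantum algorithm for combinatorial group testing into a matrix-factorization argument adapted to the communication setting. One must take care that, when the inner function $F$ is tiny (as is $\AND$, whose domain has only four points), no spurious $\polylog n$ factors creep into the bound; otherwise we would recover only $\Omega(\sqrt{n}/\polylog n)$ rather than the sharp $\Omega(\sqrt{n})$ matching Razborov. This is why the statement is recorded separately for $\DISJ$ (as a clean $\Omega(\sqrt{n})$) even though it is morally a corollary of a general $\tOmega$-style theorem.
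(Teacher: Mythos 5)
Your high-level reduction $\DISJ_n=\OR_n\circ\AND^{\cc}$ is the right starting point, but the route you propose --- deduce the disjointness bound as a corollary of the general $\OR$-composition theorem applied to the two-bit $\AND^{\cc}$ --- cannot be made to work as described, for two reasons. First, it is backwards relative to the actual logical structure: the general composition theorem (\thm{polylog_gamma}) is itself proved \emph{using} the $\Omega(\sqrt{n})$ bound for disjointness (via \cor{NOTEQ}, which supplies the base case of the recursive argument for functions whose $\log\agamma_2$ is a small constant --- exactly the regime of $\AND^{\cc}$). Second, and more fundamentally, every version of the composition argument ends by needing a lower bound on $\log\agamma_2(\XOR_n\circ F)$. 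For general $F$ this comes from Sherstov's $\XOR$ lemma (\thm{xor}), which reads $\Omega(n(\log\agamma_2(F)-c))$ for a universal constant $c$; since $\log\agamma_2(\AND^{\cc})$ is a tiny constant (indeed $\agamma_2(\AND^{\cc})\le\sqrt{\arank(\AND^{\cc})}=\sqrt{2}$), this lemma may give nothing at all for $F=\AND^{\cc}$. The missing ingredient, which your proposal never identifies, is that $\XOR_n\circ\AND^{\cc}=\IP_n$ is the inner product function, whose approximate rank is known to be $2^{\Omega(n)}$ by Forster's sign-rank bound or by discrepancy. The proof of \thm{disjointness} uses Belovs' robust $\CGT$ polynomial to convert an approximating matrix for $\DISJ_n$ of rank $r$ into one for $\IP_n$ of rank at most $(1+r)^{O(\sqrt{n})}$, and then invokes that unconditional $\Omega(n)$ bound directly; no $\XOR$ lemma and no recursion are needed, which is precisely how the clean $\Omega(\sqrt{n})$ (with no polylog loss) is obtained.

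There is also a quantitative obstacle to running your argument with $\gamma_2$ throughout: composing the degree-$O(\sqrt{n})$ robust polynomial with matrices via Hadamard products costs, for $\gamma_2$, an additive term equal to the log of the sum of absolute coefficients, which is $O(n^{1.5})$ (\lem{poly_matrix}, \thm{magic_poly_pm}). That term swamps the $\Omega(n)$ lower bound for $\IP_n$ when the inner function is constant-sized. The proof therefore argues about \emph{rank}: the composed matrix's columns lie in the column span of a $d$-fold tensor power of the approximating matrix, so $\rank(B)\le(1+\arank(\DISJ_n))^d$ with no coefficient-dependent loss, and the $\gamma_2$ statement is recovered only at the end via the Lee--Shraibman equivalence (\thm{gamma_rank}). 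Finally, a small point: $\arank(\AND^{\cc})=2$ does not by itself imply $\agamma_2(\AND^{\cc})>1$, since the inequality $\gamma_2(A)\le\sqrt{\rank(A)}$ runs in the wrong direction; fortunately this step is not needed once the argument is routed through $\IP_n$.
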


Note that this lower bound is tight due to the matching quantum algorithm of Aaronson and Ambainis~\cite{AA03}. Building on this, we generalize our result to an $\OR$-composition theorem.\footnote{An astute reader may worry that an $\OR$-composition theorem cannot possibly hold in communication complexity because some functions do not become harder as we take the $\OR$ of many copies of the function. For example, the function $\NOTEQ_n:\B^n\times\B^n\to\B$, defined as $\NOTEQ_n(x,y)=0$ if and only if $x=y$, can be solved with $O(1)$ communication using a randomized or quantum protocol. Taking the $\OR$ of many copies of $\NOTEQ_n$ only yields a larger instance of $\NOTEQ$, which is no harder than before. However, \thm{polylog_gamma} still holds because $\log \agamma_2(\NOTEQ_n) \leq 0$.}

\begin{restatable}{theorem}{approxrank}
\label{thm:approxrank}
\label{thm:polylog_gamma}
For any function $F:\X \times \Y \to \B$, we have
$\log\agamma_2(\OR_n\circ F)    =\tOmega\left(\sqrt{n}\log\agamma_2(F)\right)$.
\end{restatable}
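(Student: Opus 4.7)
The plan is to extend the proof of \thm{disjointness} (which handled the special case $F = \AND$, yielding the $\tOmega(\sqrt n)$ bound for disjointness) to arbitrary communication problems $F$. The central tool is Belovs' quantum algorithm for combinatorial group testing (CGT), which I plan to use to ``transfer'' hardness from $F$ to $\OR_n \circ F$ at the level of the approximate $\gamma_2$-norm, in much the same way as Belovs' algorithm was invoked for $F = \AND$ in \thm{disjointness}.

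First I would set $K := \log\agamma_2(\OR_n \circ F)$ and aim to conclude $K = \tOmega(\sqrt n \log\agamma_2(F))$. By a restriction argument --- fixing inputs outside $S \subseteq [n]$ to designated $0$-inputs of $F$ --- the hypothesis gives, for every $S \subseteq [n]$, a matrix $M_S$ with $\gamma_2(M_S) \leq 2^K$ that approximates the sign matrix of $\OR_S \circ F$. Thus, a single low-$\gamma_2$ approximation of $\OR_n \circ F$ yields a whole family of low-$\gamma_2$ approximations of OR-subset problems over $F$, to be used as matrix-level analogs of OR-subset oracle queries. Next I would invoke Belovs' CGT algorithm: it identifies up to $k$ marked items among $n$ using $t = \tO(\sqrt k)$ OR-subset queries. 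Via the Beals--Buhrman--Cleve--Mosca--de Wolf polynomial method, this translates into a real polynomial $p$ of degree $\tO(\sqrt k)$ in the OR-subset answer variables $y_S$, whose output encodes the marked-set identity. I would substitute $y_S \mapsto M_S$ in $p$: since $\gamma_2$ is linear under sums and sub-multiplicative under the entrywise (Hadamard) product, and since $p$'s $L_1$-Fourier norm is at most $n^{\tO(\sqrt k)}$, the resulting matrix has $\log\gamma_2 \leq \tO(\sqrt k)\cdot K$ and approximates an ``$F$-CGT'' task on $n$ instances of $F$.

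To close the argument, I would invoke a direct-product-type lower bound for the $F$-CGT problem in terms of $\log\agamma_2(F)$ --- identifying $k$ marked $F$-instances requires a matrix of $\log\gamma_2 \geq \tOmega(k\log\agamma_2(F))$. Combined with the upper bound from the previous step, choosing $k = \Theta(n)$ gives $\tO(\sqrt n)\cdot K \geq \tOmega(n\log\agamma_2(F))$, i.e., $K = \tOmega(\sqrt n \log\agamma_2(F))$. The main obstacle is this direct-product step: establishing that $F$-CGT is hard in terms of $\log\agamma_2(F)$ with only polylogarithmic loss. Known direct-product theorems for quantum communication (such as those derivable from Sherstov's pattern-matrix method or the discrepancy-style tools in the $\gamma_2$ literature) give such bounds in various settings, but the specific CGT promise structure (at most $k$ marked) and the use of the $\agamma_2$ norm rather than $\QCC$ as the target measure will require a tailored reduction --- most likely a uniform hardness amplification that packages the CGT task into a suitable Boolean decision problem on which a direct-sum bound for $\agamma_2$ applies. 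A secondary concern is controlling the $L_1$-Fourier norm of Belovs' polynomial with enough precision that the additive contribution in the $\gamma_2$-substitution step stays $\tO(\sqrt k)$ and does not degrade the $\sqrt n$ factor in the final conclusion.
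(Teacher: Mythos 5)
Your overall architecture (restrict the assumed low-$\gamma_2$ approximation of $\OR_n\circ F$ to get matrices $M_S$ for every $S\subseteq[n]$, plug them into the polynomial coming from Belovs' algorithm via Hadamard products, then compare against a hardness result for the composed task) is the same as the paper's, but there are two genuine gaps.

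First, the step you flag as the main obstacle --- a ``direct-product-type lower bound for the $F$-CGT problem'' --- is not what the argument should target, and the tailored reduction you defer is never needed. The paper works only with the \emph{decision} version $\XOR_n\circ\CGT_{2^n}$ (the parity of the hidden string): the composed matrix $B$ then approximates the sign matrix of $\XOR_n\circ F$, and the required lower bound $\log\agamma_2(\XOR_n\circ F)=\Omega(n\log\agamma_2(F))$ is exactly Sherstov's XOR lemma for approximate $\gamma_2$ (\thm{xor}), invoked as a black box. Note also that the polynomial method only yields a real polynomial for the acceptance probability of a single-output decision, so your polynomial ``whose output encodes the marked-set identity'' has to be collapsed to a Boolean-valued one in any case; parity is the right choice precisely because the matching $\gamma_2$ lower bound is already known.

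Second, and more seriously, your ``secondary concern'' about the coefficient $L_1$ norm is not secondary and cannot be fixed by controlling it ``with enough precision.'' The polynomial $p$ acts on $2^n$ variables (one per subset $S\subseteq[n]$), not $n$, so the sum of absolute coefficients of a bounded degree-$O(\sqrt{n})$ polynomial is only bounded by $(2^n)^{O(\sqrt{n})}=2^{O(n^{1.5})}$. Via \lem{poly_matrix} this contributes an additive $O(n^{1.5})$ to $\log\gamma_2(B)$, and after comparing with $\Omega(n\log\agamma_2(F))$ and dividing by $\sqrt{n}$ you obtain only $\log\agamma_2(\OR_n\circ F)\ge\Omega(\sqrt{n}\log\agamma_2(F))-O(n)$ (this is \lem{gamma_or}), which is vacuous whenever $\log\agamma_2(F)=O(\sqrt{n})$. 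The paper removes this additive loss with a separate, essential ingredient your proposal lacks: a recursive self-reducibility argument (\lem{recurse}) exploiting $\OR_n=\OR_{\sqrt{n}}\circ\OR_{\sqrt{n}}$, bootstrapped from the base case that $\log\agamma_2(\OR_n\circ F)=\Omega(\sqrt{n})$ for every $F$ not equivalent to $\NOTEQ$ (\cor{NOTEQ}). Without that recursion your argument establishes only the weaker \lem{gamma_or}, not the stated theorem.
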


We then generalize this proof to show a nearly optimal composition theorem for an arbitrary symmetric function $g$ and an arbitrary communication problem $F$.

\begin{restatable}{theorem}{approxranksym}
\label{thm:approxranksym}
For any Boolean function $F:\X \times \Y \to \B$, and any symmetric function $g:\{0,1\}^n\to \{0,1\}$, we have
\begin{equation}
\log\agamma_2(g\circ F)
    \geq \adeg(g)^{1-o(1)} \log\agamma_2(F).    
\end{equation}
\end{restatable}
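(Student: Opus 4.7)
The strategy is to mirror the proof of the approximate-degree analogue \thm{approxdegsym} in the communication setting, with \thm{approxrank} (the $\OR$-composition bound for $\log\agamma_2$) playing the role that \thm{approxdeg} plays there. The starting point is Paturi's characterization of the approximate degree of symmetric functions: $\adeg(g)=\Theta(\sqrt{n(t+1)})$, where $t$ is the smaller of the two distances from $\{0,n\}$ to the nearest Hamming weight at which $g$ flips value. By complementing $g$ and/or the input bits we may assume this jump is near~$0$, i.e., $g$ is constant on weights $0,1,\ldots,t-1$ and $g(t-1)\neq g(t)$.

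The first move is a ``restriction to $\OR$'' step. Fix $t-1$ coordinates of the composition $g\circ F$ to a pair $(x^\ast,y^\ast)$ with $F(x^\ast,y^\ast)=1$; for non-monotone $g$, additionally fix further coordinates to a $0$-input of $F$ in order to rule out higher-weight jumps. Since $\log\agamma_2$ is monotone under passing to combinatorial rectangles (i.e., fixing rows and columns), and the restricted function is, up to negation, $\OR_{n-t+1}\circ F$, applying \thm{approxrank} gives
\begin{equation}
\log\agamma_2(g\circ F)\ \geq\ \log\agamma_2(\OR_{n-t+1}\circ F)\ =\ \tOmega(\sqrt{n-t+1}\,\log\agamma_2(F)).
\end{equation}
This already matches the target bound $\adeg(g)^{1-o(1)}\log\agamma_2(F)$ whenever $t=O(1)$.

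The remaining, and main, difficulty is to recover the missing $\sqrt{t+1}$ factor when $t$ is large, so that the final bound depends on $\adeg(g)=\sqrt{n(t+1)}$ rather than only on $\sqrt{n}$. A natural first attempt is to partition the $n$ coordinates into blocks of size $\sim t$, amplify each block to an inner $\AND_t\circ F$ gadget by a second application of \thm{approxrank}, and then apply \thm{approxrank} a third time at the outer block level; but this chains two independent $\OR$-compositions and yields only $\sqrt{n/(t+1)}\cdot\sqrt{t+1}=\sqrt{n}$, losing precisely the factor we need. The expected resolution, mirroring \thm{approxdegsym}, is to replace the two-step composition by a \emph{single} quantum argument tailored to the symmetric gate $g$: a threshold/counting variant of Belovs's quantum algorithm for combinatorial group testing whose query complexity scales with $\adeg(g)$, fed into the pattern-matrix-style machinery developed for \thm{approxrank} in order to convert the query upper bound into an $\agamma_2$ lower bound of the form $\adeg(g)^{1-o(1)}\log\agamma_2(F)$.

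The main obstacle is precisely this last step. Combining the quantum counting algorithm with the $\agamma_2$ machinery in a way that preserves a factor of $\adeg(g)$ (and not merely $\sqrt{n}$) requires carefully tracking the entanglement/communication bookkeeping so that the full symmetric sensitivity pattern of $g$, rather than only a single jump, is exploited. This is also the source of the $(1-o(1))$ factor in the exponent, which absorbs the polylogarithmic losses incurred by the reduction between approximate degree, quantum query complexity, and $\agamma_2$.
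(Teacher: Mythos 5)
There is a genuine gap. Your setup (Paturi's characterization plus a restriction argument) and your diagnosis that a naive $\OR$-composition only recovers $\sqrt{n}$ are both on target, but the mechanism you propose to recover the missing $\sqrt{k}$ factor --- a new ``threshold/counting variant of Belovs's quantum algorithm'' fed into the $\agamma_2$ machinery --- is not constructed, and it is not what is needed. The actual resolution is combinatorial, not algorithmic: letting $k$ be the jump of $g$ closest to $n/2$ (note: closest to the \emph{middle}, not to the boundary $\{0,n\}$ as your phrasing suggests --- with jumps at weights $1$ and $n/2$ one has $\adeg(g)=\Theta(n)$, not $\Theta(\sqrt{n})$), one restricts $g\circ F$ to the promise function $\PrTH_n^k\circ F$ and then observes that $\PrTH_{2k}^k\circ\PrOR_{n/2k}$ is a sub-function of $\PrTH_n^k$. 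The inner $\PrOR_{n/2k}$ layer is handled by the \emph{promise}-$\OR$ composition theorem \thm{pror_gamma} (not \thm{approxrank}; the promise version requires the singleton-CGT analysis and the intrinsic robustness of multilinear polynomials from quantum algorithms, and contributes $(n/k)^{1/2-o(1)}$). The outer $\PrTH_{2k}^k$ layer is handled by a communication analogue of Sherstov's composition theorem \thm{Sherstov_compose}, which is \emph{lossless} for outer functions of linear approximate degree and therefore contributes a full factor of $\adeg(\PrTH_{2k}^k)=\Theta(k)$ rather than $\sqrt{k}$. The product $k\cdot(n/k)^{1/2-o(1)}=(nk)^{1/2-o(1)}=\adeg(g)^{1-o(1)}$ is exactly what the ``chain two $\OR$-compositions'' attempt you correctly reject cannot achieve.

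In short, the two missing ideas are: (i) the decomposition through $\PrTH_{2k}^k\circ\PrOR_{n/2k}$, which separates the composition into a layer where a square-root loss is unavoidable ($\PrOR$) and a layer where no loss occurs (a threshold of linear approximate degree on $2k$ bits); and (ii) the two composition theorems that service these layers, \thm{pror_gamma} and the $\gamma_2$ analogue of \thm{Sherstov_compose}. Without these, your argument establishes only the $\Omega(\sqrt{n}\log\agamma_2(F))$ bound, which falls short of $\adeg(g)^{1-o(1)}\log\agamma_2(F)$ whenever $k=\omega(1)$.
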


Note that these lower bounds are also essentially tight, as a matching upper bound of 
$\log\agamma_2(g\circ F) = \tO(\adeg(g) \log\agamma_2(F))$ can be proved by composing a polynomial for $g$ with a matrix for $F$. (For example, this can be done using the construction in \lem{poly_matrix}).

\subsubsection{Further Extensions}

We also prove two further extensions of our result. In \sec{unbalanced}, we generalize our tight $\OR$-composition theorem for $\adeg(\OR_n \circ f)$ to the case of different functions $f_i$. 
We show a tight lower bound on the approximate degree of the $\OR$ of $n$ possibly different functions $f_i$ (which may possibly even have different input sizes). 
This completely characterizes the approximate degree of this function, and furthermore implies that the approximate degree of any constant-depth read-once formula is $\Omega(\sqrt{n})$. This lower bound is optimal, since an upper bound of $O(\sqrt{n})$ is known for the approximate degree of arbitrary read-once formulas (not just constant-depth) via the $O(\sqrt{n})$ upper bound on quantum query complexity \cite{Rei11} and it is an interesting open question if this upper bound is tight for arbitrary read-once formulas.

In \sec{QIC}, we show a lower bound on the quantum information complexity of the disjointness function. 
Quantum information complexity \cite{Touchette} is a information relaxation of quantum communication complexity, in the same sense that information complexity is a relaxation of communication complexity. 
Intuitively, instead of charging for the number of bits (or qubits) of communication if a protocol, information complexity only charges for the information transmitted by these bits (or qubits). 
We use our techniques to reprove the $\Omega(\sqrt{n})$ lower bound on the quantum information complexity of disjointness \cite{BGKMT15} up to log factors. This lower bound is already known, but the known proof uses an alternate characterization of quantum information complexity as amortized quantum communication complexity. 
In contrast, our proof is more direct and works with the information theoretic definition of quantum information complexity.

\subsection{High-level overview of techniques}
\label{sec:prooftechniques}
While we prove several different lower bounds against measures in query and communication complexity, our proofs share several common techniques. 
In particular, all our proofs use Belovs' algorithm for the combinatorial group testing problem~\cite{Bel15}, which we now describe.
Combinatorial group testing has a long history originating in the testing of World War II draftees for Syphilis \cite{du2000combinatorial}, where the goal was to minimize the number of tests used to screen recruits.
The basic idea was to pool multiple blood samples together before testing them; the blood test then reveals if anyone in the pool has the disease.
In other words the test reveals the $\OR$ of the draftee's disease statuses within the group.
One can easily see that if only one person has the disease, then one can use binary search to use only $\log n$ tests to identify which of $n$ people has the disease; similarly one can show that if $k$ people have the disease then $k\log n$ tests suffice.

More formally, in this problem there is a hidden string $x\in\{0,1\}^n$.
One is allowed to query any subset $S\subseteq [n]$, and querying a subset $S$ returns the $\OR$ of the bits of $x$ in the subset, i.e., $\bigvee_{i\in S} x_i$.
The goal is to use these subset queries to learn all of the bits of $x$.
Clearly this can be achieved with $n$ queries in almost any reasonable measure of query complexity, by querying each bit of the input separately, i.e., by querying the subsets $\{1\}, \ldots, \{n\}$.
And as previously mentioned, for sparse inputs one can use fewer than $n$ queries.
But for worst-case inputs this trivial $O(n)$ query algorithm
 is optimal for classical (deterministic or randomized) query complexity.
This is because if the string $x$ contains a single $0$, then this problem reduces to search.
 Therefore even a quantum algorithm for this query problem would require $\Omega(\sqrt{n})$ queries by the lower bound for Grover search \cite{BBBV97}.
Surprisingly, Belovs \cite{Bel15} showed that the quantum query complexity of this problem is at most $O(\sqrt{n})$ as well.
This algorithm will play a key role in our proofs.

\para{Approximate degree $\OR$-composition.} We first describe the ideas required to lower bound the approximate degree of functions of the form $\OR_n \circ f$, making note of the parts of the proof that fail in communication complexity.

Suppose by way of contradiction that $\adeg(\OR_n \circ f)=T$, where $T$ is smaller than expected. 
This means we can compute the $\OR$ of $n$ copies of a function $f$ more easily than expected.
But this also implies we can compute the $\OR$ of any subset $S\subseteq [n]$ of these $n$ copies of $f$, since we can apply this algorithm to any subset $S$ of our choice. (This argument already does not work in communication complexity when only one player knows the subset $S$, since that player would have to communicate $S$ to the other player.)

Now we view the $n$ outputs to the functions $f$ as the hidden string $x\in\B^n$ in the combinatorial group testing problem. 
In the combinatorial group testing problem, we assume we have the ability to query the $\OR$ of any subset of the bits, which is exactly what the assumed polynomial for $\OR_n\circ f$ gives us.
From Belovs' quantum algorithm, we can construct an approximating polynomial for combinatorial group testing using the results of Beals et al.~\cite{BBC+01}. 
More precisely, since combinatorial group testing has an $n$-bit output, which is the hidden string $x\in\B^n$, we use a decision version of this problem that simply outputs the parity of all the bits.
We would now like to compose this polynomial with the assumed polynomials that allow us to compute the $\OR$ of a subset of the functions $f$. 
However, since the polynomials we wish to compose are approximating polynomials, they do not straightforwardly compose as expected, and to make this work, we use Sherstov's robust polynomial construction~\cite{She13b}. Finally, by composing these polynomials of degree $T$ and degree $O(\sqrt{n})$, we get a polynomial of degree $O(T\sqrt{n})$ for computing the parity of all the functions $f$, i.e., we have shown that $\adeg(\XOR_n \circ f)=O(T\sqrt{n})$.

Computing the parity of $n$ copies of a function $f$ is usually $n$ times as hard as computing $f$ in most models of computation. 
Such a result is known for all the measures considered in this paper. 
The argument is now completed by combining the fact that $\adeg(\XOR_n \circ f) = \Omega(n \adeg(f))$~\cite{She12} and $\adeg(\XOR_n \circ f)=O(T\sqrt{n})$. Combining these gives us $T=\Omega(\sqrt{n}\adeg(f))$, as desired.

Our results in communication complexity and the extension to arbitrary symmetric functions build on the ideas presented here. The flowchart in \fig{Dependencies} describes the flow of ideas as well as the dependencies between various sections.

\begin{figure}[t]
\centering
\begin{tikzpicture}
[->,>=stealth',shorten >=1pt,auto,  thick,yscale=0.8,
main node/.style={circle,draw}, node distance = 0.8cm and 1.8cm,
block/.style   ={rectangle, draw, text width=15em, text centered, rounded corners, minimum height=2.5em, fill=white, align=center, font={\footnotesize}, inner sep=5pt}]
    \node[main node,block,line width=2pt] (OR) at (0,0) {Approximate degree OR composition (\sec{queryOR})};
    \node[main node,block] (PrOR) at (-4,-6) {Approximate degree PrOR composition (\sec{queryPrOR})};
    \node[main node,block] (DISJ) at (4,-2) {Approximate rank of Disjointness (\sec{commDISJ})};
   \node[main node,block] (UnbOR) at (-4,2) {Approximate degree unbalanced OR composition (\sec{unbalanced})};
    \node[main node,block] (OR-QIC) at (4,2) {Quantum information complexity OR composition (\sec{QIC})};
    \node[main node,block] (sym) at (-4,-8) {Approximate degree composition for symmetric functions (\sec{querysym})} ;
    \node[main node,block] (OR-CC) at (4,-4) {Approximate rank OR composition (\sec{commOR})} ;
    \node[main node,block] (PrOR-CC) at (4,-6) {Approximate rank PrOR composition (\sec{commPrOR})} ;
    \node[main node,block] (sym-CC) at (4,-8) {Approximate rank composition for symmetric functions (\sec{commsym})} ;

    \path [->] (OR) edge node {} (PrOR);
    \path [->](OR) edge node {} (DISJ);
    \path [->](OR) edge node {} (UnbOR);
    \path [->](OR) edge node {} (OR-QIC);
    \path [->](PrOR) edge node {} (sym);
    \path [->](DISJ) edge node {} (OR-CC);
    \path [->](OR-CC) edge node {} (PrOR-CC);
    \path [->](PrOR-CC) edge node {} (sym-CC);
    \path [->](PrOR) edge node {} (PrOR-CC);
    \path [->](sym) edge node {} (sym-CC);
\end{tikzpicture}
    \caption{Reading order for the results shown in this paper. An arrow from $A$ to $B$ indicates that $A$ is a prerequisite for reading $B$.}\label{fig:Dependencies}
\end{figure}
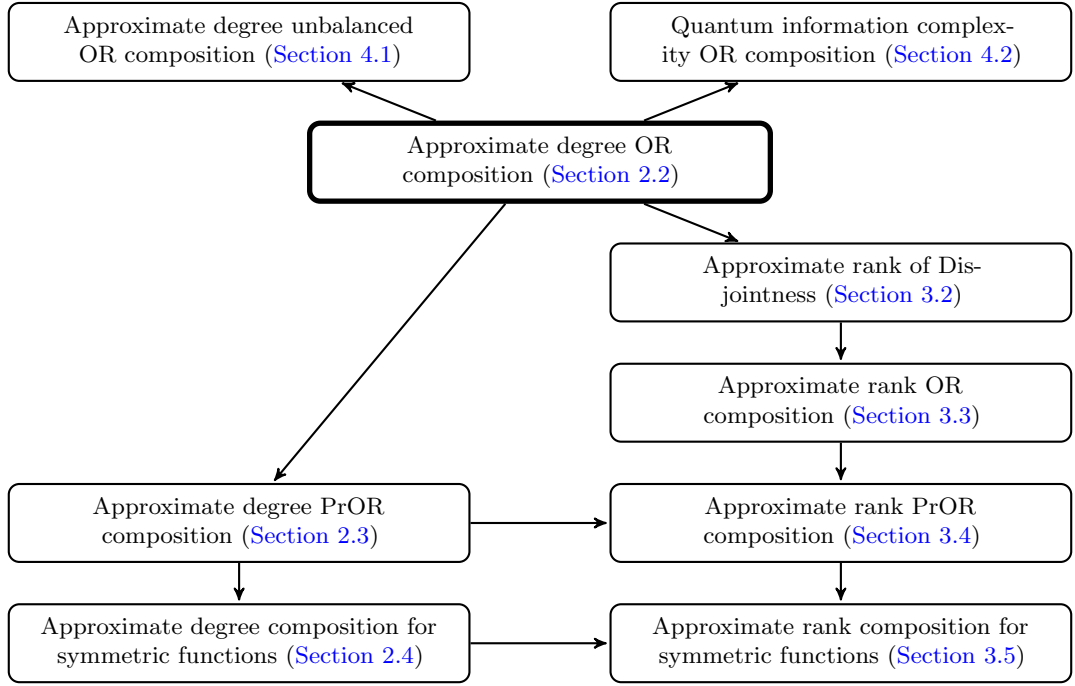

\para{$\OR$ composition in communication complexity.}
The  general strategy outlined above also works in communication complexity for the measures approximate rank and approximate gamma 2 norm, but we need to make additional arguments to make some steps work.

First, as noted above if one player knows a subset $S$ of the shared input, but the other does not, it is not in general possible for them to run a communication protocol on that subset of their shared input. Thus our communication results have some overhead for dealing with this situation. Naively it would seem this overhead is too expensive, since Alice would need to communicate the entire subset $S$ to Bob, which might be more  expensive than the rest of the protocol. However, a recursive argument based on self-reducibility of the $\OR$ function allows the conversion of the additive $O(n)$ loss into a multiplicative polylogarithmic loss. 

The other technically challenging part of porting this argument to communication complexity is in composing approximating polynomials with approximating matrices. This composition does not work as cleanly as in query complexity, and in some cases leads to an additional log factor loss.

\para{Approximate degree $\PrOR$-composition.} 
To lower bound the approximate degree of functions of the form $g \circ f$, where $g$ is a symmetric Boolean function, we first show an intermediate lower bound which will play a key role in our symmetric composition theorem. 
In particular we consider the Promise-OR function, denoted $\PrOR$.
The $\PrOR_n:\B^n \to \Ba$ function is the same as the $\OR$ function with the additional promise that the input has Hamming weight either 0 or 1. We first extend our lower bound on the approximate degree of $\OR_n \circ f$ to the partial function $\PrOR_n \circ f$.
(For partial functions, we require that an approximating polynomial be close to the function on inputs in the domain, and be bounded in $[0,1]$ on all inputs including those outside the domain.)

The main insight that allows us to extend our lower bounds from $\OR$ to $\PrOR$ is that Belovs' algorithm actually solves a more general problem than combinatorial group testing, or more precisely, assumes a weaker access model to the input. 
In particular Belovs' algorithm only requires that the queries that are supposed to return the $\OR$ of a subset $S$, i.e., the value $\bigvee_{i\in S} x_i$, return the correct answer when $\sum_{i\in S} x_i \in \B$. The queries may return incorrect answers on those subsets $S$ for which $\sum_{i\in S} x_i > 1$.

While this is the key conceptual step needed for the generalization, working with partial functions presents several technical challenges.
One of the main challenges corresponds to the robustness of polynomials, for which we used Sherstov's robust polynomial construction~\cite{She13b} previously. 

Recall that our proof strategy is to compose the polynomial induced by Belovs' algorithm with a too-good-to-be-true (approximating) polynomial for $\PrOR_n \circ f$. 
This requires Belovs' polynomial to be robust, i.e. handle noisy inputs, which it may not be. 
For $\OR$ composition, we applied Sherstov's construction to obtain a robust version of Belovs' polynomial, which tolerates $1/3$ noise in the input bits. 
However, Sherstov's robust polynomial construction has a downside---it constructs polynomials that are not multilinear and whose value may blow up when an input variable is not close to being Boolean. This is exactly what can happen when we plug in an approximating polynomial for a partial function. 

For this reason we use a different strategy for composing polynomials without using Shertov's technique.
In particular we prove that the polynomials constructed from quantum algorithms are already mildly robust, i.e., they can handle $1/\poly(q)$ noise in the inputs, where $q$ is the query complexity of the quantum algorithm. Since the polynomials induced by quantum algorithms are multilinear (and hence they are bounded whenever the inputs are in $[0,1]$), this allows us to extend our composition framework to the setting of partial functions (at the expense of losing a logarithmic factor).

We note we are not the first to prove intrinsic robustness of polynomials derived from quantum algorithms. For instance, Buhrman \emph{et al.} \cite{buhrman2007robust} show that the polynomials derived from a quantum algorithm computing a \emph{total} function $f$ can tolerate $O(1/C(f))$ noise in the inputs, where $C(f)$ is the certificate complexity of $f$. However this result is insufficient for our application as we are applying it to a partial function.
It is also not hard to show that all multilinear polynomials are $O(1/n)$ robust to noise, where $n$ is the number of input bits. 
However, this does not suffice for our application either because the number of input bits for $\CGT$ is exponentially larger than its quantum query complexity.
To the best of our knowledge this particular robustness property of polynomials derived from quantum algorithms was not known before and might be of independent interest.

\para{Approximate degree composition for symmetric functions.} 
We now describe how an approximate degree lower bound for $\PrOR_n \circ f$ can be used to lower bound the approximate degree of $g \circ f$, where $g$ is a symmetric function. By a result of Paturi~\cite{Pat92}, it is known that the approximate degree of a symmetric $n$-bit function is completely determined by the Hamming weight closest to $n/2$ where the function $g$ changes value. This implies that it suffices to prove the composition theorem for the case when the outer function $g$ is $\PrTH^k_n$ which is defined as follows:
\begin{equation*}
\PrTH^k_n(x) = \begin{cases} 0 &\mbox{if } |x|=k \\
1 & \mbox{if } |x|=k+1\\ 
* & \mbox{otherwise}
\end{cases}. 
\end{equation*}

Now the elementary, but crucial, observation is that $\PrTH_{2k}^k \circ \PrOR_{n/2k}$ is a sub-function of $\PrTH_n^k$. Hence we can obtain an approximate degree lower bound for $\PrTH^k_n \circ f$ using our composition theorem for $\PrOR$, and a prior composition theorem of Sherstov that works for $\PrTH_{2k}^k$. (Sherstov's result yields optimal composition theorems whenever the outer function has linear approximate degree \cite{She12}). This yields the composition theorem for arbitrary symmetric functions.

\subsection{Open problems}

We end with a discussion of main open problems left open by our work. The foremost open problem is whether the following conjecture is true.

\begin{conjecture}
For all Boolean functions $g:\B^n \to \B$ and $f:\B^m \to \B$, we have $\adeg(g \circ f) = \Omega(\adeg(g) \adeg(f))$.
\end{conjecture}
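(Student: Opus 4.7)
My plan is to follow the same ``ironic'' template that succeeded for symmetric outer functions, but replace the Paturi-based reduction with a more general structural step that applies to an arbitrary $g$. Assume for contradiction that $\adeg(g \circ f) \leq T$ with $T$ much smaller than $\adeg(g)\adeg(f)$, and fix an approximating polynomial $p$ of degree $T$ for $g \circ f$. The key observation is that by hard-coding any subcube of the ``outer'' $f$-blocks, $p$ yields a degree-$T$ approximator for $g_\rho \circ f$ for every sub-function $g_\rho$ of $g$. I would then like to exhibit a quantum algorithm that uses ``$g$ on a subcube'' queries to compute some target function $h_n$ whose $f$-composition admits a matching lower bound $\adeg(h_n \circ f) = \Omega(n \cdot \adeg(f))$, and whose query complexity relative to such oracles is of order $n/\adeg(g)$. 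Combining the polynomial induced by such an algorithm with $p$ via a robust-composition argument of the type developed in \sec{queryPrOR} would then force $T = \tOmega(\adeg(g)\adeg(f))$.

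The natural candidate for $h_n$ is $\Parity_n$, since $\adeg(\Parity_n \circ f) = \Omega(n \cdot \adeg(f))$ by \cite{She12}. The step I expect to be most painful is verifying the quantum-algorithmic requirement: for an arbitrary $g$, one needs a quantum algorithm that computes $\Parity_n$ (or some other sufficiently hard target) using roughly $n/\adeg(g)$ queries to ``$g$-on-a-subcube'' oracles, and whose induced polynomial can be robustly composed with the assumed low-degree approximator for $g \circ f$. For $g = \OR$, Belovs' combinatorial group testing algorithm plays precisely this role, and for symmetric $g$ the paper sidesteps the issue via Paturi plus the $\PrOR$ composition result. For a general $g$ no such algorithm is known, and it is not even clear that one exists with the optimal parameters---different outer functions $g$ may demand very different algorithmic ideas, and it is conceivable that the quantum query complexity of this ``$g$-subcube-to-$\Parity$'' task can exceed $n/\adeg(g)$, in which case this route cannot prove the conjecture as stated.

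A natural fallback is to attack the conjecture on the dual side. Let $\psi_g$ and $\psi_f$ be dual witnesses certifying $\adeg(g)$ and $\adeg(f)$. Sherstov's dual block composition produces a candidate witness of degree $\adeg(g) \cdot \adeg(f)$ whose $\ell_1$-norm and correlation with $g \circ f$ are controlled by those of $\psi_g$ and $\psi_f$; the only condition that is not automatic is \emph{pure high degree}, namely orthogonality to all monomials of degree below $\adeg(g)\adeg(f)$. Prior work has established this condition only under extra hypotheses on $g$ (such as symmetry, as used in \thm{approxdegsym}, or nearly-linear approximate degree in the sense of \cite{She12}). The main obstacle to the conjecture in full generality is to enforce pure high degree for the combined witness when $g$ is arbitrary---perhaps by hybridizing the dual construction with a Belovs-style algorithmic ingredient that injects the missing high-degree mass. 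Whether such a hybrid is even possible, and in which direction (algorithmic or dual) a general proof must ultimately go, is precisely why I regard the conjecture as a genuinely open problem rather than an exercise.
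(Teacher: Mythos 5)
You have correctly recognized that this statement is stated in the paper as a \emph{conjecture} and left open; the paper proves it only for symmetric $g$ (up to logarithmic factors, via \thm{approxdegsym}) and offers no proof for general $g$, so there is no argument of the paper's to compare yours against. Your diagnosis of the obstruction matches the paper's own discussion almost exactly: the algorithmic route requires, for each outer function $g$, a quantum algorithm that uses ``$g$-queries'' to a hidden string to compute $\XOR_n$ (or another target with a known tight $\XOR$-style composition lower bound) in roughly $n/\adeg(g)$ queries --- this is precisely the family of algorithms whose existence the paper raises as an open question in its concluding section, with Belovs' combinatorial group testing algorithm being the only known instance ($g=\OR$), and Paturi's theorem plus the $\PrOR$ machinery of \sec{queryPrOR} being the workaround that reduces symmetric $g$ to that one instance. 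Your dual-side fallback, and the identification of pure high degree as the condition that fails to compose for arbitrary $g$, likewise mirrors the paper's question of whether a dual witness for $\adeg(f)$ can be converted into one for $\adeg(\OR\circ f)$. One small caution: your intermediate claim that hard-coding subcubes of $g$ yields degree-$T$ approximators for $g_\rho\circ f$ is fine, but the subsequent robust-composition step would need the induced polynomial to be multilinear and mildly robust in the sense of \thm{multilinear_robust} (not Sherstov-robustified) if any partial-function promise is involved, exactly as in the $\PrOR$ case. In short, you have not proved the statement, but you were not expected to: it is a genuinely open problem, and your account of why is accurate.
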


Our result resolves this (up to log factors) when $g$ is symmetric and $f$ is arbitrary. A related question is whether any of our results can be reproved using the dual polynomials framework that has been used to show recent lower bounds for approximate degree~\cite{She13a,BT13,BT15}. In particular, is there a way to convert a dual witness for $\adeg(f)$ into a dual witness for $\adeg(\OR\circ f)$?

A more open ended question is whether this technique can be generalized to functions other than $\OR$ by developing new quantum algorithms. Belovs' algorithm used $\OR$-queries to a hidden string $x$ to learn all of $x$. What other quantum algorithms of this form exist? Are there nontrivial quantum algorithms that use $g$-queries to learn $x$ for some function $g \notin \{\OR,\XOR\}$? Are there nontrivial quantum algorithms that use $g$-queries to compute some other function $h(x)$ of the input? This motivates the study of a whole class of quantum algorithms, which to the best of our knowledge has not been systematically studied other than in the work of Belovs~\cite{Bel15}.

\section{Approximate degree}
\label{sec:query}

In this section we prove our composition theorems for approximate degree. We start with some definitions and known results in \sec{queryprelim}. In \sec{queryOR} we prove the $\OR$-composition theorem (\thm{approxdeg}), which is the starting point for the more general results proved in this paper. \sec{queryPrOR} generalizes the composition theorem to a partial function related to $\OR$, and then \sec{querysym} proves the final result with arbitrary symmetric functions.

\subsection{Preliminaries}
\label{sec:queryprelim}

In this section we collect some basic definition and known results about partial functions, approximate degree, and quantum query complexity.
Partial functions will play a key role in our proofs, even though the main results are about total Boolean functions. Hence it is necessary to formally define partial functions and extend the definitions of approximate degree and quantum query complexity to partial functions.

\para{Definitions.} A partial Boolean function on $m$ bits is a function that is only defined on a subset of $\B^m$. There are two common ways to talk about partial functions. We can either view it as a function from $D$ to $\B$, where $D\subseteq \B^m$, or as a function $f:\B^m \to \Ba$, where the function evaluates to $*$ outside $D$. We will mostly use the second definition and refer to the subset of $x\in\B^m$ with $f(x)\neq *$ as the ``promise'' and denote it $\Dom(f)$. We can now define the composition of two partial functions more formally.

\begin{definition}
Let $g:\B^n \to \Ba$ and $f:\B^m \to \Ba$ be partial Boolean functions. Then we define $g \circ f:\B^{nm}:\Ba$ to be the partial function $g \circ f (x_1,\ldots, x_n) = g(f(x_1),\ldots,f(x_n))$ on those inputs for which all $ x_i \in \Dom(f)$ and 
$(f(x_1),\ldots,f(x_n))\in\Dom(g)$. The function evaluates to $*$ on all other inputs.
\end{definition}

Most algorithmic models are easily generalized to partial functions. A (classical or quantum) algorithm for a partial function $f$ is only required to be correct on inputs in $\Dom(f)$ and can have arbitrary behavior on inputs outside $\Dom(f)$. 
Extending the definition of approximate degree to partial functions is more subtle, and we motivate it by using an example of a partial function.

Recall that the $\OR$ function on $n$ bits is defined as $\OR_n(x)=0$ if $|x|=0$ and $\OR_n(x) = 1$ if $|x|>0$, where $|x|$ denotes the Hamming weight of $x$ or the number of $1$s in $x$. 
Let us define a partial function related to $\OR$, which we call PromiseOR, as follows. $\PrOR_n:\B^n\to\Ba$ is the $\OR$ function with the additional promise that the input has Hamming weight $0$ or $1$. In other words,
\begin{equation}
\PrOR_n(x) = \begin{cases} 0 &\mbox{if } |x|=0 \\
1 & \mbox{if } |x|=1\\ 
* & \mbox{otherwise}
\end{cases}. 
\end{equation}
Intuitively $\PrOR$ contains the hardest instances of the $\OR$ function, and hence lower bounds for the $\OR$ function should hold against the $\PrOR$ function as well. For example, the quantum query complexity of $\PrOR$ is still $\Theta(\sqrt{n})$, and the deterministic and randomized query complexities of $\PrOR$ are $\Theta(n)$.

The approximate degree of $\PrOR$ is also $\Theta(\sqrt{n})$ as one might expect, as long as we define approximate degree for partial functions appropriately. For a partial function we clearly want the polynomial to approximate the function value on inputs in the promise. But we additionally want the polynomial to be \emph{bounded}.  We say a polynomial $p$ on $m$ variables is bounded if for all $x\in\B^m$, $p(x)\in[0,1]$.

We use the following standard generalization of approximate degree to partial functions that is sometimes called ``bounded approximate degree'' in the literature~\cite{BKT18}.

\begin{definition}[Bounded approximate degree]
For any partial Boolean function $f:\B^m \to \Ba$, the bounded approximate degree of $f$, denoted $\bdeg(f)$, is the minimum degree of any real polynomial $p$ over the variables $x_1,\ldots,x_m$, such that \begin{itemize}
    \item ($p$ is bounded) for all $x\in\B^m$, $p(x) \in [0,1]$, and
    \item ($p$ approximates $f$) for all $x\in\Dom(f)$, $|f(x)-p(x)|\leq 1/3$.
\end{itemize}
\end{definition}

With this generalization of approximate degree, it is indeed true that $\bdeg(\PrOR) = \Theta(\sqrt{n})$, as expected. Note that if we did not require that the polynomial be bounded on all inputs in the domain, then there would be a degree-$1$ polynomial that exactly represents the $\PrOR$ function, which is the polynomial $\sum_{i=1}^m x_i$. 

Finally, we define what it means for a polynomial approximating a Boolean function to be $\delta$-robust to input noise. Informally it means the polynomial continues to approximate the Boolean function even if the input bits are $\delta$-far from being Boolean.

\begin{definition}[$\delta$-robustness to input noise]
\label{def:robust}
Let $h:\B^n\to\Ba$ be a partial Boolean function, and let $p:\B^n \to \R$ be a polynomial. We say that $p$
approximately computes $h$ with robustness $\delta\in[0,1/2)$ if for any $x\in\Dom(h)$ and any $\Delta \in [-\delta,\delta]^n$, we have $|h(x)-p(x+\Delta)|\le 1/3$.
\end{definition}

\para{Known results.} We now collect some facts about bounded polynomials and bounded approximate degree that we need to prove our results.

The first result we use is Sherstov's result on making polynomials robust to noise~\cite[Theorem 1.1]{She13b}. This result states that any polynomial $p$ can be made $1/3$-robust to input noise by only increasing the degree of the polynomial by a constant factor.

\begin{theorem}[Sherstov]
\label{thm:Sherstovrobust}
Let $q:\B^n\to[0,1]$ be a given polynomial. Then there exists a polynomial $q':\mathbb{R}^n\to\mathbb{R}$ of degree $O(\deg(q)+\log(1/\epsilon))$ such that
\begin{equation}
    |q(x)-q'(x+\Delta)| < \epsilon,
\end{equation}
for all $x\in \B^n$ and $\Delta \in [-1/3,1/3]^n$.
\end{theorem}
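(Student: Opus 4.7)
The plan is to build the robust polynomial $q'$ by pre-processing each input coordinate with a univariate noise-cleaning polynomial before feeding the result into $q$. First, I would construct a Chebyshev-based univariate rounding polynomial $r : \R \to \R$ of degree $O(\log(1/\delta))$ satisfying $|r(t) - b| \le \delta$ for all $t \in [b - 1/3, b + 1/3]$ and $b \in \{0,1\}$, while remaining bounded on the full interval $[-1/3, 4/3]$. Such polynomials are standard and follow from Chebyshev-polynomial approximations of the step function, whose near-optimal degree is $O(\log(1/\delta))$.

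Next, I would define $q'(y) = q(r(y_1), \dots, r(y_n))$ and analyze the error $|q(x) - q'(x + \Delta)|$ for $x \in \B^n$ and $\Delta \in [-1/3, 1/3]^n$. Since each $r(x_i + \Delta_i)$ lies within $\delta$ of the Boolean value $x_i$, one can work with the multilinear form of $q$ (obtained by reducing $x_i^2 = x_i$), which by the standard probabilistic interpretation of multilinear polynomials extends to a polynomial bounded on $[0,1]^n$. Expanding $q(r(x+\Delta))$ around $q(x)$ and controlling the propagated error via the $\ell_1$-norm of the Fourier coefficients of $q$ (which can be bounded in terms of $\deg(q)$ and $n$), a choice of $\delta$ that is polynomially small in these parameters and in $\epsilon$ would already give the required robustness property.

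The main obstacle is that this naive composition yields only the multiplicative bound $\deg(q') = O(\deg(q) \cdot \log(1/\epsilon))$, whereas the theorem asserts the additive bound $O(\deg(q) + \log(1/\epsilon))$. To close this gap I would split the construction into two phases: a \emph{structural} phase that produces a polynomial of degree $O(\deg(q))$ which is already $\Omega(1)$-robust, and an \emph{amplification} phase of degree $O(\log(1/\epsilon))$ that sharpens the approximation error from a constant down to $\epsilon$. The technically hardest step is achieving constant robustness at only $O(\deg(q))$ degree; I expect this requires exploiting Markov-type inequalities for polynomials bounded on intervals, together with a careful analysis of how the univariate Chebyshev amplifier composes with the original polynomial so that only one of the two factors contributes to each additive term in the final degree bound. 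This is the part of the argument where I would expect the most delicate approximation-theoretic work, in the spirit of Sherstov's original construction.
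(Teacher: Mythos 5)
This statement is imported verbatim from Sherstov \cite{She13b} and the paper offers no proof of it, so there is no in-paper argument to compare against; evaluating your proposal on its own terms, it has a genuine gap. The first half---preprocess each coordinate with a univariate Chebyshev-type rounding polynomial $r$ of degree $O(\log(1/\delta))$, feed the results into the multilinearization of $q$, and control the propagated error using boundedness of multilinear polynomials on $[0,1]^n$---is correct and standard, and, as you yourself note, it only yields degree $O(\deg(q)\cdot\log(n\deg(q)/\epsilon))$, a multiplicative rather than additive loss. The second half, which is where the entire content of the theorem lives, is not a proof: achieving constant robustness and constant error with degree only $O(\deg(q))$ \emph{is} the theorem (up to the $\epsilon$-dependence), and declaring that it ``requires exploiting Markov-type inequalities'' and ``delicate approximation-theoretic work in the spirit of Sherstov's original construction'' defers exactly the step that needs to be supplied.

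Moreover, the proposed amplification phase is conceptually broken. The quantity being approximated, $q(x)$, is an arbitrary real number in $[0,1]$, not a Boolean value, so no univariate post-composition can sharpen a constant-accuracy estimate of $q(x)$ to accuracy $\epsilon$: given only a value known to be within $1/10$ of the truth, a fixed univariate map cannot tell whether the truth is $0.5$ or $0.6$. (Amplification polynomials of degree $O(\log(1/\epsilon))$ only apply when the target is promised to lie in $\{0,1\}$, and even then composing a degree-$D$ polynomial with a degree-$k$ amplifier gives degree $Dk$, again multiplicative.) In Sherstov's actual argument the $\log(1/\epsilon)$ enters additively through a recursive error-correction scheme internal to the construction, not through output amplification, and nothing in your outline reconstructs that mechanism. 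As written, the proposal establishes only the weaker, multiplicative degree bound.
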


We will also need a result of Sherstov that establishes the hardness of computing the parity of $n$ copies of a function $f$, or more generally of $n$ different functions $f_1,f_2,\ldots,f_n$. We denote the parity of these $n$ functions  $\XOR_n \circ\,(f_1,f_2,\ldots,f_n)$. Sherstov shows that the approximate degree of the parity of $n$ functions is at least the sum of their approximate degrees~\cite[Theorem 5.9]{She12}.

\begin{theorem}[Sherstov]
\label{thm:Sherstovparity}
For any partial Boolean functions $f_1,f_2,\ldots,f_n$, we have 
\begin{equation}
\bdeg\Bigl(\XOR_n \circ\,(f_1,f_2,\ldots,f_n)\Bigr) = \Omega\Bigl(\sum_i \bdeg(f_i)\Bigr).
\end{equation}
In particular, for any partial Boolean function $f$, we have $\bdeg(\XOR_n \circ f) = \Omega(n \bdeg(f))$.
\end{theorem}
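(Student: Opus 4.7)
The plan is to use the method of dual polynomials. By LP duality, $\bdeg(h) \ge d$ for a partial function $h$ is equivalent to the existence of a dual witness $\phi$ that is orthogonal to all polynomials of degree less than $d$, has $\ell_1$ norm bounded by $1$, correlates well with $h$ on $\Dom(h)$, and places little $\ell_1$ mass outside $\Dom(h)$. The goal is to construct such a witness for $\XOR_n \circ (f_1, \ldots, f_n)$ out of individual dual witnesses for the $f_i$'s.

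The structural fact that this theorem exploits is that $\XOR_n$ has an especially clean dual witness: in $\pm 1$ encoding $\XOR_n(y) = \prod_i y_i$ is exactly a degree-$n$ monomial, so $\psi(y) := 2^{-n} \prod_i y_i$ serves as a dual witness for $\XOR_n$ with pure high degree $n$, unit $\ell_1$ norm, and correlation $1$ with $\XOR_n$. I would use this $\psi$ along with Sherstov's dual block method (or a closely related direct tensor construction) to combine dual witnesses $\phi_i : \B^m \to \R$ certifying $\bdeg(f_i) \ge d_i$ into a dual witness $\Phi$ for $\XOR_n \circ (f_1, \ldots, f_n)$. Pure high degree $\sum_i d_i$ comes essentially for free: any monomial of total degree less than $\sum_i d_i$ must restrict to a monomial of degree less than $d_j$ in some block $j$, whose contribution is killed by orthogonality of $\phi_j$.

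The main obstacle is controlling the correlation of $\Phi$ with $\XOR_n \circ (f_1, \ldots, f_n)$. A naive tensor product $\Phi = \prod_i \phi_i$ has correlation $\prod_i \langle \phi_i, F_i \rangle$ (with $F_i = 1 - 2 f_i$), which decays exponentially when each $\phi_i$ has only constant correlation with $F_i$. The role of the perfect outer dual witness $\psi$ is precisely to compensate for this decay: because $\psi$ has correlation exactly $1$ with $\XOR_n$, the composition preserves $\Omega(1)$ correlation with the composed function through a calculation that uses the identity $\XOR_n \circ (f_1, \ldots, f_n) = (1 - \prod_i(1-2f_i))/2$ together with the multiplicative factorization of the block structure. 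The off-promise $\ell_1$ mass of $\Phi$ can be bounded by a union bound over blocks using that each $\phi_i$ is a valid partial-function dual witness with small off-promise mass. These verifications together certify $\bdeg(\XOR_n \circ (f_1, \ldots, f_n)) = \Omega(\sum_i d_i)$; the ``in particular'' statement then follows from the balanced case $f_1 = \cdots = f_n = f$.
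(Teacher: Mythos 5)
First, a point of comparison: the paper does not prove this statement at all --- it is imported verbatim from Sherstov \cite[Theorem 5.9]{She12} and used as a black box. So your proposal is really an attempt to reprove Sherstov's XOR lemma from scratch, and it has a genuine gap at precisely the step you yourself flag as ``the main obstacle.''

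The gap is the correlation bound. You correctly observe that the plain tensor product $\Phi=\prod_i\phi_i$ has correlation $\prod_i\langle\phi_i,F_i\rangle$ with $\prod_i F_i$, which is exponentially small when each inner witness has only constant correlation. Your proposed repair --- run the dual block method with the ``perfect'' outer witness $\psi(y)=2^{-n}\prod_i y_i$ --- does not repair anything, because for this particular $\psi$ the dual block composition \emph{is} the tensor product: writing $\mu_i=|\phi_i|$ and letting $\mu_i+z_i\phi_i$ (for $z_i\in\{-1,1\}$) denote twice the positive/negative parts of $\phi_i$, one computes $\sum_{z}\psi(z)\prod_i\bigl(\mu_i(x_i)+z_i\phi_i(x_i)\bigr)=2^{-n}\prod_i\sum_{z_i}z_i\bigl(\mu_i(x_i)+z_i\phi_i(x_i)\bigr)=\prod_i\phi_i(x_i)$. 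So the ``fix'' produces the identical object with the identical exponentially small correlation, and the claim that ``the composition preserves $\Omega(1)$ correlation'' is unsupported and, for this construction, false. What the tensor product actually certifies is that any polynomial of degree below $\sum_i d_i$ incurs error at least $\prod_i \epsilon_i \approx 3^{-n}$ on the composed function; since $\deg_\epsilon$ is non-increasing in $\epsilon$, a hardness statement at error threshold $3^{-n}$ does not imply the bounded-error statement. Closing the gap from correlation $3^{-\Theta(n)}$ to correlation $\Omega(1)$ is the entire content of Sherstov's theorem and needs a genuinely different idea (note that the obvious workaround of first amplifying each $\phi_i$ to error $O(1/n)$ costs a $\Theta(\log n)$ factor in degree, yielding only $\Omega\bigl(\sum_i\bdeg(f_i)/\log n\bigr)$, which is weaker than the stated bound). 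The remaining components of your sketch --- the pure high degree $\sum_i d_i$ of the product witness and the union bound on off-promise mass --- are fine.
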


Finally, we also need the following result of Sherstov~\cite[Theorem 6.6]{She12} that proves a composition theorem for bounded approximate degree when the outer function has high degree. 

\begin{theorem}[Sherstov]\label{thm:Sherstov_compose}
Let $g:\B^n\to\Ba$ and $f:\B^m\to\Ba$ be partial Boolean functions. 
Then $\bdeg\left(g\circ f\right)=\Omega\left(\bdeg(g)^2\bdeg(f)/n\right)$.
\end{theorem}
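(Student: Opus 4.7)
My plan is to attack this via the LP-dual characterization of bounded approximate degree combined with a Sherstov-style dual block composition, together with a hardness-amplification step on the outer witness. First, I would invoke LP duality for $\bdeg$ to extract dual witnesses: a function $\phi_g : \B^n \to \R$ certifying $\bdeg(g) \geq d_g := \bdeg(g)$ (so $\phi_g$ is orthogonal to every polynomial of degree $< d_g$, satisfies $\|\phi_g\|_1 = 1$, and obeys a correlation-vs-off-domain inequality reflecting both the $[0,1]$-boundedness of approximating polynomials and the partial nature of $g$), and analogously $\phi_f : \B^m \to \R$ for $f$ at degree $d_f := \bdeg(f)$.

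Next, I would form the block-composed dual object
\begin{equation*}
\Phi(x_1,\ldots,x_n) \;=\; 2^n\, \phi_g\!\bigl(\operatorname{sgn}(\phi_f(x_1)),\ldots,\operatorname{sgn}(\phi_f(x_n))\bigr)\prod_{i=1}^n |\phi_f(x_i)|,
\end{equation*}
and check by a routine expansion that $\Phi$ is orthogonal to every polynomial on $\B^{nm}$ of total degree strictly less than $d_g d_f$: each factor $|\phi_f(x_i)|$ ``absorbs'' $d_f$ degrees on its block by the orthogonality of $\phi_f$, and the outer $\phi_g$ then annihilates the remaining outer coefficients of degree $<d_g$. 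At the same time $\|\Phi\|_1 = 1$.

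The delicate step is turning $\Phi$ into a genuine dual witness for $\bdeg(g\circ f)$: one needs the correlation of $\Phi$ with $g\circ f$ on $\Dom(g\circ f)$ to exceed the $L_1$ mass that $\Phi$ places outside $\Dom(g \circ f)$ by a definite margin. For total $g$ this is automatic and gives the clean bound $\Omega(d_g d_f)$; for partial $g$, or when $d_g$ is sub-linear in $n$, the off-domain mass can swamp the correlation and the naive composition no longer witnesses the full $d_g d_f$ lower bound. To repair this, one first amplifies the outer dual witness by composing $\phi_g$ with an explicit univariate amplifier of Chebyshev type of degree $O(n/d_g)$; this boosts the correlation-to-off-domain-mass ratio to $\Omega(1)$ at the cost of dropping the usable pure degree of $\phi_g$ by a factor of $n/d_g$. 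The resulting effective outer degree is $d_g \cdot (d_g/n) = d_g^2/n$, which combined with the block composition above yields the stated $\Omega(d_g^2 d_f/n)$ bound.

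The main obstacle I anticipate is the amplification of the outer dual witness in the \emph{partial}-function setting: one must construct an amplifier that boosts correlation while respecting both the $[0,1]$-boundedness constraint on the primal side and the restriction of the correlation to $\Dom(g)$, and then verify that the composition with $\phi_f$ still preserves orthogonality up to degree $\Omega(d_g^2 d_f/n)$ together with the required margin. A secondary but still nontrivial technicality is the product-measure bound on the $L_1$ mass of $\Phi$ outside $\Dom(g\circ f)$, which has to be carried out carefully to avoid losing extra polynomial-in-$n$ factors that would erode the final bound below $d_g^2 d_f/n$.
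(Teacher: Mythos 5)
First, a point of reference: the paper does not prove this statement at all --- it is imported as a black box from Sherstov~\cite{She12} (Theorem 6.6 there), so there is no in-paper proof to compare against, and your proposal must be judged against Sherstov's actual argument. Your general framework is the right one: LP duality for $\bdeg$, the dual block composition $\Phi = 2^n\,\phi_g\bigl(\mathrm{sgn}\,\phi_f(x_1),\ldots,\mathrm{sgn}\,\phi_f(x_n)\bigr)\prod_i|\phi_f(x_i)|$, and the verifications that $\Phi$ has pure high degree at least $d_g d_f$ and $\|\Phi\|_1=1$ are standard and correct.

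The gap is in your third step, and it is not a technicality. You assert that for total $g$ the required correlation bound ``is automatic and gives the clean bound $\Omega(d_g d_f)$.'' This is false: were it true, it would resolve Conjecture~1 of this very paper, which is stated as the foremost open problem. The difficulty is that a bounded-error dual witness $\phi_f$ sign-agrees with $f$ on only part of the mass of $|\phi_f|$, so under the product distribution $2^n\prod_i|\phi_f(x_i)|$ each outer coordinate fed into $\phi_g$ is ``flipped'' with constant probability; consequently $\langle\Phi,\, g\circ f\rangle$ can collapse even when $g$ and $f$ are total and there is no off-domain mass at all. You have therefore misdiagnosed the source of the $d_g^2/n$ loss as an artifact of partiality, when in fact it is present for total functions and controlling the correlation degradation is the entire content of the theorem. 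Your proposed repair --- ``composing $\phi_g$ with a univariate Chebyshev amplifier of degree $O(n/d_g)$'' --- is not a well-defined operation: univariate amplification is a primal-side manipulation of approximating polynomials, not of dual witnesses, and the accounting ``effective outer degree $d_g\cdot(d_g/n)$'' is asserted without any mechanism. The ingredient you are missing is a hardness-amplification statement on the dual side, roughly that $\bdeg(g)=d_g$ on $n$ bits yields a dual witness of pure high degree $\Omega(d_g^2/n)$ whose correlation with $g$ is exponentially close to $1$ in $n$ (equivalently, a lower bound on the approximate degree of $g$ at error $1-2^{-\Theta(n)}$); an outer witness with correlation that close to $1$ does survive block composition with a constant-correlation inner witness, because the aggregate effect of the coordinate flips is then dominated. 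Without that (or an equivalent smoothness argument on $\phi_g$), your outline does not reach the stated bound.
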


We now formally state the connection between quantum algorithms and approximating polynomials. Beals et al.~\cite{BBC+01} showed that the acceptance probability of a quantum query algorithm that makes few queries can be expressed as a low degree polynomial.

\begin{theorem}[Beals et al.]
\label{thm:Beals}
Let $A$ be a quantum query quantum algorithm that makes $T$ queries to an oracle string $x\in\B^n$ and outputs $1$ with probability $A(x)$. Then there exists a real polynomial $p$ of degree $2T$ over the variables $x_1,\ldots,x_n$ such that for all $x\in\B^n$, $p(x)=A(x)$.
\end{theorem}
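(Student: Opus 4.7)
The plan is to express the final quantum state of $A$ on oracle $x$ as a vector whose amplitudes are polynomials in $x_1,\ldots,x_n$ of degree at most $T$, and then take the acceptance probability, squaring the relevant amplitudes to reach degree $2T$. I would model the algorithm in the standard way: a fixed initial state $|\psi_0\>$ is acted on by an alternating product $U_T O_x U_{T-1} O_x \cdots O_x U_0$, where each $U_t$ is a unitary independent of $x$ and $O_x$ is the standard query oracle, acting on a basis state $|i\>|b\>|w\>$ (query register, answer bit, workspace) as $|i\>|b\oplus x_i\>|w\>$.

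The key linearization is the identity $|b\oplus x_i\> = (1-x_i)|b\> + x_i|1-b\>$, which holds because $x_i\in\B$. Thus one application of $O_x$ takes each amplitude of the current state to an affine combination of amplitudes of the previous state with coefficients in $\{1-x_i, x_i\}$. I would then prove by induction on $t\in\{0,\ldots,T\}$ that after applying $U_t O_x U_{t-1} O_x \cdots O_x U_0$ to $|\psi_0\>$, every amplitude is a complex polynomial of degree at most $t$ in $x$: the base case is trivial, the $U_t$ step takes fixed complex linear combinations of amplitudes and does not change the degree bound, and each $O_x$ step raises the degree by at most $1$ via the linearization above.

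Writing each final amplitude as $\alpha_z(x) = a_z(x) + i\,b_z(x)$ with real polynomials $a_z,b_z$ of degree at most $T$, the acceptance probability $A(x)$ equals $\sum_{z\in S} \left(a_z(x)^2 + b_z(x)^2\right)$ for the set $S$ of accepting basis states; this is a real polynomial of degree at most $2T$ that agrees with $A(x)$ on $\B^n$, as required. No step is genuinely difficult here: the only nontrivial move is the linearization of the oracle, and everything else is induction and bookkeeping. (One may further multilinearize the polynomial using $x_i^2 = x_i$ on $\B^n$ without increasing the degree, though the theorem statement does not demand this.)
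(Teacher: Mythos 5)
Your proposal is correct and is essentially the standard proof of Beals et al.\ \cite{BBC+01}, which the paper cites without reproving: linearize the oracle via $|b\oplus x_i\rangle=(1-x_i)|b\rangle+x_i|1-b\rangle$, induct to show each amplitude after $t$ queries is a (complex) polynomial of degree at most $t$, and square real and imaginary parts to obtain a real polynomial of degree at most $2T$ for the acceptance probability. No gaps.
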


By choosing $A$ to be a quantum algorithm that computes a partial function $f$ to bounded error, we get the following corollary.

\begin{corollary}
For any partial Boolean function $f$, $\Q(f)\geq \frac{1}{2} \bdeg(f)$.
\end{corollary}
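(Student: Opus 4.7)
The plan is to derive this immediately from \thm{Beals} by taking $A$ to be an optimal bounded-error quantum algorithm for $f$. Let $T = \Q(f)$ and let $A$ be a quantum query algorithm making $T$ queries that computes $f$ with bounded error, meaning $|A(x) - f(x)| \leq 1/3$ for every $x \in \Dom(f)$. Applying \thm{Beals} yields a real polynomial $p$ of degree $2T$ over $x_1,\ldots,x_m$ such that $p(x) = A(x)$ for all $x \in \B^m$.

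Next I would verify that $p$ satisfies both conditions in the definition of $\bdeg(f)$. For boundedness, observe that $A(x)$ is the acceptance probability of a quantum algorithm on the input $x$, and this is well-defined and lies in $[0,1]$ for \emph{every} $x \in \B^m$, including inputs outside $\Dom(f)$ where the algorithm's output is unconstrained. Hence $p(x) = A(x) \in [0,1]$ on the whole Boolean cube, matching the bounded requirement exactly. For the approximation property, the bounded-error guarantee on $A$ gives $|p(x) - f(x)| = |A(x) - f(x)| \leq 1/3$ for all $x \in \Dom(f)$, so $p$ approximates $f$ on its domain.

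These two verifications together witness that $\bdeg(f) \leq \deg(p) = 2T = 2\Q(f)$, which rearranges to the claimed inequality. There is no real obstacle here; the one point worth highlighting is that the boundedness condition of $\bdeg$ (as opposed to just approximation on $\Dom(f)$) is precisely what forces us to use \thm{Beals}'s full conclusion that the polynomial equals the acceptance probability on \emph{all} Boolean inputs, not just on the promise. This is exactly why the quantum-to-polynomial conversion yields the stronger bounded approximate degree for free, rather than just an unbounded approximating polynomial on $\Dom(f)$.
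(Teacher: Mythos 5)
Your proposal is correct and is exactly the argument the paper intends: the corollary is stated as an immediate consequence of \thm{Beals} applied to an optimal bounded-error algorithm for $f$, with boundedness on all of $\B^m$ coming from the fact that the polynomial equals an acceptance probability everywhere. Nothing further is needed.
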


\subsection{\texorpdfstring{$\OR$}{OR} composition}
\label{sec:queryOR}

In this section we prove our first main result, \thm{approxdeg}.
We start by formally defining the combinatorial group testing problem, whose quantum query complexity was first studied by Ambainis and Montanaro~\cite{AM14}.

\para{Combinatorial group testing problem.} Let $\CGT_{2^n}$ be the following problem.
There is a hidden $n$-bit string $x$, which we have to determine using $\OR$-queries to $x$. In an $\OR$-query, we query the oracle with a subset $S\subseteq[n]$ and the oracle outputs $1$ if there exists an $i\in S$ such that $x_i = 1$. In other words, the oracle's output is the function $\bigvee_{i\in S} x_i$.
Formally, combinatorial group testing is a partial function
\begin{equation}
    \CGT_{2^n}:\B^{2^n}\to\B^n \cup \{*\},
\end{equation} 
where the input is a $2^n$-bit string corresponding to the $\OR$s of all possible subsets of $x$, and the promise is that all bits are indeed the $\OR$ of some string $x \in \B^n$. When the promise is satisfied the desired output is the hidden string $x$. In other words, $y\in\B^{2^n}$ is in $\Dom(\CGT_{2^n})$ if there exists an $x\in\B^n$ such that for all $S\subseteq\B^n$, $y_S = \bigvee_{i\in S} x_i$. For such a $y$, $\CGT_{2^n}(y)=x$. Note that for any $y\in\Dom(\CGT_{2^n})$, the string $x$ is uniquely defined by $x_i = y_{\{i\}}$.

\setlength{\intextsep}{0pt}%
\setlength{\columnsep}{10pt}%
\begin{wrapfigure}{r}{0.3\textwidth}
\vspace{0.5em}
\centering
\begin{tikzpicture}[scale=1]
\node[draw,rounded corners,inner sep=4pt,minimum width=3cm](CGT) at (2,2) {$\CGT_{2^n}$};
\node (y1) at (0,0.25) {$y_{\emptyset}$};
\node (y2) at (1,0.25) {$y_{\{1\}}$};
\node (y3) at (2,0.25) {$y_{\{2\}}$};
\node (y4) at (3,0.25) {$\cdots$};
\node (y5) at (4,0.25) {$y_{[n]}$};
\node (x1) at (1,3) {$x_1$};
\node (x2) at (2,3) {$\cdots$};
\node (x3) at (3,3) {$x_n$};
\draw (y1) -- (CGT);
\draw (y2) -- (CGT);
\draw (y3) -- (CGT);
\draw (y4) -- (CGT);
\draw (y5) -- (CGT);
\draw ([xshift=-1cm]CGT.north) -- ([xshift=-1cm,yshift=0.5cm]CGT.north);
\draw ([xshift=-0.5cm]CGT.north) -- ([xshift=-0.5cm,yshift=0.5cm]CGT.north);
\draw ([xshift=0cm]CGT.north) -- ([xshift=0cm,yshift=0.5cm]CGT.north);
\draw ([xshift=0.5cm]CGT.north) -- ([xshift=0.5cm,yshift=0.5cm]CGT.north);
\draw ([xshift=1cm]CGT.north) -- ([xshift=1cm,yshift=0.5cm]CGT.north);
\end{tikzpicture}
\vspace{-0.5em}
\end{wrapfigure}

Note that although the problem has an input size of $2^n$ bits, the problem is easily solved with $n$ queries as we can simply query all the singleton subsets $y_{\{i\}}$ for $i\in[n]$ to learn all the bits of $x$. Surprisingly, Belovs showed that the quantum query complexity of this problem is quadratically better than this~\cite[Theorem 3.1]{Bel15}.

\begin{theorem}[Belovs]
\label{thm:BelovsCGT}
The bounded-error quantum query complexity of $\CGT_{2^n}$ is $\Theta(\sqrt{n})$. 
\end{theorem}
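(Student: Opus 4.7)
The plan is to prove matching $\Omega(\sqrt{n})$ and $O(\sqrt{n})$ bounds on $\Q(\CGT_{2^n})$; the upper bound, due to Belovs, is by far the substantial direction.

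For the lower bound I would reduce from the Grover search problem. Restrict attention to inputs corresponding to hidden strings of Hamming weight at most one, i.e.\ $x \in \{0^n, e_1, \dots, e_n\}$. On this subclass an $\OR$-query on a subset $S \subseteq [n]$ simply reports whether the (unique, if any) marked index lies in $S$, so any algorithm for $\CGT_{2^n}$ must identify that marked index (or certify its absence). Applying the general adversary bound to this restricted problem, with weights concentrated on the pairs $(0^n, e_i)$, yields a spectral ratio of $\Omega(\sqrt{n})$ by a short direct calculation: the adversary matrix is a rank-two block of norm $\sqrt{n}$, while for each $S$ the Hadamard product with the disagreement matrix has norm $O(1)$, so the ratio is $\Omega(\sqrt{n})$. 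This gives $\Q(\CGT_{2^n}) = \Omega(\sqrt{n})$.

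For the upper bound the strategy is to construct a feasible witness for the dual of the general (positive-weight) adversary bound, then invoke the tight characterization $\Q(f) = \Theta(\Adv^\pm(f))$ together with the constructive dual-adversary-to-algorithm transformation of Reichardt. Concretely, the witness lives on a learning graph whose vertices correspond to partial-knowledge states about the hidden $x \in \{0,1\}^n$ and whose edges correspond to making one further $\OR$-query; edge weights are chosen to reflect how informative each query is about the remaining unknown bits. Belovs shows that with an appropriate weight schedule, the learning graph's cost evaluates to $O(\sqrt{n})$ uniformly over all $x \in \{0,1\}^n$, which translates into a bounded-error quantum algorithm making $O(\sqrt{n})$ queries and outputting the entire hidden string.

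The main obstacle is the choice of edge weights. A uniform weighting is biased toward low-Hamming-weight inputs and yields an $\Omega(n)$ cost for dense $x$; Belovs' insight is to use weights tailored to the self-similar, subset-closed structure of $\OR$-queries, which produces the right cancellations to keep the dual-adversary cost at $O(\sqrt{n})$ across inputs of every density. Verifying that these weights simultaneously satisfy the feasibility constraints of the dual adversary LP and bound both sides of its objective to $O(\sqrt{n})$ is a careful amortization argument and forms the heart of Belovs' proof. Once the witness is in place, the generic transformation to a span program and then to a quantum algorithm, together with standard success-probability amplification and decoding of the $n$-bit output, completes the construction.
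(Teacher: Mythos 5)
Your lower bound reduction is incorrect, and the error is exactly the phenomenon this paper is built around. If you restrict to hidden strings of Hamming weight at most one, an $\OR$-query on $S$ tells you whether the marked index lies in $S$, and that subproblem is solvable with $O(\log n)$ queries \emph{deterministically} by binary search (the paper's introduction makes precisely this observation about sparse inputs). Your adversary calculation also fails for the same reason: with $\Gamma$ supported on the pairs $(0^n,e_i)$, a single query on a set $S$ of size $n/2$ distinguishes $0^n$ from every $e_i$ with $i\in S$ simultaneously, so $\|\Gamma\circ D_S\|$ is $\Theta(\sqrt{|S|})$, not $O(1)$, and the spectral ratio collapses to $O(1)$. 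The correct hard family is the complementary one: strings that are all ones except for at most one zero. There every $\OR$-query on a set of size at least $2$ returns $1$ regardless of the input, so only singleton queries carry information, and locating the zero is ordinary Grover search, giving $\Omega(\sqrt{n})$ by the hybrid argument of \cite{BBBV97}. This is the reduction the paper sketches.

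For the upper bound, the paper treats this theorem as a black-box citation to \cite{Bel15}; what it actually proves (in \app{belovsSCGT}, for the stronger $\SCGT$ variant) is an explicit feasible solution to the dual adversary SDP, namely $X_S=\int_0^1 Y_S(p)\,dp$ with rank-one $Y_S(p)$ built from vectors $\psi_S(p)$ supported only on inputs whose hidden string intersects $S$ in at most one position, yielding objective value $\pi\sqrt{n}$. Your learning-graph narrative points in the right general direction but contains no checkable content --- the entire weight schedule and feasibility verification, which is the whole theorem, is deferred to ``Belovs shows.'' That deferral is defensible for a cited result, but combined with the broken lower bound your proposal does not establish either direction on its own.
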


\para{Decision problem associated with $\CGT$.}
Since we want to work with polynomials (and Boolean matrices in \sec{comm}), it will be more convenient to consider a decision problem corresponding to combinatorial group testing. 
To do so, we define the problem 
\begin{equation}
\XOR_n \circ \CGT_{2^n}:\B^{2^n}\to \Ba,    
\end{equation}
which computes the parity of all the output bits of the $\CGT$ function. 

\setlength{\intextsep}{0pt}%
\setlength{\columnsep}{10pt}%
\begin{wrapfigure}{r}{0.3\textwidth}
\vspace{0.25em}
\centering
\begin{tikzpicture}[scale=1]
\node[draw,rounded corners,inner sep=4pt,minimum width=3cm](CGT) at (2,2) {$\CGT_{2^n}$};
\node (y1) at (0,0.25) {$y_{\emptyset}$};
\node (y2) at (1,0.25) {$y_{\{1\}}$};
\node (y3) at (2,0.25) {$y_{\{2\}}$};
\node (y4) at (3,0.25) {$\cdots$};
\node (y5) at (4,0.25) {$y_{[n]}$};
\node[draw,rounded corners,inner sep=4pt,minimum width=3cm](XOR) at (2,3.6) {$\XOR_{n}$};
\draw (y1) -- (CGT);
\draw (y2) -- (CGT);
\draw (y3) -- (CGT);
\draw (y4) -- (CGT);
\draw (y5) -- (CGT);
\draw ([xshift=-1cm]CGT.north) node [above left] {$x_1$} -- ([xshift=-1cm,yshift=1cm]CGT.north);
\draw ([xshift=-0.5cm]CGT.north) -- ([xshift=-0.5cm,yshift=1cm]CGT.north);
\draw ([xshift=0cm]CGT.north) -- ([xshift=0cm,yshift=1cm]CGT.north);
\draw ([xshift=0.5cm]CGT.north) -- ([xshift=0.5cm,yshift=1cm]CGT.north);
\draw ([xshift=1cm]CGT.north) node [above right] {$x_n$} -- ([xshift=1cm,yshift=1cm]CGT.north);
\draw ([xshift=0cm]XOR.north) -- ([xshift=0cm,yshift=0.3cm]XOR.north) node [above] {$\bigoplus_i x_i$};
\end{tikzpicture}
\vspace{1.5em}
\end{wrapfigure}

In other words, $\XOR_n \circ \CGT_{2^n} (y) = \XOR_n(\CGT_{2^n}(y))$, which is the $\XOR$ of all the bits of $x$, the hidden string in the $\CGT$ problem. Of course, any quantum algorithm that solves $\CGT_{2^n}$ and outputs $x$ can instead output the parity of all the bits of $x$.

We can now construct a polynomial that approximates this Boolean function. Using \thm{Beals} and \thm{BelovsCGT}, we can get a polynomial of degree $O(\sqrt{n})$ that approximates $\XOR_n \circ \CGT_{2^n}$ on all inputs in the promise and is bounded in $[0,1]$ outside the promise. 

For our application we need a more robust version of this polynomial. We need a polynomial that also works when the input variables are close to being Boolean. Combining this polynomial with \thm{Sherstovrobust}, we get the following.

\begin{restatable}{theorem}{magicpoly}
\label{thm:magic_poly}
There is a real polynomial $p$ of degree $O(\sqrt{n})$ acting on $2^n$ variables
$\{y_S\}_{S\subseteq[n]}$ such that for any input $y\in\B^{2^n}$ with $\XOR_n\circ\CGT_{2^n}(y)\neq *$, and any $\Delta\in[-1/3,1/3]^{2^n}$,
\begin{equation}
    |p(y+\Delta)-\XOR_n\circ\CGT_{2^n}(y)|\leq 1/3,
\end{equation}
and for all $y\in\B^{2^n}$, $p(y)\in[0,1]$.
\end{restatable}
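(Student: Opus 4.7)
The plan is to start from Belovs' $O(\sqrt{n})$-query quantum algorithm for $\CGT_{2^n}$ (\thm{BelovsCGT}) and process its acceptance-probability polynomial in three stages: first turn it into a low-error decision algorithm for $\XOR_n\circ\CGT_{2^n}$, then make the resulting polynomial robust to $1/3$ input noise, and finally renormalize so the Boolean boundedness condition holds exactly.

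First I would run Belovs' algorithm and XOR its $n$ output bits; since this is classical post-processing no additional queries are used. Standard success-amplification (a constant number of repetitions followed by a majority vote) drives the error below any fixed constant $\epsilon$, say $\epsilon = 1/100$. Applying \thm{Beals} yields a polynomial $q:\R^{2^n}\to\R$ of degree $O(\sqrt{n})$ with $q(y)\in[0,1]$ for every $y\in\B^{2^n}$ (it is an acceptance probability on an arbitrary input) and $|q(y)-\XOR_n\circ\CGT_{2^n}(y)|\le 1/100$ whenever $y\in\Dom(\XOR_n\circ\CGT_{2^n})$.

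Next I would invoke Sherstov's robust-polynomial construction (\thm{Sherstovrobust}) with error parameter $1/100$ to obtain a polynomial $q'$ of degree $O(\sqrt{n}+\log(1/\epsilon))=O(\sqrt{n})$ satisfying $|q(y)-q'(y+\Delta)|<1/100$ for every $y\in\B^{2^n}$ and every $\Delta\in[-1/3,1/3]^{2^n}$. Combined with the bound from the previous step, this already gives $|q'(y+\Delta)-\XOR_n\circ\CGT_{2^n}(y)|\le 2/100$ on promise inputs, well inside the required $1/3$ slack.

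The only remaining issue, and the main subtlety, is that Sherstov's theorem is a purely approximative statement near the Boolean cube and does not automatically keep $q'$ inside $[0,1]$ on Boolean inputs: setting $\Delta=0$ we only know $q'(y)\in[-1/100,\,101/100]$ for $y\in\B^{2^n}$. I would fix this with an affine rescaling
\begin{equation*}
p(y) \;=\; \frac{q'(y)+1/100}{1+2/100},
\end{equation*}
which has the same degree $O(\sqrt{n})$, maps $[-1/100,\,101/100]$ into $[0,1]$, and perturbs the approximation error by at most a constant factor. A direct calculation then confirms both conclusions: $p(y)\in[0,1]$ for every $y\in\B^{2^n}$, and $|p(y+\Delta)-\XOR_n\circ\CGT_{2^n}(y)|\le 1/3$ for every $y\in\Dom$ and $\Delta\in[-1/3,1/3]^{2^n}$. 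The conceptual crux is this final reconciliation between Sherstov's noise-robustness (approximation on the grid) and the theorem's absolute boundedness requirement; once the numerical constants are tuned, the affine correction closes the gap without increasing degree.
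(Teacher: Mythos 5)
Your proposal is correct and follows essentially the same route as the paper's proof: Belovs' algorithm with XOR post-processing and constant amplification, Beals et al.\ to extract a bounded low-degree polynomial, Sherstov's robustification, and a final affine rescaling to restore boundedness on the Boolean cube (the paper uses error $1/6$ and the map $p\mapsto\frac{3}{4}(p+\frac{1}{6})$ where you use $1/100$, but the constants are immaterial).
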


\begin{proof}
We start with \thm{BelovsCGT} which gives us a quantum algorithm that makes $O(\sqrt{n})$ queries and approximates $\XOR_n\circ\CGT_{2^n}$ to bounded error. Given a quantum algorithm computing a function with probability at least $2/3$, we can always boost the success probability to any constant in $(1/2,1)$ by repeating the quantum algorithm and taking the majority vote of the outcomes. This only increases the quantum query complexity by a constant factor. Hence we can assume the quantum algorithm of \thm{BelovsCGT} has error at most $1/6$ and apply \thm{Beals} to get a polynomial $p'$ of degree $O(\sqrt{n})$ such that for all $y\in\B^{2^n}$ with $\XOR_n\circ\CGT_{2^n}(y)\neq *$,
\begin{equation}
|p'(y)-\XOR_n\circ\CGT_{2^n}(y)|\leq 1/6.    
\end{equation}
Furthermore, because $p'$ arises from a quantum algorithm, we know that even on inputs outside the promise, i.e., inputs with $\XOR_n\circ\CGT_{2^n}(y)= *$, $p'(y) \in [0,1]$. Since $p'$ is bounded in $[0,1]$, we can apply \thm{Sherstovrobust} to it with $\delta=1/6$ to obtain a new polynomial $p$ that is robust to input noise.
Note that since $|p'(y)-p(y)|\leq1/6$ on all inputs $y\in\B^{2^n}$, including those outside the promise, $p'(y) \in [-1/6,7/6]$ for all $y\in\B^{2^n}$.
By rescaling and shifting the polynomial, we can map the interval $[-1/6,7/6]$ to the interval $[0,1]$. Explicitly, we map $p(y)$ to $\frac{3}{4}(p(y)+\frac{1}{6})$. Since the original polynomial was in $[0,1/6]$ for $0$-inputs, this rescaled and shifted polynomial lies in $[0,1/4]$, and similarly for $1$-inputs it lies in $[3/4,1]$, satisfying the conditions of the theorem.
\end{proof}

Using these results we can now prove the main result of this section.

\approxdeg*

\noindent \emph{Proof.}
Assuming  $\adeg(f)\neq 0$ (otherwise the result is trivial), there is an input $w^*$ such that $f(w^*)=0$. Let $w^*$ be any such input.

\setlength{\intextsep}{0pt}%
\setlength{\columnsep}{10pt}%
\begin{wrapfigure}{r}{0.42\textwidth}
\vspace{0.25em}
\centering
   \begin{tikzpicture}[-, scale=1,
    level distance = 3em,
    sibling distance = 2pt,
   every internal node/.style={vert}]
    \tikzstyle{vert}=[ellipse,draw=black,minimum size=0.8cm,inner sep=3pt]
    \tikzset{edge from parent/.style=
        {draw, edge from parent path={(\tikzparentnode) -- (\tikzchildnode)}}}
    \Tree
    [.{$\OR_n$}
        \edge node [above left] {$x_1$};
        [.\node {$f$}; 
            {$w_{11}$}
			\edge[draw=white];{$\cdots$}
			{$w_{1m}$}
        ]
		\edge[draw=white];
        [.\node[draw=white]{$\cdots$};
            \edge[draw=white];{$\cdots$}
        ]
        \edge node [above right] {$x_n$};
        [.{$f$}
            {$w_{n1}$}
			\edge[draw=white];{$\cdots$}
			{$w_{nm}$}
        ]
    ]
    \end{tikzpicture}
\vspace{0.5em}
\end{wrapfigure}
Let $q$ be a polynomial of degree $T:=\adeg(\OR_n \circ f)$ that approximates $\OR_n \circ f$. 
Let the input variables of the \th{i} copy of $f$, for $i\in[n]$, be called $w_{i1}, w_{i2}, \ldots, w_{im}$. 
Let us also define for all $i\in[n]$, $x_i := f(w_{i1},w_{i2}\ldots,w_{im})$ to be the output of the \th{i} function $f$. 

Thus $q$ is a polynomial over the variables $w_{11}$ to $w_{nm}$ that approximately computes the Boolean function $\bigvee_{i=1}^n x_i$.
From $q$, we can define for any $S\subseteq [n]$, a new polynomial $q_S$ over the same set of variables $\{w_{ij}:i\in[n], j\in [m]\}$ that approximately computes the Boolean function $\bigvee_{i\in S} x_i$. The polynomial $q_S$ is obtained from $q$ by setting all the inputs to $f$ for which $i\notin S$ equal to the special input $w^*$ for which $f(w^*)=0$.
Thus the polynomial $q_S$ is a polynomial over the same variables as $q$ and has degree at most $T$ and approximates the function $\bigvee_{i \in S} x_i$.

Now from \thm{magic_poly}, we have a real polynomial $p$ of degree $O(\sqrt{n})$ acting on $2^n$ variables $\{y_S\}_{S\subseteq[n]}$ such that for any input $y\in\B^{2^n}$ with $\XOR_n\circ\CGT_{2^n}(y)\neq *$, and any $\Delta\in[-1/3,1/3]^{2^n}$,
\begin{equation}\label{eq:robust21}
    |p(y+\Delta)-\XOR_n\circ\CGT_{2^n}(y)|\leq 1/3.
\end{equation}

Now we define a polynomial $r$ in the variables $w_{11}$ to $w_{nm}$  by taking the polynomial $p$ over variables $y_S$ and replacing each occurrence of the variable $y_S$ with the polynomial $q_S$. 

Then because of equation \eq{robust21} and the fact that the polynomial $q_S$ approximates $\bigvee_{i \in S} x_i$, the polynomial $r$ approximates the parity of the bits $x_i$ (recall $x_i = f(w_{i1},w_{i2}\ldots,w_{im})$). Also note that $r$ is of degree $O(\sqrt{n}T)$. Thus we have
\begin{equation}\label{eq:xoror}
    \adeg(\XOR_n \circ f) = O(\sqrt{n}T) = O(\sqrt{n}\, \adeg(\OR_n \circ f)).
\end{equation}
Since $\adeg(\XOR_n \circ f)) = \Omega(n \, \adeg(f))$ (\thm{Sherstovparity}), we get 
$\adeg(\OR_n \circ f) = \Omega(\sqrt{n}\, \adeg(f))$.
\qed
\bigskip

Note that essentially the same proof yields a weak lower bound for the $\OR$ of $n$ different functions $f_1,f_2,\ldots,f_n$. The proof would follow similarly, except instead of \eq{xoror}, we would arrive at 
\begin{equation}
    \adeg(\XOR_n \circ \, (f_1,f_2,\ldots,f_n)) = O(\sqrt{n}\, \adeg(\OR_n \circ \, (f_1,f_2,\ldots,f_n))).
\end{equation}
Now from \thm{Sherstovparity}, we have that $\adeg(\XOR_n \circ \, (f_1,f_2,\ldots,f_n))=\Omega(n\min_i \adeg(f_i))$, and hence
\begin{equation}\label{eq:ORmin}
    \adeg(\OR_n \circ \, (f_1,f_2,\ldots,f_n))=\Omega(\sqrt{n} \min_i \adeg(f_i)). 
\end{equation}
We will use this weak result in \sec{unbalanced} to establish an optimal bound on the approximate degree of the $\OR$ of $n$ different functions.

\subsection{\texorpdfstring{$\PrOR$}{PrOR} composition}
\label{sec:queryPrOR}

We now extend the main result of the previous section to work with the partial function  $\PrOR_n:\B^n\to\Ba$ introduced in \sec{queryprelim}. $\PrOR$ is just the $\OR$ function with the additional promise that the input has Hamming weight $0$ or $1$. As discussed in \sec{queryprelim}, $\bdeg(\PrOR) = \Theta(\sqrt{n})$. We now generalize the result of the previous section to work for $\PrOR$.

\begin{theorem}
\label{thm:approxdegPrOR}
For any Boolean function $f:\B^m\to\B$, we have
\begin{equation}
    \bdeg(\PrOR_n \circ f) = \Omega(\sqrt{n}\adeg(f)/\log n).
\end{equation}
\end{theorem}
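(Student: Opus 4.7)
The plan is to mirror the proof of Theorem~\ref{thm:approxdeg}, using an assumed approximating polynomial for $\PrOR_n\circ f$ in place of the one for $\OR_n\circ f$ to build the restricted polynomials $q_S$. The main new difficulty is that Sherstov's robust polynomial construction (Theorem~\ref{thm:Sherstovrobust}) produces polynomials that need not stay bounded once inputs stray far from Boolean values. Since $\PrOR_n\circ f$ is a partial function, we have no control over the approximating polynomial on inputs with $\sum_i f(w_i)\ge 2$; in particular $q_S(w)$ can be any value in $[0,1]$ on such inputs, often nowhere near Boolean. Substituting such values into Sherstov's non-multilinear robust polynomial can cause blow-ups, so the approach of Theorem~\ref{thm:approxdeg} breaks down.

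To sidestep this we avoid Sherstov's construction and work instead with multilinear polynomials arising directly from quantum algorithms, which are automatically bounded on $[0,1]^N$. A crucial structural observation is that Belovs' algorithm (Theorem~\ref{thm:BelovsCGT}) does not require its $\OR$-oracle $y_S$ to be correct on every subset $S\subseteq[n]$; it only needs $y_S=\bigvee_{i\in S}x_i$ whenever $|\{i\in S:x_i=1\}|\le 1$, and may receive arbitrary answers on heavier subsets. Theorem~\ref{thm:Beals}, applied to an amplified version of this algorithm, yields a multilinear polynomial $p$ of degree $O(\sqrt{n})$ on the $2^n$ variables $\{y_S\}$ that is bounded in $[0,1]$ everywhere and satisfies $|p(y)-\XOR(x)|\le\epsilon$ for every $x\in\B^n$ and every Boolean $y$ with $y_S=\bigvee_{i\in S}x_i$ on all $S$ with $|x_S|\le 1$.

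The technical heart of the proof is an intrinsic robustness lemma: the polynomial arising from a $q$-query quantum algorithm tolerates $L^\infty$ noise of size $1/\poly(q)$ on each input coordinate, independent of the input length. This beats the trivial $O(1/N)$ bound valid for arbitrary multilinear polynomials (useless for us since $N=2^n$) and is specific to the quantum structure. Given $x\in\B^n$, partition the coordinates into critical subsets $C=\{S:|x_S|\le 1\}$ and non-critical subsets $F=\{S:|x_S|\ge 2\}$, and write $y^{(x)}_S:=\bigvee_{i\in S}x_i$. Applying the robustness lemma to each polynomial $p_{y_F}(z_C):=p(z_C,y_F)$, which is the acceptance probability of a $q$-query quantum algorithm obtained from Belovs' by hard-wiring the non-critical oracle bits to $y_F$, shows that $|p_{y_F}(z_C)-\XOR(x)|\le 1/3$ for any $z_C$ that is $\delta$-close to $y^{(x)}_C$, with $\delta=1/\poly(\sqrt{n})$. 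Since the composed input $z_S=q_S(w)$ satisfies $|z_S-y^{(x)}_S|\le\delta$ on $S\in C$ and merely $z_S\in[0,1]$ on $S\in F$, and since $p(z_C,z_F)$ is by multilinearity a convex combination of the $p_{y_F}(z_C)$ over Boolean $y_F$, we conclude $|p(z)-\XOR(x)|\le 1/3$.

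The rest mirrors Theorem~\ref{thm:approxdeg}. Let $T:=\bdeg(\PrOR_n\circ f)$. Standard error-reduction amplifies an assumed degree-$T$ approximating polynomial to a bounded polynomial $q'$ of degree $O(T\log n)$ and error $\delta$. Define $q_S$ from $q'$ by hard-wiring the $f$-inputs outside $S$ to a fixed $w^*$ with $f(w^*)=0$; then $q_S(w)$ is $\delta$-close to $\bigvee_{i\in S}f(w_i)$ whenever the restriction has Hamming weight at most one. Substituting $q_S(w)$ for $y_S$ in $p$ gives a polynomial of degree $O(\sqrt{n}\,T\log n)$ approximating $\XOR_n\circ f$, so $\bdeg(\XOR_n\circ f)=O(\sqrt{n}\,T\log n)$. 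Combined with Theorem~\ref{thm:Sherstovparity} this yields $T=\Omega(\sqrt{n}\,\adeg(f)/\log n)$. The main obstacle is establishing the $1/\poly(q)$ robustness lemma for quantum-algorithmic polynomials, which is also responsible for the extra $\log n$ factor in the final bound.
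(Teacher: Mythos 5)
Your proposal is correct and follows essentially the same route as the paper: the observation that Belovs' algorithm tolerates arbitrary oracle answers on subsets containing two or more ones (the paper's $\SCGT$ problem), the intrinsic $1/\poly(q)$ robustness of polynomials arising from $q$-query quantum algorithms in place of Sherstov's non-multilinear robustification, multilinearity to handle the out-of-promise coordinates via a convex combination over Boolean settings, and $O(\log n)$-degree amplification of the assumed polynomial accounting for the lost log factor. The one ingredient you flag but do not establish --- the $1/\poly(q)$ robustness of quantum-algorithmic polynomials --- is exactly the paper's \thm{multilinear_robust}, which it proves via the BBBV hybrid argument.
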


Before proving this, we need to establish some properties of polynomials that arise from quantum algorithms using \thm{Beals}. Recall from \defn{robust} that a polynomial approximating a Boolean function is $\delta$-robust to input noise if the polynomial correctly approximates the function even if the input bits are $\delta$-far from being Boolean.

We now prove that approximating polynomials for Boolean functions that are constructed via quantum algorithms (using \thm{Beals}) naturally possess some robustness to input noise. While it is not hard to show that all multilinear polynomials on $n$ bits are robust to input noise smaller than $O(1/n)$, this is not good enough for our
applications, as the polynomials we are interested in
act on exponentially many variables (see \sec{prooftechniques} for a detailed discussion). Instead,
we show that polynomials that arise from quantum algorithms are much more robust to input noise if the function has small quantum query complexity.

\begin{theorem}\label{thm:multilinear_robust}
Let $h$ be a partial Boolean function. Then there is a bounded multilinear polynomial $q$
of degree $O(\Q(h))$ that approximately computes $h$ with robustness $\Omega(1/\Q(h)^2)$.
\end{theorem}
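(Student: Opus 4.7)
The plan is to let $\mathcal{A}$ be a quantum algorithm computing $h$ with $T = O(\Q(h))$ queries and error at most $1/12$ (obtained by boosting the given $\Q(h)$-query algorithm by a constant factor), and take $q$ to be the bounded multilinear polynomial of degree at most $2T$ produced by \thm{Beals}. I aim to show that $q$ already has robustness $\Omega(1/T^2)$. The key object is the \emph{linearized oracle}
\[
\tilde O_x := \sum_i |i\rangle\!\langle i| \otimes \bigl((1-x_i)I + x_i X\bigr),
\]
defined for all $x \in \R^n$ and coinciding with the standard bit-flip oracle on $x \in \B^n$. Setting $\psi(x) := U_T \tilde O_x U_{T-1} \cdots \tilde O_x U_0 |0\rangle$, two elementary norm bounds drive the argument: first, $\|\tilde O_{x+\Delta} - \tilde O_x\|_{\mathrm{op}} \leq 2\|\Delta\|_\infty$, since the difference is block-diagonal with $2\times 2$ blocks $\Delta_i(X-I)$; and second, for $x \in \B^n$ and $\|\Delta\|_\infty \leq \delta$, $\|\tilde O_{x+\Delta}\|_{\mathrm{op}} \leq 1+2\delta$, because each block is of the form $\alpha I + \beta X$ with eigenvalues $1$ and $1 - 2(x_i+\Delta_i)$, the latter having absolute value at most $1+2\delta$ whenever $x_i\in\B$.

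A telescoping hybrid argument then yields
\[
\|\psi(x+\Delta) - \psi(x)\| \leq 2T\delta(1+2\delta)^{T-1},
\]
where the $(1+2\delta)^{T-1}$ factor comes from bounding the prefix of noisy oracles on the ``$x+\Delta$'' side of each hybrid and the $\|\Delta\|_\infty$ factor from the single ``swap'' of $\tilde O_{x+\Delta}$ for $\tilde O_x$. Choosing $\delta = c/T^2$ for a small absolute constant $c$ keeps $(1+2\delta)^{T-1} \leq e^{2c/T} \leq 2$, so the right-hand side is $O(1/T)$. Writing $B(y) := \langle \psi(y), \Pi \psi(y)\rangle$ for the acceptance projector $\Pi$ and using the identity $\langle a, \Pi a\rangle - \langle b, \Pi b\rangle = \langle a-b, \Pi a\rangle + \langle b, \Pi(a-b)\rangle$ together with $\|\psi(x)\| = 1$ and $\|\psi(x+\Delta)\| \leq 1 + O(1/T)$, I obtain $|B(x+\Delta) - B(x)| = O(1/T)$. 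Since $B$ agrees with $q$ on $\B^n$ by construction, combining this with $|q(x) - h(x)| \leq 1/12$ and the triangle inequality gives $|q(x+\Delta) - h(x)| \leq 1/3$ for all $x \in \Dom(h)$ and $\|\Delta\|_\infty \leq c/T^2$, which is exactly the desired robustness $\Omega(1/\Q(h)^2)$.

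The main obstacle is that the polynomial $B$ produced by the linearized oracle is not in general multilinear in $x$: repeated queries to the same bit introduce $x_i^k$ terms, whereas $q$ is multilinear by definition. I would close this gap either by arranging the quantum algorithm so that each query uses a fresh ancilla that records the queried bit (which adds no queries and ensures $B$ itself is multilinear of degree $\leq 2T$, so that $B=q$ as polynomials), or by a direct comparison of $B$ and $q$ arguing that multilinearization only collapses the ``self-interaction'' monomials that come from repeated queries and so cannot increase the perturbation error. Once this is handled, the rest is the hybrid calculation above, and the case of constant $T$ requires no special argument since $\Omega(1/T^2)$ is then a constant and can be absorbed into the boosting step.
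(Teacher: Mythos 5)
Your hybrid/operator-norm argument via the linearized oracle is sound as far as it goes, and it correctly shows that the \emph{raw} acceptance-probability polynomial $B(y)=\langle\psi(y),\Pi\psi(y)\rangle$ is robust (indeed with the better parameter $\Omega(1/T)$, since $2T\delta(1+2\delta)^{T-1}$ is already a small constant at $\delta=c/T$). But the gap you flag at the end is the whole theorem: the statement requires the polynomial to be \emph{multilinear}, and this is not cosmetic --- multilinearity is exactly what the downstream proof of the $\PrOR$ composition theorem relies on (to write the vector of plugged-in values as a convex combination of Boolean points). The polynomial $B$ is not multilinear: it contains monomials with $y_i^k$, $k\ge 2$, whenever index $i$ can be queried at two different time steps, and neither of your proposed repairs closes the gap. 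Writing each query answer into a fresh ancilla does not change which index the query register selects, so repeated queries to the same index still produce $y_i^k$ factors in the amplitudes; and the claim that multilinearization ``cannot increase the perturbation error'' is unsupported --- replacing $y_i^k$ by $y_i$ changes a polynomial's values off the cube by an amount controlled only by the (possibly enormous) coefficients of the self-interaction monomials.

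The paper's proof is genuinely different precisely at this point. It takes $q$ to be the multilinearization of the Beals et al.\ polynomial and uses the identity $q(y)=\E_{z\sim \Bernoulli(y)}[A(z)]$, which converts the question into: how much can the acceptance probability change when each bit of a \emph{Boolean} input is independently flipped with probability at most $\delta$? A Boolean perturbation is not small in $\ell_\infty$, so your operator-norm bound says nothing about it; instead the paper invokes the BBBV hybrid argument (flipping a set of bits carrying small total query mass changes the acceptance probability by little) together with Markov's inequality to show that all but a $2/81$ fraction of the sampled $z$ satisfy $|A(z)-A(x)|\le 4/81$ when $\delta=\Theta(1/\Q(h)^2)$. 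To salvage your route you would have to either carry out this Boolean-perturbation argument for $\E_{z}[A(z)]$ (at which point you have reproduced the paper's proof) or supply a genuinely new reason why the multilinearization of $B$ inherits $B$'s robustness; as written, the proposal does not prove the stated theorem.
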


\begin{proof}
Consider the quantum query algorithm that computes $h$ to error $1/3$
using at most $\Q(h)$ queries. By repeating the algorithm $3$ times
and taking a majority vote, we get an algorithm $A$ computing $h$ to
error $7/27$ using at most $3\Q(h)$ queries.
Let $q$ be the polynomial associated with this algorithm;
this polynomial has degree at most $6\Q(h)$, and for each $x\in\B^n$,
$q(x)$ equals the acceptance probability of the quantum algorithm.
Furthermore, make $q$ multilinear by replacing any squared variable
$x_i^2$ with $x_i$ until no variables have power higher than $1$;
this does not change the behavior of $q$ on $\B^n$, since $1^2=1$
and $0^2=0$.

The resulting polynomial $q$ is multilinear. In addition, since
it evaluates to an acceptance probability on every input in $\B^n$,
it is bounded within $[0,1]$ on $\B^n$.
Together with multilinearity, we conclude $q$ is bounded in $[0,1]$
on all inputs in $[0,1]^n$. On inputs in
$\Dom(h)$, the polynomial $q$ evaluates to the acceptance probability
of a quantum algorithm $A$ computing $h$, so $q$ computes $h$; it remains only
to show that $q$ computes $h$ robustly.

Fix $x\in\Dom(h)$, and fix $y\in[0,1]^n$ that is entry-wise within
$\delta$ of $x$ for $\delta=10^{-5}\Q(h)^{-2}$.
We must show $|p(y)-h(x)|\le 1/3$.

Let $B$ be the probability distribution over $\B^n$
given by sampling each bit $i$ independently from $\Bernoulli(y_i)$,
that is, bit $i$ is $0$ with probability $1-y_i$ and $1$ with probability
$y_i$. By the multilinearity of $q$, it is not hard to see that
\begin{equation}q(y)=\E_{z\sim B}[q(z)].\end{equation}
Moreover, since $q(z)=A(z)$ for $z\in\B^n$
(where $A(z)$ is the acceptance probability of $A$ when run on $z$),
we get $q(y)=\E_{z\sim B}A(z)$.

Split the strings $z\in\B^n$ into two groups: call the strings with
$|A(z)-A(x)|\le4/81$ ``close'' to $x$, and call the strings with
$|A(z)-A(x)|>4/81$ ``far'' from $x$. Let $C$ be the set of close strings.
Then
\begin{equation}q(y)=\Pr_{z\sim B}[z\in C]\cdot\E_{z\sim B|z\in C}[A(z)]
    +\Pr_{z\sim B}[z\notin C]\cdot\E_{z\sim B|z\notin C}[A(z)]\end{equation}
\begin{equation}=\E_{z\sim B|z\in C}[A(z)]+
    \Pr_{z\sim B}[z\notin C](\E_{z\sim B|z\notin C}[A(z)]-\E_{z\sim B|z\in C}[A(z)]).\end{equation}
The expectation $\E_{z\sim B|z\in C}[A(z)]$ is within $4/81$
of $A(x)$. The term $\E_{z\sim B|z\notin C}[A(z)]-\E_{z\sim B|z\in C}[A(z)]$ is has magnitude most $1$, so if we upper bound $\Pr_{z\sim B}[z\notin C]$
by $2/81$, we will conclude that $q(y)$ is within $6/81=2/27$ of
$A(x)$, and hence within $2/27+7/27=1/3$ of $h(x)$.
For this reason, it suffices to show
$\Pr_{z\sim B}[z\notin C]\le 2/81$.

From the hybrid argument~\cite{BBBV97}, let $m_{i,t}$ be the probability that
if $A$ is run on $x$ for $t-1$ queries and its query register is subsequently measured (right before the \th{t} query),
it is found to be querying position $i$.
Let $m_i=m_{i,1}+m_{i,2}+\dots+m_{i,T}$ for all $i\in[n]$,
where $T\le3\Q(h)$ is the total number of queries made by $A$.
Then $\sum_{i=1}^n m_i=T$. Moreover, the hybrid argument tells us
that for any $z\in\B^n$, we have
\begin{equation}\sum_{i:x_i\ne z_i} m_i\ge\frac{|A(x)-A(z)|^2}{4T}.\end{equation}
For strings $z\notin C$, this sum is at least $4/3^8T$.

Note that $\E_{z\sim B}\sum_{i:x_i\ne z_i} m_i\le \delta\sum_{i=1}^n m_i=\delta T$, since each bit $z_i$ differs from $x_i$ with probability
at most $\delta$. By Markov's inequality, the probability
over $z\sim B$ of the event $\sum_{i:x_i\ne z_i}\ge 4/3^8T$ is less
than $3^8\delta T^2/4$. This upper bounds the probability that
$z\notin C$ for $z\sim B$. Since we have $\delta\le 10^{-5}\Q(h)^{-2} \le (8/3^{12})T^{-2}$,
we conclude $\Pr_{z\sim B}[z\notin C]\le 2/81$, as desired.
\end{proof}

We now generalize the combinatorial group testing problem defined in \sec{queryOR} to a problem we call  ``singleton combinatorial group testing.''
In this problem instead of being able to query the $\OR$ of any subset of bits of the unknown input $x$, we can only query the $\PrOR$ of this string $x$. In other words, we are only guaranteed to receive the correct answer if the subset queried contains exactly zero or one $1$s. In all other cases, the query response may be arbitrary.

\begin{restatable}[Singleton CGT]{definition}{scgt}
\label{def:SCGT}
Let $\SCGT_{2^n}:D \to \B^n$ be a partial function with $D \subseteq \B^{2^n}$. 
Let $D$ be the set of all $z\in\B^{2^n}$ for which there exists an $x\in\B^n$ with the property that for all $S\subseteq[n]$ satisfying $\sum_{i\in S} x_i\in\B$, we have $\sum_{i\in S} x_i=z_S$. 
Note that for all $z\in D$, the string $x$ is uniquely defined by $x_i = z_{\{i\}}$, and we denote this string $x(z)$. 
We define the partial Boolean function $\SCGT_{2^n}:D\to\B^n$ by $\SCGT_{2^n}(z):=x(z)$.
\end{restatable}

Note that both $\CGT_{2^n}$ and $\SCGT_{2^n}$ are partial functions that agree on the inputs that are in both their domains, but $\SCGT_{2^n}$ is a  more general problem in the sense that its promise strictly contains the promise of $\CGT_{2^n}$. In symbols, for all $z\in\Dom(\CGT_{2^n})$, $\CGT_{2^n}(z)=\SCGT_{2^n}(z)$, and $\Dom(\CGT_{2^n}) \subsetneq \Dom(\SCGT_{2^n})$.

Remarkably, Belovs' algorithm for $\CGT$ also works for this more general problem.

\begin{restatable}{theorem}{strongBelovs}
\label{thm:strong_Belovs}
The bounded-error quantum query complexity of $\SCGT_{2^n}$ is $\Theta(\sqrt{n})$.
\end{restatable}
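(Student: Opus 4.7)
The plan is to establish the two bounds separately. The $\Omega(\sqrt{n})$ lower bound follows because $\CGT_{2^n}$ is a restriction of $\SCGT_{2^n}$: whenever $z$ encodes the ORs of subsets of a hidden string $x$, then on every subset $S$ with $\sum_{i\in S} x_i \in \{0,1\}$ the OR equals the sum, so the $\SCGT$ promise is satisfied by the same $x$. Hence $\Dom(\CGT_{2^n}) \subseteq \Dom(\SCGT_{2^n})$ and the two functions agree on the smaller domain, so any bounded-error quantum algorithm for $\SCGT_{2^n}$ also solves $\CGT_{2^n}$, giving $\Q(\SCGT_{2^n}) \ge \Q(\CGT_{2^n}) = \Omega(\sqrt{n})$ by \thm{BelovsCGT}. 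Equivalently, one can restrict attention to $\SCGT$ inputs whose hidden string has Hamming weight at most $1$: on these, every $z_S$ is forced by the promise, and recovering the location of the (possible) $1$ is just unstructured search, which requires $\Omega(\sqrt{n})$ quantum queries.

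For the $O(\sqrt{n})$ upper bound, the plan is to show that Belovs' algorithm for $\CGT_{2^n}$ already works, unmodified, in the weaker oracle model of $\SCGT_{2^n}$. Belovs' algorithm is built from a dual adversary (equivalently, a span program) whose feasibility and witness size are certified by comparing pairs of valid inputs that differ in a single hidden bit. If $x,x' \in \B^n$ are two such strings differing only at position $i$, then the oracle entries $z_S$ that change between the corresponding $\CGT$ inputs are exactly those with $i \in S$ and $S \cap (\mathrm{supp}(x)\cup\mathrm{supp}(x')) = \{i\}$, and these satisfy $\sum_{j\in S} x_j,\sum_{j\in S} x'_j \in \{0,1\}$. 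I would replay Belovs' construction and verify that both the dual adversary vectors he exhibits and the positive-weight constraints in his analysis reference only such oracle entries, so the construction is insensitive to how the oracle answers subsets of larger intersection with the support. Since the $\SCGT$ promise fixes $z_S$ exactly on the subsets where $\sum_{j\in S} x_j \in \B$, the same span program accepts with the same error on every $\SCGT$-valid input.

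The hard part will be carrying out this audit of Belovs' construction in sufficient detail: one must verify that the vectors used in the dual adversary solution, both the ones defining the span program and the witnesses used to bound its witness size, depend only on the restricted coordinates $\{z_S : \sum_{j\in S} x_j \in \B\}$. A convenient way to phrase the structural property is that on any $\CGT$-valid input $z$ with hidden string $x$, the acceptance probability of Belovs' algorithm depends on $z$ only through its restriction to subsets $S$ with $|S \cap \mathrm{supp}(x)| \le 1$; one then extends this to $\SCGT$-valid inputs by noting that the promise forces these coordinates to coincide with those of a corresponding $\CGT$-valid oracle. Modulo this case analysis the theorem reduces to a direct invocation of Belovs' original $O(\sqrt{n})$-query bound.
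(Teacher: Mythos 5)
Your proposal takes essentially the same route as the paper: the lower bound by restricting $\SCGT_{2^n}$ to $\CGT_{2^n}$ (equivalently to search), and the upper bound by observing that the vectors in Belovs' dual adversary solution are supported only on subsets $S$ with $\left|S\cap \mathrm{supp}(x)\right|\le 1$, so that both the SDP objective and its feasibility constraints depend only on the coordinates of $z$ that the $\SCGT$ promise fixes --- which is precisely the audit the paper carries out explicitly in its appendix. The only imprecision is your framing of the feasibility check as comparing inputs differing in a single hidden bit; the adversary constraint must hold for \emph{all} pairs $z,z'$ with $x(z)\neq x(z')$, but your support observation handles the general case equally well, so this does not affect correctness.
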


This claim essentially follows from Belovs' construction, but for completeness we include a full proof in \app{belovsSCGT}. We also provide a proof sketch here for readers familiar with Belovs' proof.

\begin{proof}[Proof sketch.]
Belovs upper bounds the quantum query complexity of $\CGT$ by exhibiting a solution to the dual of the adversary SDP that is known to characterize quantum query complexity~\cite{Rei11,LMR+11}. Instead of reproducing his proof and observing that it works just as well for $\SCGT$, we explain why the proof goes through.

The SDP solution that Belovs constructs is fully described the the vectors constructed at the beginning of the proof, which in the notation of his paper are called $\psi\llbracket  A \rrbracket$. 
His set $A\subseteq[n]$ corresponds to the set of bits in the hidden input $x$ that are equal to $1$. 
For every input $A$ (or $x\in\B^n$ in our notation), he constructs a vector $\psi\llbracket  A \rrbracket$ indexed by $S\subseteq[n]$, which corresponds to querying the subset $S$. 

However, the only nonzero entries of this vector are those for which $|S \cap A| \in \B$, i.e., where the subset queried has intersection $0$ or $1$ with the hidden input $x \in \B^n$. 
Hence this vector is easy to generalize to all inputs $z\in \Dom(\SCGT_{2^n})$. 
Note that in $\CGT$, $z$ is uniquely determined by the hidden string $x\in \B^n$, but in $\SCGT$ only some of the bits of $z$ are fixed by $x$. 
However, the vector $\psi$ is fixed by $x$ alone, and hence the solution for the SDP corresponding to an input $z$ only depends on the hidden input $x$ and not on the irrelevant variables of $z$ (i.e., the ones not fixed by the hidden input $x$).

Since the vectors are the same as in the original SDP, the value of the objective function remains the same. We only need to check that the constraint is satisfied. 
The constraint is on a pair of inputs $z$ and $z'$, and sums over bits on which they differ. 
But this sum will again not depend on the irrelevant bits of $z$ and $z'$, since the vectors only depend on the underlying hidden inputs, which completes the proof.
\end{proof}

Before proving \thm{approxdegPrOR}, we will prove an
analogue of \thm{magic_poly}, showing the existence of
a robust polynomial for $\XOR_n\circ\SCGT_{2^n}$.
However, unlike in \thm{magic_poly},
we will require the polynomial
to be multilinear.
The reason for this extra requirement is that we will
need plug in polynomials for $\PrOR$ into the variables,
and those might not approximate $\B$ values on all inputs.

This multilinearity requirement means
that we cannot use Sherstov's robustification construction
(\thm{Sherstovrobust}).
Instead, we use \thm{multilinear_robust},
which shows that polynomials coming from quantum algorithms
are always slightly robust. The weaker robustness condition
will later cause us to lose a logarithmic factor.

\begin{theorem}\label{thm:multilinear_magic_poly}
There is a real polynomial $p$ of degree $O(\sqrt{n})$
acting on $2^n$ variables $\{y_S\}_{S\subseteq[n]}$
and a constant $c\ge 10^{-5}$ such that for any input
$y\in\B^{2^n}$ with $\XOR_n\circ\SCGT_{2^n}(y)\neq *$, and any 
$\Delta\in[-c/n,c/n]^{2^n}$,
\begin{equation}
    |p(y+\Delta)-\XOR_n\circ\SCGT_{2^n}(y)|\leq 1/3,
\end{equation}
and for all $y\in\B^{2^n}$, $p(y)\in[0,1]$. In addition,
$p$ is multilinear.
\end{theorem}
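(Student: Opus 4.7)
The plan is to mimic the proof of \thm{magic_poly} but replace the two missing ingredients: swap \thm{BelovsCGT} for its strengthening \thm{strong_Belovs}, and swap Sherstov's robustification \thm{Sherstovrobust} (which destroys multilinearity) for the intrinsic-robustness statement \thm{multilinear_robust}. The price paid for the latter swap is that the tolerated noise drops from the constant $1/3$ down to $\Omega(1/\Q(h)^2)$, which for $h=\XOR_n\circ\SCGT_{2^n}$ becomes $\Omega(1/n)$, matching the window $[-c/n,c/n]$ in the statement.

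Concretely, I would first observe that $\Q(\XOR_n\circ\SCGT_{2^n})=O(\sqrt{n})$. This follows immediately from \thm{strong_Belovs}: Belovs' algorithm recovers the full hidden string $x\in\B^n$ from $\SCGT_{2^n}$-access in $O(\sqrt{n})$ queries, and one can then compute $\bigoplus_i x_i$ for free (either by taking the parity of the output classically, or by appending a final reversible $\XOR$ gate to the quantum circuit). I would then invoke \thm{multilinear_robust} with this function as $h$; the theorem directly delivers a bounded multilinear polynomial $p$ of degree $O(\Q(h))=O(\sqrt{n})$ that approximates $h$ on $\Dom(h)$ to error $1/3$ even when each coordinate is perturbed by an amount up to $\Omega(1/\Q(h)^2)=\Omega(1/n)$. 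Boundedness $p(y)\in[0,1]$ for all $y\in\B^{2^n}$ is built into the statement of \thm{multilinear_robust}, since $p$ is the acceptance probability of a quantum algorithm on Boolean inputs, so nothing extra is needed for that claim.

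All that remains is to argue the explicit constant $c\ge 10^{-5}$ advertised in the statement, which is a bookkeeping matter rather than a conceptual one. The proof of \thm{multilinear_robust} establishes a robustness bound of $10^{-5}/\Q(h)^2$; applied to our $h$, once we fix a version of Belovs' algorithm with query complexity at most $\sqrt{n}$ (absorbing any constant factor by working with a slightly enlarged $n$, or by choosing the success-boosting parameters accordingly), this yields exactly the $10^{-5}/n$ window in the hypothesis.

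I do not expect any real obstacle here: the conceptual work has already been done in \thm{strong_Belovs} (extending Belovs' algorithm from $\CGT$ to $\SCGT$) and in \thm{multilinear_robust} (showing that polynomials from quantum algorithms are automatically mildly robust). The mild annoyance is simply that, unlike in \thm{magic_poly}, we cannot afford to apply Sherstov's construction to upgrade the robustness to a constant, because Sherstov's construction is not multilinear and multilinearity is indispensable later on: the polynomials for $\PrOR\circ f$ that will be substituted into the variables $y_S$ need not be near-Boolean on non-promise inputs, so we must compose with a $p$ that is bounded on the entire cube $[0,1]^{2^n}$, which multilinearity together with boundedness on $\B^{2^n}$ gives us for free.
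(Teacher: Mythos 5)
Your proposal is correct and follows exactly the paper's own argument: apply Belovs' algorithm for $\SCGT_{2^n}$ (\thm{strong_Belovs}), modify it to output the parity of the recovered string, and then invoke \thm{multilinear_robust} to obtain a bounded multilinear polynomial of degree $O(\sqrt{n})$ with robustness $\Omega(1/n)$. The extra remarks on the constant $c$ and on why multilinearity is needed downstream are accurate but not part of the paper's (very terse) proof.
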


\begin{proof}
We know that Belovs' quantum algorithm works for $\SCGT_{2^n}:\B^{2^n} \to \B^n$ (\thm{strong_Belovs}).  This algorithm can be modified to output the parity of the  $n$ output bits, yielding a quantum algorithm for $\XOR_n \circ \SCGT_{2^n}$. \thm{multilinear_robust} then gives
us the desired polynomial $p$.
\end{proof}

We are now ready to prove the main result of this section, \thm{approxdegPrOR}. 
This proof is similar in structure to the proof of \thm{approxdeg}, with some crucial differences. 
We retain the proof structure and variable names of \thm{approxdeg} to highlight the similarity in structure.

\bigskip \noindent \emph{Proof of \protect{\thm{approxdegPrOR}}.}
Assuming  $\adeg(f)\neq 0$ (otherwise the result is trivial), there is an input $w^*$ such that $f(w^*)=0$. Let $w^*$ be any such input.

\setlength{\intextsep}{0pt}%
\setlength{\columnsep}{10pt}%
\begin{wrapfigure}{r}{0.42\textwidth}
\centering
   \begin{tikzpicture}[-, scale=1,
    level distance = 3em,
    sibling distance = 2pt,
   every internal node/.style={vert}]
    \tikzstyle{vert}=[ellipse,draw=black,minimum size=0.8cm,inner sep=3pt]
    \tikzset{edge from parent/.style=
        {draw, edge from parent path={(\tikzparentnode) -- (\tikzchildnode)}}}
    \Tree
    [.{$\PrOR_n$}
        \edge node [above left] {$x_1$};
        [.\node {$f$}; 
            {$w_{11}$}
			\edge[draw=white];{$\cdots$}
			{$w_{1m}$}
        ]
		\edge[draw=white];
        [.\node[draw=white]{$\cdots$};
            \edge[draw=white];{$\cdots$}
        ]
        \edge node [above right] {$x_n$};
        [.{$f$}
            {$w_{n1}$}
			\edge[draw=white];{$\cdots$}
			{$w_{nm}$}
        ]
    ]
    \end{tikzpicture}
\vspace{0.5em}
\end{wrapfigure}
Let $q$ be a polynomial of degree $T:=\bdeg(\PrOR_n \circ f)$ that approximates $\PrOR_n \circ f$ and is bounded outside the promise. 
Let the input variables of the \th{i} copy of $f$, for $i\in[n]$, be called $w_{i1}, w_{i2}, \ldots, w_{im}$. 
Let us also define for all $i\in[n]$, $x_i := f(w_{i1},w_{i2}\ldots,w_{im})$ to be the output of the \th{i} function $f$. Thus $q$ is a polynomial over the variables $w_{11}$ to $w_{nm}$ that approximately computes the
partial Boolean function $\PrOR(x_1,x_2,\dots,x_n)$.

We amplify the success probability of the polynomial $q$ until it agrees with $\PrOR_{n}\circ f$ to within error $c/n$ on all inputs in the promise\footnote{\label{footnote:amppoly} This can be done by composing $q$ with the univariate ``amplification polynomial'' of Buhrman \emph{et al.} (see proof of Lemma 1 in \cite{buhrman2007robust}). This is a polynomial of degree $O(\log (1/\epsilon))$ that maps $[0,1/3]$ to $[0,\epsilon]$ and $[2/3,1]$ to $[1-\epsilon,1]$, and furthermore preserves the boundedness of the polynomial, i.e. $[1/3,2/3]$ gets mapped to points in the interval $[0,1]$.},
where $c$ is the constant from \thm{multilinear_magic_poly}.
This amplified polynomial, which we call $q'$, has degree $O(T \log n)$.
Furthermore, $q'$ remains bounded in $[0,1]$ at all Boolean points outside the promise as well.

From $q'$, we can define for any $S\subseteq [n]$, a new polynomial $q'_S$ over the same set of variables $\{w_{ij}:i\in[n], j\in [m]\}$ that approximately computes the $\PrOR$ function only on those $x_i$ with $i\in S$.
The polynomial $q'_S$ is obtained from $q'$ by setting all the inputs to $f$ for which $i\notin S$ equal to the special input $w^*$ for which $f(w^*)=0$.
Thus the polynomial $q'_S$ is a polynomial over the same variables as $q'$ and has degree at most $\deg(q')$ and approximates the function $\sum_{i\in S} x_i$ when $\sum_{i\in S} x_i \in \B$.

As before, we define a polynomial $r$ in the variables $w_{11}$ to $w_{nm}$  by taking the polynomial $p$
(from \thm{multilinear_magic_poly}) over variables $y_S$ and replacing each occurrence of the variable $y_S$ with the polynomial $q'_S$. 

Then because of the robust approximation condition of $p$ in
\thm{multilinear_magic_poly} and the fact that the polynomial $q'_S$ approximates $\sum_{i\in S} x_i$ when $\sum_{i\in S} x_i \in \B$, we would like to argue that the polynomial $r$ approximates the parity of the bits $x_i$. 
We know that the polynomials $q'_S$ approximates (to error $c/n$) the function  $\sum_{i\in S} x_i$ when $\sum_{i\in S} x_i \in \B$.
Now if $q'_S$ was additionally also $c/n$-close to having Boolean output on all other inputs, then we would be done. 
This is because this perfectly fits the input of $\SCGT$, which expects the correct answer on subsets $S$  with $\sum_{i\in S} x_i \in \B$, but still expects a Boolean answer on the remaining inputs with $\sum_{i\in S} x_i > 1$.
However, the fact that the polynomial $q'_S$ may output any value in $[0,1]$ when $\sum_{i\in S} x_i > 1$ is not a problem because of the multilinearity of the polynomial $p$. We can simply view $q'_S$ as the convex combination of polynomials that always output a value $c/n$-close to being Boolean, and since $p$ is multilinear, the value of $p$ on a non-Boolean input is the same as its expected value over some Boolean inputs. Hence if $p$ works correctly when all inputs are in $\B$, it must also work correctly on inputs in $[0,1]$.

In more detail, fix an input $w \in \B^{nm}$ to $\PrOR_{n}\circ f$, and
consider the vector $v_w$ (of length $2^n$) of all the real numbers $q'_S(w)$ for $S\subseteq[n]$. 
We have $r(w)=p(v_w)$. For sets $S$ for which
$\sum_{i\in S} x_i \in \B$, the entry of $v_w$ at $S$
is $c/n$-close to $\sum_{i\in S} x_i$. Call those entries of $v_w$ the ``good'' entries.
For sets $S$ without this property, the entry of $v_w$
at $S$ is a number in $[0,1]$. We can write $v_w$ as a convex combination
of vectors that agree with $v_w$ on the good entries and have $\B$
values on the bad entries. Since $p$ is multilinear, its value
on a convex combination of vectors is a convex combination of its values
on each vector. Since $p$ robustly computes $\SCGT$, and since all
vectors in the convex combination are extremely close to points
in $\B^{2^{n}}$ that satisfy the $\SCGT$ promise with hidden string
$x_1 x_2 \dots x_{n}$, the value of $p(v_w)$ is within $1/3$
of $\Parity_{n}\circ f(w) = \bigoplus_{i=1}^n x_i$.

Since $\deg(r)\leq \deg(p)\deg(q')$, $r$ is of degree $O(\sqrt{n}T\log n)$. Thus we have
\begin{equation}
    \adeg(\XOR_n \circ f) = O(\sqrt{n}T\log n) = O(\sqrt{n} \log n\, \bdeg(\PrOR_n \circ f) ).
\end{equation}
Combining this with $\adeg(\XOR_n \circ f)) = \Omega(n \, \adeg(f))$ (\thm{Sherstovparity}), we get 
$\bdeg(\PrOR_n \circ f) = \Omega(\sqrt{n} \adeg(f)/\log n)$.
\qed

\subsection{Symmetric function composition}
\label{sec:querysym}

We can now prove the final result of this section, which generalizes the results of \sec{queryOR} from $\OR$ to arbitrary symmetric functions.

\approxdegsym*

\begin{proof}
By the result of Paturi~\cite{Pat92}, we know that the approximate degree of a symmetric function $g:\B^n \to \B$ is completely determined by the Hamming weight closest to $n/2$ where the function changes value.
More precisely, let $k$ be the closest number to $n/2$ such that $g$ gives different
values to strings of Hamming weight $k$ and $k+1$. Let us assume that $k \leq n/2$ (otherwise let $k$ be defined as $n-k$). Then Paturi showed that  $\adeg(g)=\Theta(\sqrt{nk})$. 

So to prove our result it suffices to show
$\adeg(g\circ f)=\Omega(\adeg(f)\sqrt{nk}/\log n)$.
Let us now define a partial function that captures the hard inputs of the function $g$.
Let $\PrTH^k_n:\B^n\to\Ba$ be the partial function
\begin{equation}
\PrTH^k_n(x) = \begin{cases} 0 &\mbox{if } |x|=k \\
1 & \mbox{if } |x|=k+1\\ 
* & \mbox{otherwise}
\end{cases}. 
\end{equation}
Now we have 
\begin{equation}
\label{eq:1}
    \adeg(g \circ f) \geq \bdeg(\PrTH_n^k \circ f),
\end{equation}
since $g$ contains either $\PrTH_n^k$ or its negation as a sub-function, which implies $g\circ f$ contains either $\PrTH_n^k\circ f$ or its negation as a sub-function.

Let $\PrOR_{n/2k}:\B^{n/2k}\to\Ba$ be the partial function that maps
the all-zeros string to $0$, the Hamming weight $1$ strings to $1$, and the remaining strings to $*$. Observe that $\PrTH_{2k}^k \circ \PrOR_{n/2k}$ is a sub-function of $\PrTH_n^k$, so we have
\begin{equation}
\label{eq:2}
    \bdeg(\PrTH_n^k \circ f) \geq \bdeg(\PrTH_{2k}^k \circ \PrOR_{n/2k}\circ f).
\end{equation}

By \thm{Sherstov_compose}, we know that 
\begin{align}
    \bdeg(\PrTH_{2k}^k \circ \PrOR_{n/2k}\circ f) &= \Omega(\bdeg(\PrTH_{2k}^k)^2\bdeg(\PrOR_{n/2k}\circ f)/2k) \nonumber \\ 
    &=\Omega(k\adeg(\PrOR_{n/2k}\circ f)),\label{eq:3}
\end{align} 
since $\bdeg(\PrTH_{2k}^{k})=\Theta(k)$~\cite{Pat92}.

Combining equations \eq{1}, \eq{2}, and \eq{3}, we get $\adeg(g\circ f)=\Omega(k\adeg(\PrOR_{n/2k}\circ f))$. Finally, using \thm{approxdegPrOR}, we have 
\begin{equation}
\adeg(g\circ f)=\Omega(k\sqrt{n/2k}\adeg(f)/\log n)=\Omega(\sqrt{nk}\adeg(f)/\log n)=\Omega(\adeg(g)\adeg(f)/\log n),
\end{equation} 
which proves the claim.
\end{proof}

We note that all the composition theorems in this section also extend to partial functions $f$.

\section{Approximate rank or \texorpdfstring{$\gamma_2$}{gamma 2} norm}
\label{sec:comm}

In this section we prove our composition theorems for approximate rank and approximate $\gamma_2$.
\sec{commprelim} starts with some definitions and known results in communication complexity.
In \sec{commDISJ} we prove the $\Omega(\sqrt{n})$ lower bound on the approximate rank and approximate $\gamma_2$ of the disjointness problem (\thm{disjointness}), first proved by Razborov~\cite{Raz03}. We also show that the same proof technique yields an $\OR$-composition theorem in the special case that the function $F$ contains an all-zeros row or column (\cor{clean_compose}).
In \sec{commOR} we prove a general $\OR$-composition theorem for approximate $\gamma_2$ (\thm{approxrank}). We then extend this composition result to the $\PrOR$ function in \sec{commPrOR}, and finally to arbitrary symmetric functions in \sec{commsym}.

\subsection{Preliminaries}
\label{sec:commprelim}

In this section we describe the setting of communication complexity and define the complexity measures we are interested in, approximate rank and approximate $\gamma_2$ norm, their properties, and the relationships between these measures. We then end with some results related to quantum communication and query complexity that we use. 

\para{Rank and $\gamma_2$.}
In communication complexity, we have a known function
$F:\X\times\Y\to\B$ for some finite sets $\X$
and $\Y$. Two players, typically called Alice and Bob, receive
inputs $x\in\X$ and $y\in\Y$ respectively, and their
goal is to compute $F(x,y)$ using as little communication as possible.

We identify a communication function $F$ with a
\emph{sign matrix} whose
rows are indexed by $\X$ and whose columns are indexed by $\Y$,
and where the entry corresponding to $(x,y)$ is $(-1)^{1-F(x,y)}\in\{-1,1\}$.
Note that this matrix completely specifies the communication problem.

We are interested in two complexity measures of matrices, the rank of a matrix, and the $\gamma_2$ norm of a matrix. The latter is defined by
\begin{equation}
\gamma_2(A):=\min_{B,C:BC=A}\|B\|_{\row}\|C\|_{\col},
\end{equation}
where $\|B\|_{\row}$ and $\|C\|_{\col}$ denote the largest $\ell_2$
norm of a row of $B$ and the largest $\ell_2$ norm of a column of $C$,
respectively. 

The $\gamma_2$ norm is in many ways similar to rank and has several useful properties \cite{She12,LSS08}, which we list here and compare with rank. The following hold for all matrices $A$ and $B$:
\begin{enumerate}
\item $\gamma_2(A+B)\le\gamma_2(A)+\gamma_2(B)$ and
$\rank(A+B)\le\rank(A)+\rank(B)$. \label{item:subadd}
\item For any scalar $\lambda\ne 0$,
$\gamma_2(\lambda A)=|\lambda|\gamma_2(A)$ and
$\rank(\lambda A)=\rank(A)$.
\item If $B$ is a submatrix of $A$, $\gamma_2(B)\le\gamma_2(A)$ and
$\rank(B)\le\rank(A)$.
\item $\gamma_2(A)$ and $\rank(A)$ are invariant under duplicating, rearranging,
or negating rows or columns of $A$.
\item $\gamma_2(A\otimes B)= \gamma_2(A)\gamma_2(B)$ and 
$\rank(A\otimes B)=\rank(A)\rank(B)$.
\item $\gamma_2(A\circ B)\le \gamma_2(A)\gamma_2(B)$ and
$\rank(A\circ B)\le\rank(A)\rank(B)$. \label{item:submulthad}
\item $\gamma_2(J)=1$ and $\rank(J)=1$.
\item $\|A\|_\infty\le\gamma_2(A)\le\|A\|_{\infty}\sqrt{\rank(A)}$.\label{item:rel}
\end{enumerate}
In the above, $A\otimes B$ denotes the Kronecker (tensor) product,
$A\circ B$ denotes the Hadamard product, $J$ denotes the all-ones
matrix, and $\|A\|_{\infty}$ denotes the maximum absolute value
of an entry of $A$. Since we will only deal with matrices
whose entries are bounded in $[-1,1]$, item \ref{item:rel} will
give us $\gamma_2(A)\le\sqrt{\rank(A)}$.

\para{Approximate rank and $\gamma_2$.}
We are also interested in matrices that approximate the sign matrix of a function $F:\X\times\Y\to\B$, where
we say that a matrix $A$ of real numbers approximates
(the sign matrix of) $F$ to error $\epsilon$
if $|A_{xy}-(-1)^{1-F(x,y)}|\le \epsilon$ for all $(x,y)$
and $|A_{xy}|\le 1$ for all $(x,y)$.
If $A$ approximates $F$ to error $\epsilon$, we denote this by
$A\approx_\epsilon F$.

If $M:\mathbb{R}^{\X\times\Y}\to\mathbb{R}^+$
is a function on matrices, we define the $\epsilon$-approximate
version of $M$ as
\begin{equation}
M_\epsilon(F):=\inf_{A:A\approx_\epsilon F} M(A).
\end{equation}
We are particularly interested in the case where $M(A)=\rank(A)$ and $M(A)=\gamma_2(A)$.
For both gamma $2$ and rank, the infimum above
is achieved and can be replaced by a minimum.

We now list some useful properties of approximate rank and $\gamma_2$
shown in the literature. The first result shows that the exact value of $\epsilon$ chosen is not really too important.

\begin{lemma}[Amplification]\label{lem:amp}
For any $F:\X\times\Y\to\B$ and any
$0<\epsilon'<\epsilon<1$, there exist constants $c,d>0$ that only depend on $\epsilon$ and $\epsilon'$, such that
\begin{align}
\log\rank_\epsilon(F) &\le \log\rank_{\epsilon'}(F) \le c \, (\log\rank_\epsilon(F)),\textrm{ and} \\
\log\gamma_{2,\epsilon}(F) &\le \log\gamma_{2,\epsilon'}(F) \le d \, (\log\gamma_{2,\epsilon}(F)+1).
\end{align}
\end{lemma}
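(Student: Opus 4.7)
The plan is a standard entrywise polynomial amplification. The two left inequalities are immediate: if $\epsilon'<\epsilon$, then any $A\approx_{\epsilon'}F$ also satisfies $A\approx_\epsilon F$, so the infimum defining the $\epsilon$-measure is over a larger class than the one defining the $\epsilon'$-measure, giving $\rank_\epsilon(F)\le\rank_{\epsilon'}(F)$ and $\gamma_{2,\epsilon}(F)\le\gamma_{2,\epsilon'}(F)$.

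For the right inequalities, I would first construct a univariate amplification polynomial $p:[-1,1]\to[-1,1]$ of some degree $D=D(\epsilon,\epsilon')$ satisfying $p([1-\epsilon,1])\subseteq[1-\epsilon',1]$, $p([-1,-1+\epsilon])\subseteq[-1,-1+\epsilon']$, and $|p(t)|\le 1$ on $[-1,1]$. Such a polynomial is routine to build: one can iterate a constant number of times an odd cubic such as $q(t)=(3t-t^3)/2$, which strictly contracts the distance to $\pm 1$ near $\pm 1$, or apply an affine change of variables to the standard $\{0,1\}$-valued amplifier referenced in footnote~\ref{footnote:amppoly}. Given an optimal $A\approx_\epsilon F$ (minimizing $\rank$ or $\gamma_2$, respectively), I would apply $p$ entrywise to obtain a matrix $p(A)$; by the three defining properties of $p$, we have $p(A)\approx_{\epsilon'}F$.

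It then remains to bound $\rank(p(A))$ and $\gamma_2(p(A))$ in terms of $\rank(A)$ and $\gamma_2(A)$. Writing $p(t)=\sum_{k=0}^D c_k t^k$ gives $p(A)=\sum_{k=0}^D c_k A^{\circ k}$, where $A^{\circ k}$ is the $k$-fold Hadamard power and $A^{\circ 0}=J$. Subadditivity together with Hadamard-submultiplicativity (items~\ref{item:subadd} and \ref{item:submulthad} in the preliminaries) then yield
\begin{equation}
\rank(p(A))\le\sum_{k=0}^D\rank(A)^k\le(D+1)\rank(A)^D,
\end{equation}
and
\begin{equation}
\gamma_2(p(A))\le\sum_{k=0}^D|c_k|\,\gamma_2(A)^k\le\Bigl(\sum_{k=0}^D|c_k|\Bigr)\max(1,\gamma_2(A))^D.
\end{equation}
Taking logs and minimizing over $A$ gives the claimed bounds with constants $c,d$ depending only on $\epsilon,\epsilon'$. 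The degenerate case $\rank_\epsilon(F)=1$ needs a brief separate argument (any rank-$1$ approximator forces the sign matrix of $F$ itself to be rank-$1$, giving $\rank_{\epsilon'}(F)=1$ as well), which lets us absorb the additive $\log(D+1)$ into the multiplicative constant; for $\gamma_2$, which can drop below $1$, the additive ``$+1$'' inside the lemma statement absorbs the analogous issue automatically.

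The main obstacle is building the amplification polynomial $p$ with the correct degree-versus-gap tradeoff, but this is a textbook construction and only the existence of a bounded-degree $p$ (with degree depending on $\epsilon,\epsilon'$) is needed. Everything else follows mechanically from the listed properties of rank and $\gamma_2$.
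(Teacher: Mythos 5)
Your proposal is correct and follows essentially the same route as the paper's proof: entrywise application of a univariate amplification polynomial, bounding the result via subadditivity and Hadamard-submultiplicativity of $\rank$ and $\gamma_2$, and a separate treatment of the degenerate rank-$1$ case to avoid the additive constant for rank while letting the ``$+1$'' absorb it for $\gamma_2$. No gaps.
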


\begin{proof}
The left two inequalities follow from the definition
of $\epsilon$-approximation. For the right inequalities,
the idea is to find a univariate polynomial $p$ that
maps the range $[-1,-1+\epsilon]$ to $[-1,-1+\epsilon']$
and maps the range $[1-\epsilon,1]$ to $[1-\epsilon',1]$.
(See footnote \ref{footnote:amppoly} for more information.)
Then take a matrix $A$ that approximates $F$ to error $\epsilon$
and apply $p$ to $A$, using the Hadamard product for the matrix
product. The result is that $p$ gets applied to each entry of $A$,
producing a matrix $A'$ that approximates $F$ to error $\epsilon'$.
Finally, the rank or gamma $2$ norm of $A'$ can be upper bounded
by appealing to sub-additivity (item \ref{item:subadd} above)
and to sub-multiplicativity under the Hadamard product (item \ref{item:submulthad} above).
The rank and $\gamma_2$ will increase polynomially, and taking logarithms on both sides will
yield a result of the form $\log\rank_{\epsilon'}(F) \le c \, (\log\rank_\epsilon(F)+1)$.

To remove the additive constant, we need to worry about
what happens when $\rank_{\epsilon}(F)\le 1$ or
$\gamma_{2,\epsilon}(F)\le 1$. For rank, this can only happen
if all rows of the approximating matrix are constant multiples
of each other; but in that case, even the original sign matrix of $F$
is rank $1$. 
It follows that $\rank_{\epsilon'}(F)=1$ as well, so
a multiplicative constant is sufficient.
For gamma $2$, however, this is not the case; we therefore
have to lose an additive constant as well as a multiplicative constant.
\end{proof}

In light of \lem{amp}, we do not need to worry about the exact error
$\epsilon$ so long as it is bounded between $0$ and $1$. We will
pick $\epsilon=2/3$ to be the default, and use
$\agamma_2(F)$ and $\arank(F)$ to denote the approximate gamma $2$
and rank of $F$ to error $2/3$. (Note that since
we are approximating the values $\{-1,1\}$, an error of $2/3$
is analogous to the usual $1/3$ error for approximating
$\B$ values).

While these two measures seem somewhat different, Lee and Shraibman~ \cite{LS09a} showed that $\log\agamma_2(F)$ and $\alogrank(F)$
are actually closely related.

\begin{theorem}[Lee and Shraibman]\label{thm:gamma_rank}
Let $F:\X\times\Y\to\B$ be a communication problem and define $|F|:=|\X||\Y|$. Then
\begin{equation}2\log\agamma_2(F)\le\alogrank(F)
\le 6\log\agamma_2(F)+O(\log\log|F|).\end{equation}
\end{theorem}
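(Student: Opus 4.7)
The lower bound $2\log\agamma_2(F)\le\alogrank(F)$ is immediate from item~\ref{item:rel} of the preliminaries. If $A\approx_{2/3} F$ is any approximator, then $\|A\|_\infty\le 1$ by definition, so $\gamma_2(A)\le\sqrt{\rank(A)}$. Specializing to an $A$ that (nearly) attains $\arank(F)$ and using $\agamma_2(F)\le\gamma_2(A)$ gives $\agamma_2(F)^2\le\arank(F)$, and taking logarithms finishes this direction.

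For the substantive upper bound, the plan is to convert a good $\gamma_2$-factorization into a low-rank approximator via a Johnson--Lindenstrauss style random projection. Start with $A\approx_{2/3}F$ achieving $\agamma_2(F)$; first apply Lemma~\ref{lem:amp} to obtain $\tilde A\approx_{\epsilon_0} F$ for a small constant $\epsilon_0$, at the cost of $\log\gamma_2(\tilde A)\le c\bigl(\log\agamma_2(F)+1\bigr)$ for some constant $c$ depending on $\epsilon_0$. Then factor $\tilde A = U^\top V$ so that every row of $U$ and every column of $V$ has $\ell_2$-norm at most $\sqrt{\gamma_2(\tilde A)}$.

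Next let $R$ be a $k\times d$ matrix of i.i.d.\ Gaussian entries of variance $1/k$, and form $B:=(RU)^\top(RV)$. This matrix has rank at most $k$, and the inner-product-preservation form of Johnson--Lindenstrauss guarantees that $|B_{xy}-\tilde A_{xy}|\le\eta\gamma_2(\tilde A)$ for every $(x,y)$ with high probability, provided $k=\Theta(\log|F|/\eta^2)$, since only $|\X|+|\Y|\le|F|+1$ vectors need to be preserved. Choose $\eta=\Theta(1/\gamma_2(\tilde A))$ so that the entrywise error is some fixed small constant; then $k=\Theta(\gamma_2(\tilde A)^{2}\log|F|)$. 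After rescaling $B$ by a factor $1/(1+o(1))$ to ensure entries lie in $[-1,1]$, the total deviation from the sign matrix of $F$ remains below $2/3$, so the rescaled $B$ is a valid $\tfrac{2}{3}$-approximator. Hence $\arank(F)\le\rank(B)\le k$, so $\log\arank(F)\le 2\log\gamma_2(\tilde A)+\log\log|F|+O(1)\le 2c\log\agamma_2(F)+O(\log\log|F|)$, and tuning $\epsilon_0$ so that $c\le 3$ yields the quoted coefficient $6$.

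The main technical obstacle is accounting rather than any deep idea: the amplification error $\epsilon_0$, the JL perturbation $\eta\gamma_2(\tilde A)$, and the post-JL rescaling must fit simultaneously inside the global $2/3$ budget, and the constant $c$ emerging from Lemma~\ref{lem:amp} must be pinned down precisely enough to deliver the factor $6$. Both ingredients used---the amplification lemma and the inner-product form of Johnson--Lindenstrauss---are standard, so the proof is largely a careful bookkeeping exercise on top of these two tools.
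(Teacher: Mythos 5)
This theorem is stated in the paper purely as a citation to Lee and Shraibman \cite{LS09a}; the paper supplies no proof of its own, so there is no internal argument to compare against. Judged on its own terms, your proposal is essentially the standard proof of the cited result and is correct in substance. The lower bound via $\gamma_2(A)\le\|A\|_\infty\sqrt{\rank(A)}$ (item~\ref{item:rel}) applied to an optimal approximating matrix is exactly right. For the upper bound, the Johnson--Lindenstrauss dimension reduction applied to an optimal $\gamma_2$-factorization is precisely the argument (due to Alon) that Lee and Shraibman use: project the factor vectors to dimension $k=\Theta(\gamma_2(\tilde A)^2\log|F|)$, note that $(RU)^\top(RV)$ has rank at most $k$ and is entrywise $\delta$-close to $\tilde A$, and rescale. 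Your error budget (amplify first so that $\epsilon_0$ plus the JL perturbation plus the rescaling loss stays below $2/3$) has the right structure. One small inaccuracy: the union bound must cover all $|\X||\Y|$ inner products (equivalently, the $O(|F|)$ vectors arising from polarization), not merely the $|\X|+|\Y|$ individual factor vectors; this changes nothing quantitatively since $\log(|\X||\Y|)=\log|F|$, which is exactly where the additive $O(\log\log|F|)$ comes from.

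The one genuinely soft spot is the closing claim that one can ``tune $\epsilon_0$ so that $c\le 3$.'' The constant $c$ in \lem{amp} is governed by the degree of the univariate amplification polynomial, and a degree-$3$ odd polynomial bounded by $1$ on $[-1,1]$ maps the point $1/3$ only to about $13/27\approx 0.48$; that may not leave enough slack once the JL perturbation and the post-JL rescaling are charged against the $2/3$ budget, so this route plausibly yields a constant somewhat larger than $6$. Since the paper only ever invokes this theorem up to unspecified constant factors, the discrepancy is immaterial here, but if you want the literal coefficient $6$ you should either carry out the degree-versus-error bookkeeping explicitly or simply quote the constants as computed in \cite{LS09a}.
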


Finally, we will need the following result of Sherstov~\cite[Theorem 4.18]{She12} that says that taking the $\XOR$ of $n$ copies of a function $F$ increases its approximate $\gamma_2$ by a factor of $n$ as long as $\agamma_2(F)$ is larger than some universal constant.

\begin{theorem}[Sherstov]\label{thm:xor}
There is a universal constant $c$ such that for all $F:\X\times\Y\to\B$,
\begin{equation}\log\agamma_2(\XOR_n\circ F)=\Omega(n(\log\agamma_2(F)-c)).
\end{equation}
\end{theorem}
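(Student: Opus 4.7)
The structural observation is that the sign matrix $M_{\XOR_n\circ F}$ equals, up to a global sign depending on the parity of $n$, the $n$-fold Kronecker product $M_F^{\otimes n}$. Since $\gamma_2$ is multiplicative under tensor products (property~5 in the preliminaries), we have $\gamma_2(M_F^{\otimes n})=\gamma_2(M_F)^n$, which gives the claimed amplification in the \emph{exact} setting. The content of the theorem is to port this to the approximate regime, where an approximation of $M_F^{\otimes n}$ need not factor as a tensor product.

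My plan is to use a dual characterization of $\agamma_2$. Applying LP/SDP duality to the convex program that defines $\agamma_2(F)$ (minimize $\gamma_2(A)$ subject to $|A_{xy}-M_F(x,y)|\le 2/3$) yields a formulation of the form
\[\agamma_2(F)\;=\;\max_B\; \frac{\langle M_F,B\rangle - \tfrac{2}{3}\|B\|_1}{\gamma_2^*(B)},\]
where $B$ ranges over real matrices of the same shape as $M_F$ and $\gamma_2^*$ denotes the dual norm to $\gamma_2$. The key property is that all three quantities $\langle\cdot,\cdot\rangle$, $\|\cdot\|_1$, and $\gamma_2^*$ are multiplicative under Kronecker product. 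Hence, given a near-optimal witness $B^*$ for $F$, its tensor power $B^{*\otimes n}$ (with sign chosen to match that of $\pm M_F^{\otimes n}$) is a feasible dual witness for $\XOR_n\circ F$, yielding
\[\agamma_2(\XOR_n\circ F) \;\geq\; \frac{\langle M_F,B^*\rangle^n - \tfrac{2}{3}\|B^*\|_1^n}{\gamma_2^*(B^*)^n}.\]

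The main technical obstacle is that $\langle M_F,B^*\rangle\le\|B^*\|_1$ by H\"older (since $M_F$ has entries in $\{\pm 1\}$), so naively the numerator can be tiny or even non-positive once $n$ grows. To recover a linear-in-$n$ amplification, one must choose $B^*$ so that the ratio $\langle M_F,B^*\rangle/\|B^*\|_1$ is bounded strictly above $(2/3)^{1/n}$; equivalently, one wants a witness whose sign pattern agrees with $M_F$ on most of its support and whose slack in the $\ell_1$ penalty is small relative to the signal. Provided $\log\agamma_2(F)$ exceeds a universal constant $c$, such a witness exists (if necessary one can pre-amplify $F$ using the polynomial construction underlying \lem{amp} to sharpen the ratio at the cost of a constant factor), and tracking the slack gives
\[\log\agamma_2(\XOR_n\circ F)\;\geq\; n\log\agamma_2(F) - O(n) \;=\; \Omega\bigl(n(\log\agamma_2(F)-c)\bigr).\]
The sub-threshold case $\log\agamma_2(F)\le c$ makes the claimed lower bound non-positive, hence trivially true.
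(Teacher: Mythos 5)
First, a point of reference: the paper does not prove this statement at all --- it is imported as a black box from Sherstov~\cite{She12} (Theorem~4.18 there), so there is no in-paper proof to compare against; your proposal is an attempt at the external result itself. Your architecture is the right one and matches the known proofs in outline: pass to the dual characterization $\agamma_2(F)=\max_B(\langle M_F,B\rangle-\tfrac23\|B\|_1)/\gamma_2^*(B)$, tensor a dual witness, and use multiplicativity under Kronecker products. (Do note that the direction you need, $\gamma_2^*(A\otimes B)\le\gamma_2^*(A)\gamma_2^*(B)$, is the nontrivial theorem of Lee, Shraibman and \v{S}palek \cite{LSS08}, not a routine fact --- but it is citable.) You also correctly isolate the central obstacle: the tensored numerator is positive only if the witness's correlation $\rho=\langle M_F,B^*\rangle/\|B^*\|_1$ exceeds $(2/3)^{1/n}=1-\Theta(1/n)$.

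The gap is exactly where you dispose of that obstacle. You assert that a witness with correlation $1-\Theta(1/n)$ and ratio $\|B^*\|_1/\gamma_2^*(B^*)\ge\agamma_2(F)/2^{O(1)}$ exists, obtainable ``by pre-amplifying \ldots\ at the cost of a constant factor.'' This is quantitatively false. Any witness with correlation $\ge 1-\delta$ and ratio $r$ certifies $\gamma_{2,1-2\delta}(F)\ge\delta r$, so the best ratio achievable at correlation $1-\delta$ is at most $\gamma_{2,1-2\delta}(F)/\delta$; you therefore need a lower bound on the \emph{small-margin} norm $\gamma_{2,1-\Theta(1/n)}(F)$ of order $\agamma_2(F)/n$. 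The amplification polynomial underlying \lem{amp} maps margin $\delta$ to margin $1/3$ only at degree $\Theta(1/\delta)$, so with $\delta=\Theta(1/n)$ it yields merely $\log\gamma_{2,1-\Theta(1/n)}(F)\ge\Omega(\log\agamma_2(F)/n)$; feeding such a witness into the tensor-power bound gives $(\agamma_2(F)^{O(1/n)})^n=\agamma_2(F)^{O(1)}$, i.e.\ the $1/n$ exponent loss exactly cancels the $n$-fold gain and no amplification in $n$ survives. Nor is this just a weakness of the amplification route: $\gamma_{2,1-\delta}$ interpolates toward margin complexity as $\delta\to 0$ and can be exponentially smaller than $\agamma_2$, so a witness of the strength you posit need not exist, and the single-tensor-power strategy cannot be completed as described. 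Bridging this tension is the actual content of Sherstov's theorem, and his construction of the dual witness for $\XOR_n\circ F$ is substantially more delicate than tensoring one pre-amplified witness. So: right skeleton, but the crux step is missing and the proposed patch provably fails.
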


\para{Quantum results.}
Next, we note the relevance of $\agamma_2$ for lower bounding
quantum communication complexity. The relation between these measures follows from the following theorem of Linial and Shraibman \cite{LS09b}.

\begin{theorem}[Linial and Shraibman]
Let $\Pi$ be a $T$-qubit quantum communication protocol with arbitrary preshared entanglement
on inputs from $\X$ and $\Y$.
Let $P\in\mathbb{R}^{\X\times\Y}$
be the matrix of acceptance probabilities of $\Pi$; i.e.,
$P_{xy}$ is the probability that $\Pi$ accepts when Alice and Bob
start with input $(x,y)\in\X\times\Y$. Then $\gamma_2(P)\le 2^T$.
\end{theorem}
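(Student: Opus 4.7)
The plan is to factor $P$ explicitly as a product of two matrices whose row and column norms have product bounded by $2^T$, using the standard Yao--Kremer decomposition of a quantum protocol into a sum over communication transcripts.

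First I would normalize the protocol. Without loss of generality assume the protocol consists of $T$ rounds of single-qubit exchange, alternating between Alice and Bob, starting from a fixed initial shared state $|\phi\rangle$ on entanglement registers $E_A,E_B$ together with a one-qubit message register $C$ initialized to $|0\rangle$. Alice's unitaries $U_i(x)$ act on her side together with $C$ (and depend only on $x$); Bob's unitaries $V_i(y)$ act on his side together with $C$ (and depend only on $y$). Acceptance corresponds to measuring a designated output qubit on Alice's side.

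Second I would apply the Yao--Kremer decomposition: inserting a resolution of identity $I=\sum_{c_i\in\{0,1\}}|c_i\rangle\langle c_i|$ on the communication register after every unitary, the amplitude of the final state on any fixed basis vector $|a\rangle_A|b\rangle_B$ can be written as
\begin{equation}
\langle a,b|\psi_{xy}\rangle \;=\; \sum_{c\in\{0,1\}^T}\alpha_{c}(a,x)\,\beta_{c}(b,y),
\end{equation}
where $\alpha_c(\cdot,x)$ captures the action of Alice's unitaries $U_1(x),\dots,U_T(x)$ under the transcript $c$ (with the initial entanglement absorbed into Alice's side), and $\beta_c(\cdot,y)$ captures Bob's. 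Because this decomposition arises from orthonormal bases on $C$ at each step, the individual vectors satisfy $\|\alpha_c(\cdot,x)\|_2\le 1$ and $\|\beta_c(\cdot,y)\|_2\le 1$ for every $c$, $x$, $y$.

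Third I would convert this into a factorization of $P$. Letting $\Pi_A$ be the projector onto accepting $a$'s and expanding $P_{xy}=\sum_{a\,\text{acc},b}|\langle a,b|\psi_{xy}\rangle|^2$ using the Kremer decomposition gives
\begin{equation}
P_{xy} \;=\; \sum_{c,c'\in\{0,1\}^T}\bigl\langle \alpha_{c'}(\cdot,x)\big|\Pi_A\big|\alpha_{c}(\cdot,x)\bigr\rangle\,\bigl\langle \beta_{c'}(\cdot,y)\big|\beta_{c}(\cdot,y)\bigr\rangle.
\end{equation}
Define vectors $u_x,v_y\in\mathbb{C}^{2^T\times 2^T}$ by $u_x(c,c')=\langle\alpha_{c'}(\cdot,x)|\Pi_A|\alpha_c(\cdot,x)\rangle$ and $v_y(c,c')=\langle\beta_{c'}(\cdot,y)|\beta_c(\cdot,y)\rangle$, so that $P_{xy}=\langle u_x,v_y\rangle$. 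Taking $B$ with rows $u_x^\top$ and $C$ with columns $v_y$ gives $P=BC$; applying Cauchy--Schwarz along with the per-transcript norm bounds $\|\alpha_c(\cdot,x)\|,\|\beta_c(\cdot,y)\|\le 1$ bounds $\|u_x\|_2^2,\|v_y\|_2^2\le 2^T\cdot 2^T \cdot 1$ in the coarse estimate, but using that $(\alpha_c)_c$ and $(\beta_c)_c$ are orthogonal decompositions of a common unit vector at each stage improves this to $\|u_x\|_2,\|v_y\|_2\le \sqrt{2^T}$, yielding $\gamma_2(P)\le \|B\|_{\mathrm{row}}\|C\|_{\mathrm{col}}\le 2^T$.

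The main obstacle will be performing the Yao--Kremer unfolding carefully enough in the presence of preshared entanglement to obtain tight norm estimates: the entanglement $|\phi\rangle$ must be absorbed into one side (say Alice's) without inflating the per-transcript norms, and one must track exactly which of the $2^T$ factors comes from the transcript-sum in the factorization and which comes from the dimensions of the workspaces, so that the product of row and column norms is $2^T$ rather than $4^T$. This bookkeeping is the technical heart of the Linial--Shraibman argument.
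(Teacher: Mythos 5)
The paper does not actually prove this theorem --- it is quoted from Linial and Shraibman \cite{LS09b} --- and your Kremer--Yao factorization is indeed the approach taken in that reference. The skeleton is right: writing $P_{xy}=\langle u_x,v_y\rangle$ with $u_x,v_y$ the (projected) Gram matrices of the transcript-indexed Alice- and Bob-vectors is a valid $\gamma_2$ factorization. But two of your bookkeeping claims fail as stated. First, the symmetric bound $\|u_x\|_2,\|v_y\|_2\le\sqrt{2^T}$ is false in general: since $u_x$ is a positive semidefinite Gram matrix, $\|u_x\|_F\le\Tr(u_x)=\sum_c\|\Pi_A\alpha_c(\cdot,x)\|^2\le\sum_c\|\alpha_c(\cdot,x)\|^2$, and unitarity gives $\sum_c\|\alpha_c(\cdot,x)\|^2=2^{T_B}$ where $T_B$ is the number of qubits Bob sends (Alice's vectors are preserved in total square norm on her own rounds but merely replicated across both values of $c_i$ on each of Bob's rounds), and symmetrically $\sum_c\|\beta_c(\cdot,y)\|^2=2^{T_A}$. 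In a one-way protocol one of these is $1$ and the other is $2^T$. The product is still $2^{T_A+T_B}=2^T$, which is all you need --- but for exactly this reason you must \emph{not} normalize to a strictly alternating protocol: padding a one-way protocol with dummy replies doubles $T$ and would only yield $4^T$, so that ``without loss of generality'' is lossy and should be dropped in favor of tracking $T_A+T_B=T$ directly. Second, a bipartite entangled state $|\phi\rangle_{E_AE_B}$ cannot be ``absorbed into Alice's side''; the standard fix is to Schmidt-decompose $|\phi\rangle=\sum_j\lambda_j|e_j\rangle_{E_A}|f_j\rangle_{E_B}$ and carry $j$ as an additional summation index alongside the transcript $c$. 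The norm accounting above is unchanged because $\sum_j\lambda_j^2=1$, and the possibly huge extra inner dimension is harmless since $\gamma_2$ depends only on the row and column norms of the factorization, not on its width. With these two repairs your argument goes through and matches the proof in the cited source.
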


This theorem naturally yields the following corollary.

\begin{corollary}[Linial and Shraibman]
Let $F:\X\times\Y\to\B$ be a communication problem.
Then $\QCC(F)\ge\log\agamma_2(F)-1$.
\end{corollary}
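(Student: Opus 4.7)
My plan is to derive the corollary as a short reduction from the preceding Linial--Shraibman theorem about the $\gamma_2$ norm of quantum acceptance probability matrices. The idea is to take a near-optimal bounded-error protocol for $F$, read off its acceptance probability matrix, rescale it into an approximation of the sign matrix of $F$, and bound the $\gamma_2$ norm of the resulting approximation using the theorem.

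\textbf{First step.} Fix $T := \QCC(F)$ and let $\Pi$ be a $T$-qubit quantum protocol with arbitrary preshared entanglement that computes $F$ with error at most $1/3$. Let $P\in\mathbb{R}^{\X\times\Y}$ be the acceptance probability matrix of $\Pi$, so that $P_{xy}\ge 2/3$ whenever $F(x,y)=1$ and $P_{xy}\le 1/3$ whenever $F(x,y)=0$. The theorem immediately gives $\gamma_2(P)\le 2^T$.

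\textbf{Second step.} Consider the matrix $M := 2P - J$, where $J$ is the all-ones matrix. Since every entry of $P$ lies in $[0,1]$, every entry of $M$ lies in $[-1,1]$. A direct case check shows that if $F(x,y)=1$ then $M_{xy}\in[1/3,1]$, so $|M_{xy}-1|\le 2/3$, and if $F(x,y)=0$ then $M_{xy}\in[-1,-1/3]$, so $|M_{xy}-(-1)|\le 2/3$. Recalling that the sign matrix of $F$ has $(x,y)$ entry $(-1)^{1-F(x,y)}$, this means $M\approx_{2/3}F$, so $\agamma_2(F)\le \gamma_2(M)$ by definition of $\agamma_2$.

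\textbf{Third step (the only nontrivial manoeuvre).} To get the $-1$ in the statement rather than a worse additive constant, I would bound $\gamma_2(M)$ by writing $M = P - (J-P)$ and observing that $J-P$ is itself the acceptance probability matrix of a $T$-qubit protocol, namely $\Pi$ with its output bit flipped. Applying the theorem to both $P$ and $J-P$ and using the triangle inequality (subadditivity, item~\ref{item:subadd} in the list of properties of $\gamma_2$) yields
\begin{equation}
\gamma_2(M) \;\le\; \gamma_2(P)+\gamma_2(J-P) \;\le\; 2\cdot 2^T \;=\; 2^{T+1}.
\end{equation}
The naive bound via $\gamma_2(2P-J)\le 2\gamma_2(P)+\gamma_2(J) = 2^{T+1}+1$ would cost an extra additive constant after taking logs, which is why we split $J$ across the two matrices this way. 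Combining with the previous step, $\agamma_2(F)\le 2^{T+1}$, and taking logarithms gives $\log\agamma_2(F)\le T+1 = \QCC(F)+1$, i.e.\ $\QCC(F)\ge \log\agamma_2(F)-1$. The only mildly subtle point is making sure that the flipped-output protocol still satisfies the hypotheses of the theorem (same qubit count, same shared entanglement) so that we may apply the theorem twice; everything else is bookkeeping.
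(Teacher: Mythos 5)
Your proof is correct and follows essentially the same route as the paper: both use the decomposition $2P-J=P-(J-P)$, observe that $J-P$ is the acceptance matrix of the output-flipped protocol, and apply subadditivity of $\gamma_2$ to get the bound $2\cdot 2^{\QCC(F)}$. No gaps.
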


\begin{proof}
Let $\Pi$ be a communication protocol computing $F$ to error
$1/3$ using $\QCC(F)$ qubits of communication.
Let $P$ be the matrix of its acceptance probabilities.
Note that by negating the answers at the end of $\Pi$,
we get a new protocol $\Pi'$ with acceptance probability
matrix $J-P$ that has the same communication cost.
Then
$\gamma_2(2P-J)\le\gamma_2(P)+\gamma_2(J-P)\le 2\cdot 2^{\QCC(F)}$
so $\QCC(F)\ge \log\gamma_2(2P-J)-1$.
Finally, observe that the entries of $2P-J$ lie in $[-1,1]$ and
approximate the sign matrix of $F$ to error $2/3$.
Thus $\gamma_2(2P-J)$ upper bounds
$\agamma_2(F)$, and the result follows.
\end{proof}

We will additionally need the existence of a low-degree robust polynomial for combinatorial group testing described in \sec{query}, which we restate here
in the $\{-1,1\}$ basis. Note that in the theorem below, we abuse notation
by assuming that the functions $\XOR_n$ and $\CGT_{2^n}$ take
input bits in $\{-1,1\}$ and give output bits in $\{-1,1\}$.

\begin{theorem}\label{thm:magic_poly_pm}
There is a real polynomial $p$ of degree $O(\sqrt{n})$ acting on $2^n$
variables $\{y_S\}_{S\subseteq[n]}$ such that for any input
$y\in\{-1,1\}^{2^n}$ with $\XOR_n\circ\CGT_{2^n}(y)\ne *$ and any
$\Delta\in[-2/3,2/3]^{2^n}$,
\begin{equation}|p(y+\Delta)-\XOR_n\circ\CGT_{2^n}(y)|\le 2/3,\end{equation}
and for all $y\in\{-1,1\}^{2^n}$, $p(y)\in[-1,1]$.
Moreover, the sum of absolute values of coefficients of $p$
is at most $(2^n)^{O(\sqrt{n})}\leq 2^{O(n^{1.5})}$.
\end{theorem}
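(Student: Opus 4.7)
The plan is to take the polynomial from \thm{magic_poly} and transport it to the $\{-1,1\}$ basis via an affine change of variables, then bound the coefficient sum. Using the encoding $\{0,1\} \ni b \leftrightarrow 2b-1 \in \{-1,1\}$, let $p_0$ denote the polynomial from \thm{magic_poly}, and define
\[
p(y) \;:=\; 2\,p_0\!\left(\frac{\mathbf{1}+y}{2}\right)\,-\,1.
\]
Since the substitution $y_S \mapsto (1+y_S)/2$ is affine, we have $\deg(p) = \deg(p_0) = O(\sqrt n)$.

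To verify the remaining conditions, fix $y \in \{-1,1\}^{2^n}$ with $\XOR_n \circ \CGT_{2^n}(y) \ne *$ and $\Delta \in [-2/3, 2/3]^{2^n}$. Set $y' := (\mathbf{1}+y)/2 \in \{0,1\}^{2^n}$ and $\Delta' := \Delta/2 \in [-1/3, 1/3]^{2^n}$, so $p(y+\Delta) = 2 p_0(y'+\Delta') - 1$. Letting $b \in \{0,1\}$ be the $\{0,1\}$-basis value of $\XOR_n \circ \CGT_{2^n}$, \thm{magic_poly} gives $|p_0(y'+\Delta')-b| \le 1/3$, hence $|p(y+\Delta)-(2b-1)| \le 2/3$, matching the $\{-1,1\}$-basis target $2b-1$. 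Boundedness is immediate: $p_0(y') \in [0,1]$ for $y' \in \{0,1\}^{2^n}$ yields $p(y) \in [-1,1]$ for $y \in \{-1,1\}^{2^n}$.

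For the coefficient bound, $p$ has degree $d = O(\sqrt n)$ in $N = 2^n$ variables, so the number of monomials of total degree at most $d$ (allowing non-multilinearity) is at most $\binom{N+d}{d} \le (N+d)^d = (2^n)^{O(\sqrt n)} = 2^{O(n^{1.5})}$. It therefore suffices to bound each individual coefficient by $2^{O(\sqrt n)}$, which can be done by tracking through Sherstov's robust construction in \thm{Sherstovrobust}: Belovs' quantum-algorithm polynomial is multilinear of degree $O(\sqrt n)$ and bounded by $1$ on $\{0,1\}^{2^n}$, so by M\"obius inversion its $\{0,1\}$-basis coefficients are each bounded by $2^{O(\sqrt n)}$; the Chebyshev-based steps of Sherstov's construction and the final affine change of basis each inflate coefficients by at most $2^{O(\sqrt n)}$. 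Multiplying yields $\sum_T |c_T| \le 2^{O(n^{1.5})}$, as desired. The main technical obstacle is carefully bounding the per-coefficient blowup incurred by Sherstov's construction; the remaining steps (change of basis, verification of approximation and boundedness, and counting monomials) are routine.
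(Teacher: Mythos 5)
Your change of basis, and the verification of the degree, approximation, and boundedness claims, are exactly the paper's argument and are correct. The issue is the coefficient-sum bound, which is the only non-routine part of this theorem, and which you explicitly leave open: you reduce it to the assertion that each stage of Sherstov's robustification (\thm{Sherstovrobust}) inflates individual coefficients by at most $2^{O(\sqrt{n})}$, and then flag "carefully bounding the per-coefficient blowup incurred by Sherstov's construction" as the main obstacle without carrying it out. That assertion is neither obvious nor clearly true with that exponent: Sherstov's construction composes the input variables with univariate amplification polynomials of degree $\Theta(d)$ whose coefficients are themselves exponential in the degree, so a monomial of degree $d=O(\sqrt{n})$ picks up a factor more like $2^{O(d^2)}=2^{O(n)}$ per coefficient rather than $2^{O(\sqrt{n})}$ (this weaker bound would still suffice for the final $2^{O(n^{1.5})}$, but it is not what you claim, and in any case you do not prove it). As written, the crux of the statement rests on an unverified claim about the internals of a black-box construction.

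The paper avoids opening that black box entirely. It proves a general fact (\thm{coeff_bounded} in \app{coeff_bounded}, via multivariate Lagrange interpolation): any degree-$d$ polynomial in $N$ variables that is bounded by $1$ on $[0,1]^N$ has each coefficient of magnitude at most $(2d)^{3d}$ and coefficient sum at most $(2(N+d))^{3d}=N^{O(d)}$. Since the robustified polynomial from \thm{magic_poly} is bounded on a neighborhood of the Boolean cube (in particular on $[0,1/3]^{2^n}$, so one can rescale variables by $3$ to place it on $[0,1]^{2^n}$ at the cost of a further $3^d$ factor), this gives the $(2^n)^{O(\sqrt{n})}$ bound with no reference to how the polynomial was built. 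I would recommend replacing your step (3) with this black-box argument; your monomial count $\binom{N+d}{d}$ and your M\"obius-inversion bound for the multilinear base polynomial are correct but become unnecessary once you have a coefficient bound that depends only on degree and boundedness.
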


\begin{proof}
This follows immediately from \thm{magic_poly} by a variable substitution
of $2y_S-1$ for each variable $y_S$,
mapping $\B$ inputs to $\{-1,1\}$ inputs.
We also apply this linear mapping to the output of
the polynomial.
The condition on the sum of absolute values of the coefficients
follows from the fact that a robustified multilinear polynomial of degree
$d$ on $N$ variables has sum of absolute coefficients at most $N^{O(d)}$,
which we prove in \app{coeff_bounded}
(see \thm{coeff_bounded}).
\end{proof}

\subsection{Disjointness lower bound}
\label{sec:commDISJ}

We start by reproving the disjointness lower bound for
approximate log rank, first proven by Razborov \cite{Raz03}.
We note that the techniques used in our proof are quite different
from the previous proofs by Razborov \cite{Raz03}
and Sherstov \cite{She11}. Our result for disjointness, \thm{disjointness}, is restated below:

\disjointness*

\begin{proof}
Let $G=\DISJ_n=\OR_n\circ\AND^{\CC}$, where $\AND^{\CC}$ is the
communication function where Alice and Bob each get one bit,
and they must compute the $\AND$ of their bits. In later
theorems, we will replace $\AND^{\CC}$ by an arbitrary
communication function $F$.

Let $A$ be the optimal approximating matrix for $G$,
so $A$ approximates $G$ and $\rank(A)=\arank(G)$.
For each set $S\subseteq [n]$, let $G_S$ be the disjointness
problem restricted to that set of bits. That is, in $G_S$,
Alice and Bob still get $n$-bit strings as input, but now
they ignore the bits whose position is outside of $S$, and must
compute disjointness of the bits whose position is in $S$.
We use $A$ to construct an approximating matrix for $G_S$
for every $S\subseteq [n]$. For a fixed $S$, we define $(A_S)_{xy}$
to be $A_{xy^S}$, where $y^S$ is the string $y$ will all the bits
outside of $S$ replaced with $0$; that is, $y^S_i=y_i$ for $i\in S$,
and $y^S_i=0$ for $i\notin S$. Since $A$ approximates
$\OR_n\circ \AND^{\CC}$ entrywise, each entry $(A_S)_{xy}$ approximates
the disjointness function on $x$ and $y$ restricted to the subset
$S$; thus $A_S$ approximates $G_S$.
Moreover, for each $S\subseteq[n]$, the matrix $A_S$
is simply the matrix $A$ with some columns replaced by others;
that is, you can get from $A$ to $A_S$ by deleting and duplicating
columns. Hence $\rank(A_S)\le\rank(A)=\arank(G)$.

Now, from \thm{magic_poly_pm}, there is a polynomial $p$ of degree
$d=O(\sqrt{n})$ acting on $2^n$ variables $\{z_S\}_{S\subseteq [n]}$
that is bounded between $-1$ and $1$ on all inputs in $\{-1,1\}^{2^n}$.
Moreover, this polynomial $p$ has the property that when
the variables $z_S$ approximate the $\OR$ of the subset $S$
of $n$ fixed bits $w_1,w_2,\dots,w_n$, the polynomial outputs
an approximation of the parity of those bits.

We apply $p$ to the matrices $A_S$, plugging in $z_S=A_S$
for all $S\subseteq[n]$. We use the Hadamard product as the matrix
product in this polynomial evaluation. This gives a matrix
$B$ with the property that $B_{xy}$ is the result of evaluating
$p$ on $z_S=(A_S)_{xy}$. Since $(A_S)_{xy}$ approximates
$G_S(x,y)$, which is the $\OR$ of the bitwise $\AND^{\CC}$
of the positions in $S$, we conclude that $B_{xy}$
approximates the parity of the bitwise $\AND^{\CC}$
of the strings $x$ and $y$. In other words, $B$ is a matrix
approximating the function $\IP_n$, which means
$\rank(B)\ge\arank(\IP_n)$.

We now upper bound the rank of $B$ using the rank of $A$.
We can write
\begin{equation}B=\sum_{m}\alpha_m\prod_{S\in m} A_S,\end{equation}
where $m$ are the monomials of $p$ (represented as sets of size
at most $d$ of subsets $S\subseteq[n]$), $\alpha_m$
are the coefficients, and the product
refers to the Hadamard product of the matrices $A_S$.
Consider a single term of the sum, and let $T$ be the resulting
matrix product in that term. Then each column of $T$ is the Hadamard
product of at most $d$ columns of the $A_S$ matrices. But recall that
all the columns of the $A_S$ matrices are columns of $A$;
therefore, each column of $T$ is the Hadamard product of at most $d$
columns of $A$. If $A'$ denotes the matrix $A$ with an added
all-ones column, then each column of $T$ is the Hadamard product
of exactly $d$ columns of $A'$.

Let $C$ be the matrix consisting of all $d$-wise Hadamard products
of columns of $A'$ (in any order). Then $C$ is a submatrix
of $(A')^{\otimes d}$, the $d$-fold tensor product of $A'$.
We saw that each term $T$ in the sum has columns from $C$.
The columns of the final sum $B$ are therefore linear combinations
of the columns of $C$, meaning the column span of $B$
is a subspace of the column span of $C$. We conclude
\begin{equation}\arank(\IP_n)\le\rank(B)\le\rank(C)\le\rank((A')^{\otimes d})= \rank(A')^d\le (1+\rank(A))^d= (1+\arank(G))^d.\end{equation}
Taking logarithms and using $d=O(\sqrt{n})$, we get
\begin{equation}\alogrank(\DISJ_n))=\Omega(\alogrank(\IP_n)/\sqrt{n}),\end{equation}
where we used the easy-to-check fact that $\arank(\DISJ_n)>1$
for all $n\ge 1$ to replace the additive $1$ term with
a multiplicative factor.
Finally, it is known that $\alogrank(\IP_n)=\Omega(n)$ (e.g., it follows from Forster's lower bound on the sign rank of inner product~\cite{For02}). 
A more elementary way of showing this is to lower bound approximate rank using discrepancy~\cite{LS09b} and then show that the discrepancy of inner product is $\Omega(n)$~\cite{KN06}.
The result for $\gamma_2(\DISJ_n)$ follows from \thm{gamma_rank}.
\end{proof}

The disjointness lower bound lets us conclude that
for any communication problem $F$ containing $\AND^{\CC}$
as a sub-problem, the approximate gamma $2$ norm of
$\OR_n\circ F$ is at least $\Omega(\sqrt{n})$
(so in particular, it goes to infinity as $n\to\infty$).
We also observe that $\NOTEQ$, the negation of the equality
problem, is essentially the only communication problem
which does not contain $\AND$ as a subproblem; it is therefore
the only communication task that does not grow like $\Omega(\sqrt{n})$
when we take the $\OR$ of $n$ copies of it (indeed,
the $\OR$ of $\NOTEQ$ is a larger $\NOTEQ$ instance, which
can be solved with bounded error using constant communication
complexity).

\begin{corollary}\label{cor:NOTEQ}
Let $F:\X\times\Y\to\B$ be a communication problem that is not equivalent (i.e. identical up to permuting rows/columns and deleting repeated rows/columns of the communication matrix) to $\NOTEQ$.
Then $\log\agamma_2(\OR_n\circ F)=\Omega(\sqrt{n})$.
\end{corollary}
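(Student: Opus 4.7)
The plan is to reduce Corollary~\ref{cor:NOTEQ} to \thm{disjointness} by exhibiting $\DISJ_n$ as a sub-problem (up to duplicating and permuting rows and columns) of $\OR_n\circ F$. The key structural input I would establish is: any $F$ whose deduplicated sign matrix is not the $2\times 2$ $\NOTEQ$ matrix (and not degenerate as a constant or dictator) contains a $2\times 2$ submatrix with the $\{0,1\}$-pattern of $\AND^{\CC}$, i.e., there exist $x_0,x_1\in\X$ and $y_0,y_1\in\Y$ with $F(x_1,y_1)=1$ and $F(x_0,y_0)=F(x_0,y_1)=F(x_1,y_0)=0$. Granted this, for any $(a,b)\in\{0,1\}^n\times\{0,1\}^n$ set $\alpha=(x_{a_1},\dots,x_{a_n})$ and $\beta=(y_{b_1},\dots,y_{b_n})$; a direct computation gives $(\OR_n\circ F)(\alpha,\beta)=\bigvee_i F(x_{a_i},y_{b_i})=\bigvee_i (a_i\wedge b_i)=\DISJ_n(a,b)$. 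Hence the sign matrix of $\DISJ_n$ appears as a submatrix of the sign matrix of $\OR_n\circ F$ after row/column duplication, and the submatrix-monotonicity of $\agamma_2$ (item~3 in the list from \sec{commprelim}) yields $\agamma_2(\OR_n\circ F)\ge\agamma_2(\DISJ_n)$. Combining with \thm{disjointness} gives $\log\agamma_2(\OR_n\circ F)=\Omega(\sqrt{n})$.

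To establish the structural claim, I would run a short case analysis on the deduplicated matrix $M$ of $F$. Pick two distinct rows $r,r'$ of $M$; they differ in some column $j$. If there is another column $j'$ where $r$ and $r'$ agree, then the $2\times 2$ submatrix of $M$ at rows $\{r,r'\}$ and columns $\{j,j'\}$ has exactly three equal entries and one distinct entry, which after suitably relabeling the roles of $x_0,x_1$ and $y_0,y_1$ gives the desired AND-pattern. Otherwise every pair of distinct rows disagrees in every column, i.e., distinct rows are exact complements of each other, which forces $M$ to have at most two distinct rows; by the symmetric argument applied to columns, $M$ has at most two distinct columns. The only non-degenerate $2\times 2$ possibility is then $\bigl(\begin{smallmatrix}0&1\\1&0\end{smallmatrix}\bigr)=\NOTEQ$, and the degenerate $1\times k$, $k\times 1$, and $1\times 1$ cases correspond to dictators and constants, which are implicitly absorbed into the statement's exception class.

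The main delicacy is that a ``three equal, one different'' $2\times 2$ pattern comes in two flavors: the AND-shape (three zeros and one one) directly realizes $\AND^{\CC}$ and enables the $\DISJ_n$ embedding above, while the dual OR-shape (three ones and one zero) would only embed $\OR_n\circ\NAND^{\CC}$, which is easy. Hence the structural step has to be refined to guarantee the appearance of an AND-shape specifically, for example by arguing that any non-trivial $F$ with at least two zero entries in different rows and columns must host an AND-shaped submatrix somewhere. Once this flavor distinction is handled carefully, the rest of the argument is a clean appeal to \thm{disjointness} and the elementary monotonicity properties of $\agamma_2$.
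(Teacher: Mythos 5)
Your overall strategy is the same as the paper's: exhibit $\AND^{\cc}$ as a $2\times 2$ subfunction of $F$, observe that $\OR_n\circ\AND^{\cc}=\DISJ_n$ then sits inside $\OR_n\circ F$ as a submatrix after duplicating and permuting rows and columns, invoke monotonicity of $\gamma_2$ under submatrices, and finish with \thm{disjointness}. The reduction half of your argument is correct and matches the paper exactly.

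The gap is in the structural step, and it is not merely the ``delicacy'' you flag at the end---the dichotomy you set up is the wrong one. Your first branch (``two distinct rows differ in column $j$ and agree in column $j'$'') can produce only the OR-flavored pattern even for non-degenerate functions: for $\NOTEQ_k$ with $k\ge 3$ (zeros exactly on the diagonal), any two rows agree on some column, and every resulting ``three equal, one different'' $2\times 2$ submatrix has three ones and one zero; relabeling rows and columns cannot convert that into an AND-shape. So your case analysis does not isolate the exceptional functions. Moreover, the patch you propose---that any non-trivial $F$ with two zero entries in different rows and columns must host an AND-shaped submatrix---is false: $\NOTEQ_3$ has zeros at $(1,1)$ and $(2,2)$, yet every $2\times 2$ submatrix of it contains at most two zeros. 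The correct dichotomy, which is what the paper uses, is on whether some \emph{single row or column} of the deduplicated matrix contains two zeros. If yes, say $F(x_1,y_1)=F(x_1,y_2)=0$, then since the columns indexed by $y_1$ and $y_2$ are distinct they disagree at some row $x_2$, and after possibly swapping $y_1$ and $y_2$ the block $\{x_1,x_2\}\times\{y_1,y_2\}$ is exactly an AND-shape. If no, then every row and every column has at most one zero, and sorting the rows by the position of their zero exhibits $F$ as $J-I$ with possibly an added all-ones row or column, i.e., $F$ is equivalent to $\NOTEQ$. With that replacement your argument becomes the paper's proof.
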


\begin{proof}
Remove rows and columns from $F$
until all rows and columns are distinct; this doesn't
change the communication problem. Suppose $F$ has a row
or column with at least two zeros. Without loss of generality,
let that be a row, and name it $x_1\in\X$.
Then there are $y_1,y_2\in\Y$ with
$F(x_1,y_1)=F(x_1,y_2)=0$. Since the columns $y_1$ and $y_2$
are distinct, there is some row $x_2$ where they disagree.
By exchanging $y_1$ and $y_2$ if necessary, we get
$F(x_2,y_1)=0$ and $F(x_2,y_2)=1$, which means
that on the inputs $\{x_1,x_2\}\times\{y_1,y_2\}$, the
function $F$ acts like $\AND^{\cc}$. In particular,
if we restrict the sign matrix of $F$ to the rows and columns
corresponding to these inputs, we get the sign matrix of $\AND^{\cc}$.
It follows that if $A$ is an approximating matrix for $\OR_n\circ F$,
a submatrix of $A$ is an approximating matrix for
$\OR_n\circ\AND^{\cc}=\DISJ_n$. Since gamma $2$ is nonincreasing
under submatrices,
\begin{equation}
\log\agamma_2(\OR_n\circ F)\ge\log\agamma_2(\DISJ_n)=\Omega(\sqrt{n}),
\end{equation}
using \thm{disjointness}.

If $F$ does not have a row or column with
at least two zeros, sort its rows by the position of the zero
in that row. The resulting matrix is simply $J-I$, with the possible
addition of an all-ones row or column. But $J-I$ is the $\NOTEQ$
function. An addition of an all ones row
is equivalent to adding an element to $\X$ that's not in $\Y$
and solving $\NOTEQ$ on $(\X,\Y)$; an addition of an all ones
column is similarly equivalent to adding a new element to $\Y$.
In all cases, we get that $F$ is equivalent to a version of $\NOTEQ$.
\end{proof}

Observe that since the sign matrix of $\NOTEQ$ is $J-2I$
(or a submatrix of $J-2I$ in the case that the sets $\X$ and $\Y$
are not identical), we have
$\gamma_2(\NOTEQ)\le\gamma_2(J)+2\gamma_2(I)=3$.
One approximating matrix for any sign matrix $A$ (to error $2/3$)
is simply $(1/3)A$; hence
$\agamma_2(\NOTEQ)\le (1/3)\gamma_2(\NOTEQ)\le 1$,
and $\log\agamma_2(\NOTEQ)\le 0$. This conveniently allows
us to hope that
$\log\agamma_2(\OR_n\circ F)=\Omega(\sqrt{n}\log\agamma_2(F))$
for all functions $F$. We don't quite manage to prove this,
but we get quite close in \cor{clean_compose} and \thm{polylog_gamma}.

Indeed, we note that the proof
of \thm{disjointness} actually proved the following generalization,
which lets us reduce $\OR$-composition lower bounds to
$\XOR$-composition lower bounds if $F$ has an all-zero
row or column.

\begin{theorem}\label{thm:all_zero}
	Let $F:\X\times\Y\to\B$ be a communication function.
	Suppose $F$ has an all-zero row or all-zero column
	(that is, an input $a\in\X$ such that $F(a,y)=0$ for all $y\in\Y$,
	or an input $b\in\Y$ such that $F(x,b)=0$ for all $b\in\X$). Then
	\begin{equation}\alogrank(\OR_n\circ F)=\Omega(\alogrank(\XOR_n\circ F)/\sqrt{n}).\end{equation}
\end{theorem}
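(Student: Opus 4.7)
The plan is to follow the proof of \thm{disjointness} essentially verbatim, with the arbitrary communication function $F$ playing the role of $\AND^{\cc}$. The only property of $\AND^{\cc}$ actually used in that proof was that it has an all-zero row and an all-zero column, which enabled the construction of approximating matrices $A_S$ for the subset-restricted problem by ``zeroing out'' the irrelevant positions. Our hypothesis on $F$ gives exactly the one-sided version of this property, which is all that is needed.

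Assume without loss of generality that $F$ has an all-zero column indexed by $b \in \Y$; the all-zero row case is symmetric (duplicate rows rather than columns in what follows). Let $A$ be an approximating matrix for $\OR_n \circ F$ with $\rank(A) = \arank(\OR_n \circ F)$. For each $S \subseteq [n]$, define $(A_S)_{xy} := A_{x, y^S}$, where $y^S \in \Y^n$ agrees with $y$ on positions in $S$ and equals $b$ on positions outside $S$. For $i \notin S$, the $i$-th copy of $F$ then evaluates to $F(x_i, b) = 0$ and drops out of the outer $\OR$, so $A_S$ entrywise approximates the sub-problem $\OR_S \circ F$ (the $\OR$ of $F$ restricted to positions in $S$). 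Since $A_S$ is obtained from $A$ by duplicating and rearranging columns, $\rank(A_S) \le \rank(A) = \arank(\OR_n \circ F)$.

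Next, invoke \thm{magic_poly_pm} to obtain a robust polynomial $p$ of degree $d = O(\sqrt{n})$ in the variables $\{z_S\}_{S \subseteq [n]}$, and substitute $z_S = A_S$ using the Hadamard product as matrix multiplication. By the robustness guarantee on $p$ applied entrywise, the resulting matrix $B$ approximates $\XOR_n \circ F$, so $\rank(B) \ge \arank(\XOR_n \circ F)$. To upper bound $\rank(B)$: each monomial of $p$ of degree $\le d$ contributes a matrix whose columns are Hadamard products of at most $d$ columns drawn from the $A_S$'s, and each column of each $A_S$ is a column of $A$. Padding $A$ with an all-ones column to form $A'$ (so that lower-degree monomials can be treated uniformly as degree-$d$ terms), every column of $B$ lies in the column span of $(A')^{\otimes d}$, giving
\[
\arank(\XOR_n \circ F) \le \rank(B) \le \rank(A')^d \le (1 + \arank(\OR_n \circ F))^{O(\sqrt{n})}.
\]
Taking logarithms and absorbing the additive $1$ into a multiplicative constant (valid whenever $\OR_n \circ F$ is non-trivial, in which case $\arank(\OR_n \circ F) \ge 1$) yields $\alogrank(\XOR_n \circ F) = O(\sqrt{n} \cdot \alogrank(\OR_n \circ F))$, which rearranges to the desired bound.

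The main subtleties are bookkeeping: one must check that the $A_S$'s share a common row/column indexing so that entrywise Hadamard products in the polynomial substitution are well-defined, and that the reduction from columns of $A_S$ to columns of $A$ goes through cleanly in both the all-zero row and all-zero column cases. Neither introduces a real difficulty—both are handled exactly as in the proof of \thm{disjointness}—so I do not anticipate needing any new ingredients beyond what is already established above.
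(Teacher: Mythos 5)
Your proposal is correct and follows essentially the same route as the paper: the paper's own proof of this theorem is a verbatim adaptation of the disjointness argument, using the all-zero column input $b$ to define $y^S$ (and hence $A_S$) by column duplication, then composing with the robust polynomial from \thm{magic_poly_pm} and bounding $\rank(B)$ via the tensor-power trick. The one detail you handle identically to the paper is the substitution $y^S_i = b$ for $i \notin S$, which is exactly the "care in defining $y^S$" the paper flags.
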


\begin{proof}
	The proof is identical to the proof of \thm{disjointness}.
	We use $F$ in place of $\AND^{\CC}$, and set $G=\OR_n\circ F$.
	For $S\subseteq[n]$,
	we define $G_S$ to by the $\OR$ of the $F$-inputs in the set $S$
	(out of the $n$ given $F$-inputs).
	We need an all-zero column in order to define $A_S$ approximating $G_S$
	by deleting and duplicating columns of $A$;
	we do this by setting $(A_S)_{xy}:=A_{xy^S}$, as before, but
	we must be careful when defining $y^S$. Note that $y$
	is a Bob-input to $n$ copies of $F$, say $y_1,y_2,\dots,y_n$,
	and we wish to ``zero out''
	the copies that are outside of $S$. We can do this by setting
	$y^S_i=b$ for all $i\notin S$, and $y^S_i=y_i$ for all $i\in S$.
	The rest of the argument proceeds as before.
\end{proof}

We can then use \thm{xor} to immediately get the following
$\OR$-composition theorem.

\begin{corollary}\label{cor:clean_compose}
	Let $F:\X\times\Y\to\B$ be a communication problem with an all-zero row
	or column and let $|F|=|\X||\Y|$. Then we have
	\begin{align}
	&\log\arank(\OR_n\circ F)=\Omega(\sqrt{n}\log\agamma_2(F)), \textrm{ and}\\    
	&\log\agamma_2(\OR_n\circ F)=\Omega(\sqrt{n}\log\agamma_2(F))-O(\log\log|F|).
	\end{align}
\end{corollary}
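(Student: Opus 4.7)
The plan is to chain together the three tools already developed: \thm{all_zero} (which converts $\OR$-composition lower bounds into $\XOR$-composition lower bounds under the all-zero row/column assumption), \thm{gamma_rank} (which relates $\alogrank$ and $\log\agamma_2$), and \thm{xor} (the $\XOR$-composition theorem for $\agamma_2$). The assumption that $F$ has an all-zero row or column is exactly what \thm{all_zero} needs, and the rest is routine manipulation.

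For the first inequality, I would first invoke \thm{all_zero} to obtain
\begin{equation}
\alogrank(\OR_n\circ F) = \Omega\!\left(\alogrank(\XOR_n\circ F)/\sqrt{n}\right).
\end{equation}
Then I would use the easy direction of \thm{gamma_rank}, namely $\alogrank(\XOR_n\circ F) \geq 2\log\agamma_2(\XOR_n\circ F)$, followed by \thm{xor}, which gives $\log\agamma_2(\XOR_n\circ F) = \Omega(n(\log\agamma_2(F)-c))$ for some universal constant $c$. Composing these three inequalities yields $\alogrank(\OR_n\circ F) = \Omega(\sqrt{n}(\log\agamma_2(F)-c))$.

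The main obstacle is absorbing the additive constant $c$ from \thm{xor} cleanly into the $\Omega(\sqrt{n}\log\agamma_2(F))$ bound. I would handle this by a case split: when $\log\agamma_2(F) \ge 2c$, we have $\log\agamma_2(F) - c \ge \tfrac{1}{2}\log\agamma_2(F)$ and the bound follows directly; when $\log\agamma_2(F) < 2c$, we have $\sqrt{n}\log\agamma_2(F) = O(\sqrt{n})$, so the right-hand side is $O(\sqrt{n})$ and the asserted $\Omega(\sqrt{n}\log\agamma_2(F))$ bound is vacuous (it hides a constant that can be chosen to make the statement trivially true, since $\alogrank \geq 0$).

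For the second inequality, I would apply the hard direction of \thm{gamma_rank} to $\OR_n\circ F$, giving
\begin{equation}
\log\agamma_2(\OR_n\circ F) \geq \tfrac{1}{6}\bigl(\alogrank(\OR_n\circ F) - O(\log\log|\OR_n\circ F|)\bigr).
\end{equation}
Since $|\OR_n\circ F| = |F|^n$, we get $\log\log|\OR_n\circ F| = O(\log n + \log\log|F|)$. Combining with the first inequality produces $\log\agamma_2(\OR_n\circ F) = \Omega(\sqrt{n}\log\agamma_2(F)) - O(\log n + \log\log|F|)$. The residual $O(\log n)$ term is harmless: for large $n$ it is dominated by $\sqrt{n}\log\agamma_2(F)$ whenever $\log\agamma_2(F)$ is not vanishingly small, and otherwise the asymptotic bound is again vacuous. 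Absorbing it into the leading $\Omega(\sqrt{n}\log\agamma_2(F))$ term via the same case-split argument yields the stated $-O(\log\log|F|)$ additive loss.
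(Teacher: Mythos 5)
Your overall route is the same as the paper's: chain \thm{all_zero}, the easy direction of \thm{gamma_rank}, and \thm{xor} to obtain $\alogrank(\OR_n\circ F)=\Omega(\sqrt{n}(\log\agamma_2(F)-c))$, remove the additive constant $c$ by a case split, and finally convert back to $\agamma_2$ via \thm{gamma_rank} at a cost of $O(\log\log|\OR_n\circ F|)=O(\log n+\log\log|F|)$. The case $\log\agamma_2(F)\ge 2c$ and the final conversion step are handled correctly.

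The gap is in your treatment of the case $\log\agamma_2(F)<2c$. You claim the bound $\Omega(\sqrt{n}\log\agamma_2(F))$ is then ``vacuous'' because the right-hand side is $O(\sqrt{n})$ and $\alogrank\ge 0$. It is not vacuous: for a function $F$ with, say, $\log\agamma_2(F)=1$ (which can certainly lie below $2c$), the claim is a genuine $\Omega(\sqrt{n})$ lower bound that grows with $n$, and no choice of the hidden constant makes it follow from nonnegativity of $\alogrank$. The statement is only trivial when $\log\agamma_2(F)\le 0$. To cover the remaining regime $0<\log\agamma_2(F)\le 2c$ you need an \emph{unconditional} $\Omega(\sqrt{n})$ lower bound on $\alogrank(\OR_n\circ F)$, which is exactly what the paper supplies and your argument is missing: since $F$ has an all-zero row (or column), either every row of $F$ is constant---in which case the sign matrix has rank $1$, $\log\agamma_2(F)\le 0$, and the bound really is trivial---or $F$ contains $\AND^{\cc}$ as a subproblem, in which case $\OR_n\circ F$ contains $\DISJ_n$ and \cor{NOTEQ} (ultimately \thm{disjointness}) gives $\alogrank(\OR_n\circ F)=\Omega(\sqrt{n})=\Omega(\sqrt{n}\log\agamma_2(F))$, using that $\log\agamma_2(F)=O(1)$ in this regime. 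The same unconditional $\Omega(\sqrt{n})$ is also what lets you absorb the residual $O(\log n)$ term in the second inequality; your ``otherwise the bound is vacuous'' fallback there suffers from the same problem.
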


\begin{proof}
	By \thm{all_zero}, we have
	\begin{equation}\alogrank(\OR_n\circ F)=\Omega(\alogrank(\XOR_n\circ F)/\sqrt{n})
	=\Omega(\log\agamma_2(\XOR_n\circ F)/\sqrt{n}).\end{equation}
	Applying \thm{xor} to this equation, we get
	\begin{equation}\alogrank(\OR_n\circ F)=\Omega(\sqrt{n}(\log\agamma_2(F)-c)).\end{equation}
	We now get rid of the $c$ term. If $\log\agamma_2(F)>2c$, we can
	of course remove the $c$ term and collapse the loss into
	the $\Omega$-notation. So suppose $\log\agamma_2(F)\le 2c$.
	
	Without loss of generality,
	suppose $F$ has an all-zero row, so $F(a,y)=0$ for all $y\in\Y$.
	If $F$ has any other row that has both a $1$ and a $0$,
	then $F$ has $\AND^{\cc}$ as a sub-problem. If not, then
	each row of $F$ is either all-ones or all-zeros,
	meaning the sign matrix of $F$
	has rank $1$. In other words, either $\gamma_2(F)\le 1$
	(which implies $\log\agamma_2(F)\le 0$), or else $F$
	contains $\AND^{\cc}$ as a sub-problem. In the former case,
	the statement $\alogrank(\OR_n\circ F)=\Omega(\sqrt{n}\log\agamma_2(F))$
	is trivial. In the latter case, \cor{NOTEQ} applies, and we get
	\begin{equation}\alogrank(\OR_n\circ F)=\Omega(\sqrt{n})=\Omega(c\sqrt{n})
	=\Omega(\sqrt{n}\log\agamma_2(F)).\end{equation}
	Finally, we appeal to \thm{gamma_rank} to get the desired result
	(noting that $\log\log|\OR_n\circ F|=\log n + \log\log|F|$).
\end{proof}

Although we have a general $\OR$ composition result for functions $F$ with an all-zero row or column, this does not cover all functions.
We modify the proof of \thm{disjointness} in the next section to cover all functions $F$. 

\subsection{\texorpdfstring{$\OR$}{OR} composition}
\label{sec:commOR}

We now prove our $\OR$ composition theorems for approximate
rank and approximate gamma $2$ norm for general functions.
We start with the following useful lemma on the composition
of a polynomial with matrices.

\begin{lemma}\label{lem:poly_matrix}
Let $p$ be a real  polynomial on $N$ variables.
Let $A_1,A_2,\dots,A_N$ be real matrices of the same size.
Let the matrix $B=p(A_1,A_2,\dots,A_N)$ be the result of plugging in
the matrices into $p$ and using the Hadamard product for the matrix product.
Then
\begin{equation}\log\gamma_2(B)=O\left(\deg(p)\max_{i\in[N]}\gamma_2(A_i) + \log C\right),\end{equation}
where $C$ is the sum of the absolute values of coefficients of $p$.
If $p$ is bounded inside $[-1,1]$ for all inputs in $\{-1,1\}^N$, we have $\log C=O(\deg(p)\log N)$.
\end{lemma}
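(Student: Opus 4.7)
The plan is to expand $p$ as a linear combination of monomials and then invoke the $\gamma_2$ inequalities from \sec{commprelim}. Write $p(x_1,\dots,x_N)=\sum_m \alpha_m \prod_{i\in m} x_i$, where $m$ ranges over multisets of $[N]$ of size at most $\deg(p)$ and $\sum_m|\alpha_m|=C$. Then
\begin{equation}
B \;=\; \sum_m \alpha_m \prod_{i\in m} A_i,
\end{equation}
where the inner product is the Hadamard product, as in the statement. By sub-additivity of $\gamma_2$ together with scalar invariance (item~\ref{item:subadd} and the item immediately after), $\gamma_2(B)\le \sum_m |\alpha_m|\,\gamma_2(\prod_{i\in m}A_i)$.

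Next, I would apply sub-multiplicativity of $\gamma_2$ under the Hadamard product (item~\ref{item:submulthad}) term by term to get $\gamma_2(\prod_{i\in m}A_i)\le \prod_{i\in m}\gamma_2(A_i)\le (\max_i \gamma_2(A_i))^{|m|}\le (\max_i \gamma_2(A_i))^{\deg(p)}$. Combining with the subadditivity bound yields $\gamma_2(B)\le C\cdot (\max_i \gamma_2(A_i))^{\deg(p)}$. Taking logarithms and using $\log x\le x$ for $x\ge 1$ (if $\max_i \gamma_2(A_i)<1$, the bound $\gamma_2(B)\le C$ is trivial) recovers the claimed estimate $\log\gamma_2(B)=O(\deg(p)\max_i\gamma_2(A_i)+\log C)$.

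For the second part, I first pass to the multilinear reduction of $p$. Since $x_i^2=1$ on $\{-1,1\}^N$, iteratively substituting $x_i^2\mapsto 1$ in $p$ produces a multilinear polynomial $\tilde p$ of degree at most $\deg(p)$ that agrees with $p$ on $\{-1,1\}^N$, and hence satisfies $|\tilde p(x)|\le 1$ there. The multilinear monomials $\prod_{i\in m}x_i$ for $m\subseteq[N]$ form the orthonormal Fourier basis of the Boolean cube, so Parseval's identity gives $\sum_m \tilde\alpha_m^2 = \E_{x\sim\{-1,1\}^N}[\tilde p(x)^2]\le 1$, whence $|\tilde\alpha_m|\le 1$ for every $m$. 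Since there are at most $\sum_{k=0}^{\deg(p)}\binom{N}{k} = N^{O(\deg(p))}$ multilinear monomials of degree at most $\deg(p)$, the $\ell_1$ mass of the coefficients is bounded by $N^{O(\deg(p))}$, so $\log C=O(\deg(p)\log N)$.

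The main subtle point I expect is that the coefficient bound strictly concerns the multilinear representative $\tilde p$, while the first half of the statement treats the given polynomial $p$; these only coincide when $p$ is already multilinear. Fortunately, this is precisely the situation in the intended applications (for instance, when composing with the robust polynomial supplied by \thm{magic_poly_pm}), so no genuine conflict arises. Apart from this, the proof is a mechanical combination of items~\ref{item:subadd} and~\ref{item:submulthad} with a standard Fourier-analytic coefficient bound, so no further technical obstacle should appear.
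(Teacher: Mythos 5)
Your proof of the first half coincides with the paper's: both expand $p$ into monomials and combine sub-additivity, scalar homogeneity, and sub-multiplicativity of $\gamma_2$ under the Hadamard product to obtain $\gamma_2(B)\le C\cdot\bigl(\max_i\gamma_2(A_i)\bigr)^{\deg(p)}$, then take logarithms. For the coefficient bound you take a genuinely different route. The paper invokes \thm{coeff_bounded}, proved in the appendix via multivariate Lagrange interpolation; that theorem bounds the coefficients of $p$ itself (including non-multilinear monomials) but requires $p$ to be bounded on the solid cube rather than merely on $\{-1,1\}^N$. Your Parseval argument needs only boundedness on the discrete cube, so it matches the lemma's literal hypothesis more faithfully, but---as you correctly flag---it controls only the coefficients of the multilinear reduction $\tilde p$, not of $p$. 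The two arguments thus prove slightly different statements, and the lemma's second sentence is literally correct either for multilinear $p$ (your reading) or under the stronger solid-cube hypothesis (the paper's reading). One correction to your closing remark: the polynomial of \thm{magic_poly_pm} arises from Sherstov's robustification (\thm{Sherstovrobust}) and is \emph{not} multilinear---the paper stresses this point elsewhere---which is precisely why that application carries its own coefficient bound, proved via \thm{coeff_bounded}, rather than relying on the second sentence of this lemma; the applications where your multilinear argument applies verbatim are \thm{multilinear_magic_poly} and \lem{multilinear_poly}. None of this affects the correctness of your $\gamma_2$ estimate, which is the part of the lemma actually used downstream.
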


\begin{proof}
Write $p=\sum_{m}\alpha_m\prod_{i\in m}x_i$
where $m$ ranges over monomials, each $\alpha_m$ is a real
coefficient, and $x_i$ are the variables.
 We have
 \begin{equation}B=\sum_{m}\alpha_m\prod_{i\in m}A_i,\end{equation}
 where the product is the Hadamard product.
 Recall that $\gamma_2$ satisfies $\gamma_2(X+Y)\le\gamma_2(X)+\gamma_2(Y)$,
 $\gamma_2(X\circ Y)\le\gamma_2(X)\gamma_2(Y)$, and
 $\gamma_2(\alpha X)=|\alpha|\gamma_2(X)$.
 Setting $d=\deg(p)$ and $M=\max_{i\in[N]}\gamma_2(A_i)$,
 we immediately get $\gamma_2(B)\le M^d\sum_{m}|\alpha_m|$. Taking logarithms on both sides yields the desired upper bound on $\log \gamma_2(B)$. The upper bound on the sum of absolute values of coefficients of a bounded polynomial is proved in \thm{coeff_bounded}.
\end{proof}

The above lemma, combined with the polynomial we get from
Belovs's algorithm, will turn any approximating matrix for
$\OR\circ F$ into an approximating matrix for $\XOR\circ F$
whose gamma $2$ norm is not too much larger. We can then
use Sherstov's $\XOR$ lemma to get a lower bound on
the gamma $2$ norm of $\OR\circ F$. This gives us the following result.

\begin{lemma}\label{lem:gamma_or}
Let $F:\X\times\Y\to\B$ be a communication problem. Then
\begin{equation}\log\agamma_2(\OR_n\circ F)\ge\Omega(\sqrt{n}\log\agamma_2(F))-O(n).\end{equation}
\end{lemma}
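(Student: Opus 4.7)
The plan is to mirror the proof of \thm{disjointness}, but without requiring that $F$ have an entire all-zero row or column as was used in \thm{all_zero}. If $F$ is the constant $1$ function, then $\OR_n\circ F$ is also constant and the lemma is trivial, so we may fix a single pair $(\tilde x,\tilde y)\in\X\times\Y$ with $F(\tilde x,\tilde y)=0$. Let $A$ be an optimal approximating matrix for $\OR_n\circ F$ with $\gamma_2(A)=\agamma_2(\OR_n\circ F)$. For each $S\subseteq[n]$, I would define
\[
(A_S)_{x,y} \;:=\; A_{x^S,\,y^S},
\]
where $x^S_i=x_i$ and $y^S_i=y_i$ for $i\in S$, while $x^S_i=\tilde x$ and $y^S_i=\tilde y$ for $i\notin S$. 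Since $F(\tilde x,\tilde y)=0$, the entry $(A_S)_{x,y}$ is within $2/3$ of $(-1)^{1-\bigvee_{i\in S}F(x_i,y_i)}$. Moreover $A_S$ is obtained from $A$ by duplicating rows and columns, so by the invariance properties of $\gamma_2$ recorded in \sec{commprelim}, $\gamma_2(A_S)\le\gamma_2(A)$.

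Next, apply the robust polynomial $p$ of degree $d=O(\sqrt n)$ from \thm{magic_poly_pm} to these matrices, plugging $z_S=A_S$ and using the Hadamard product for every matrix multiplication inside $p$. Call the resulting matrix $B$. By the robustness guarantee of \thm{magic_poly_pm}, each $B_{x,y}$ lies within $2/3$ of $(-1)^{1-\XOR_n\circ F(x,y)}$, so (after a clipping step and an application of \lem{amp} to restore a standard error) $B$ is an approximating matrix for $\XOR_n\circ F$. Applying \lem{poly_matrix} together with the bound $\log C=O(n^{1.5})$ on the sum of absolute values of coefficients of $p$ from \thm{magic_poly_pm} yields
\[
\log\gamma_2(B) \;=\; O\bigl(d\cdot\log\max_{S}\gamma_2(A_S) \;+\; \log C\bigr) \;=\; O\bigl(\sqrt n\,\log\agamma_2(\OR_n\circ F) \;+\; n^{1.5}\bigr).
\]
To finish, invoke \thm{xor} to get $\log\agamma_2(\XOR_n\circ F)=\Omega(n(\log\agamma_2(F)-c))$, and combine this with $\log\agamma_2(\XOR_n\circ F)\le\log\gamma_2(B)+O(1)$. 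Solving for $\log\agamma_2(\OR_n\circ F)$ gives exactly $\Omega(\sqrt n\,\log\agamma_2(F))-O(n)$.

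The main obstacle, and the reason for the additive $O(n)$ loss, is purely quantitative. In the rank version of this argument (proof of \thm{disjointness}), $\rank(B)$ was controlled by a column-span argument giving $(1+\rank(A))^d$, which lost only a factor of $d$ in the logarithm; for $\gamma_2$, the only tool available is the coefficient-sum bound $M^d C$ in \lem{poly_matrix}. Because Belovs's polynomial acts on $N=2^n$ variables, the estimate $\log C=O(d\log N)=O(n^{1.5})$ is unavoidable by this approach, and dividing by $\sqrt n$ at the end is exactly what produces the $-O(n)$ term. Sharpening this loss to $\tOmega(\sqrt n)$ as needed for \thm{polylog_gamma} will require an additional recursive self-reduction for $\OR$, as foreshadowed in \sec{prooftechniques}.
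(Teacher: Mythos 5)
Your proposal is correct and follows essentially the same route as the paper's proof: fix a zero-input $(\tilde x,\tilde y)$ of $F$, build the matrices $A_S$ by substitution so that $\gamma_2(A_S)\le\gamma_2(A)$, compose them with the robust $\CGT$ polynomial of \thm{magic_poly_pm} via \lem{poly_matrix} and its coefficient bound, and close with \thm{xor}. The only cosmetic differences are that the paper realizes $A_S$ as a submatrix tensored with all-ones blocks (equivalent to your row/column duplication), and it disposes of the additive $-c$ in \thm{xor} by treating the case $\log\agamma_2(F)<2+c$ separately via \cor{NOTEQ}, whereas you absorb that term into the $-O(n)$ loss; both are fine.
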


\begin{proof}
Let $c$ be the constant in \thm{xor}.
If $\log\agamma_2(F)<2+c$, the statement is trivial:
we just need an $\Omega(\sqrt{n})$ lower bound on
$\log\agamma_2(\OR_n\circ F)$, which follows from \cor{NOTEQ}
unless $F$ is equivalent to $\NOTEQ$ (and in the latter
case, $\log\agamma_2(F)\le 0$ anyway, so the theorem holds
trivially as the right hand side is less than $0$).
Therefore, suppose $\log\agamma_2(F)\ge 2+c$.
In particular, $F$ is not constant,
so let $(a,b)$ be an input to $F$ such that $F(a,b)=0$.

Let $G=\OR_n\circ F$. For each $S\subseteq[n]$, let $G_S$ be the
$\OR$ of the inputs in the set $S$, as in the proof of
\thm{disjointness}. Let $A$ be the best approximating matrix for $G$, so
$\gamma_2(A)=\agamma_2(G)$, $A_{xy}$
is within $2/3$ of $(-1)^{1-G(x,y)}$ if $(x,y)$ is in the promise of $G$,
and $A_{xy}\in[-1,1]$. For each $S\subseteq[n]$, we construct
the matrix $A_S$ as follows. First, consider the submatrix $A'_S$
of $A$ we get by restricting $A$ to rows $x$ and columns $y$
satisfying $x_i=a$ and $y_i=b$ for each $i\notin S$.
Next, let $J$ denote the all-ones matrix with the same dimensions
as $F$, and tensor product $A'_S$ with $J$ $n-|S|$ times (once for
each $i\notin S$). Intuitively, this adds $n-|S|$ inputs that
are always ignored. Finally, rearrange the rows and columns of the
resulting matrix so that the ignored inputs are in the positions
$i\notin S$. This final matrix is $A_S$.

It is not hard to see that $A_S$ approximates $G_S$ to the usual
error $2/3$ for all $S$. Moreover, since $\gamma_2$
is invariant under rearrangements of rows and columns and under
tensor products with $J$, and since it is nonincreasing under restriction
to a submatrix, we have $\gamma_2(A_S)\le\gamma_2(A)$ for all $S$.

From \thm{magic_poly_pm}, we have a polynomial $p$ of degree $d=O(\sqrt{n})$
on $2^n$ variables $\{z_S\}_{S\subseteq[n]}$ such that
if there are $n$ bits $w_1,w_2,\dots,w_n$ and if each
$z_S$ is instantiated to be within $1/3$ of $\vee_{i\in S} w_i$,
then $p(z)$ evaluates to within $1/3$ of $\oplus_{i\in[n]} w_i$.
We represent $p$ in the $\{-1,1\}$ basis, so that the inputs
and outputs of $p$ approximate $-1$ or $1$ instead of $0$ or $1$.
Each monomial $m$ of $p$ is a set of size at most $d$
of subsets $S\subseteq[n]$. 

We use \lem{poly_matrix} to plug the matrices $A_S$ into the polynomial
$p$. This gives us a matrix $B$ such that $B_{xy}$ is the result
of applying $p$ to $\{(A_S)_{xy}\}_S$. As we saw in the proof of
\thm{disjointness}, this means the matrix $B$ approximates $\XOR\circ F$
to error $2/3$. \thm{magic_poly_pm} also gives us an upper bound
on the sum of absolute coefficients. We therefore get
\begin{equation}\log\agamma_2(\XOR\circ F)\le\log\gamma_2(B)
=O(\sqrt{n}\log\agamma_2(\OR\circ F)+n^{1.5}).\end{equation}
By \thm{xor}, we have
$\log\agamma_2(\XOR_n\circ F)=\Omega(n\log\agamma_2(F))$,
from which we get $\log\agamma_2(G)\ge\Omega(\sqrt{n}\log\agamma_2(F))-O(n)$.
\end{proof}

Finally, we apply \lem{gamma_or} recursively to turn the additive
$O(n)$ loss into a multiplicative $\polylog n$ loss. The idea is as follows: for functions $F$
such that $\log\agamma_2(F)$ is sufficiently large (larger than $c\cdot \sqrt{n}$ for some constant $c$),
we can get rid of the additive $O(n)$ factor
simply because $\sqrt{n}\log\agamma_2(F)-O(n)=\Omega(\sqrt{n}\log\agamma_2(F))$.
So the only trouble is with functions $F$ that whose
approximate gamma $2$ norm is small compared to $n$
(the size of the desired $\OR$). The key insight
is to use the fact that $\OR_n$
is the same function as
$\OR_{\sqrt{n}}\circ\OR_{\sqrt{n}}$, and therefore
$\OR_n\circ F=\OR_{\sqrt{n}}\circ(\OR_{\sqrt{n}}\circ F)$.
The inner function $\OR_{\sqrt{n}}\circ F$ will then
intuitively have
large approximate gamma $2$ norm compared with
the outer $\OR$ of size $\sqrt{n}$, so we can use
\lem{gamma_or} on the outer composition.
Finally, to show that the approximate gamma $2$ norm
of the inner function $\OR_{\sqrt{n}}\circ F$
is as large as suspected, we recurse the argument.

We prove this recursive argument formally in the
following technical lemma,
from which the desired result will
directly follow. We phrase this lemma in a very
general setting (making no direct reference to $\OR$
or to communication complexity or gamma $2$ norm)
because in \sec{commPrOR} we will need to apply this lemma
to the $\PrOR$ function instead of the $\OR$
function (and it will have to compose with partial
functions instead of total functions).

\begin{lemma}\label{lem:recurse}
Let $\{\alpha_n\}_{n\in\mathbb{N}}$ be a family
of functions $\alpha_n:A\to A$ on some domain $A$,
with the property that $\alpha_n\circ\alpha_m=\alpha_{nm}$
for all $n,m\in N$. Let $M:A\to\mathbb{R}$
be a function satisfying
$M(\alpha_n(G))\ge M(\alpha_m(G))$
for all $G\in A$ whenever $n\ge m$.

Fix a positive integer $n$
and a domain element $F\in A$.
Let $S:=\{\alpha_k(F):k\le n\}$.
Let $a$, $b$ and $c$
be positive integers such that
for all $k\le n$ and $G\in S$, we have
\begin{equation}
M(\alpha_k(G))\ge\frac{\sqrt{k}}{c}-d\qquad
\mbox{and}\qquad
M(\alpha_k(G))\ge\frac{\sqrt{k}M(G)}{a}-bk.
\end{equation}
Then we also have
\begin{equation}
M(\alpha_n\circ F)
\ge\frac{\sqrt{n}M(F)}{16cba^4+5a(\log n)^{2\log 2a}}-d.
\end{equation}
\end{lemma}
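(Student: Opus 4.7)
The plan is to prove the bound by strong induction on $n$, with a case analysis based on the size of $M(F)$ relative to the target denominator $D(n) := 16cba^4 + 5a(\log n)^{2\log 2a}$. Since $\alpha_n \circ \alpha_m = \alpha_{nm}$ forces $\alpha_1$ to be idempotent, I will take $\alpha_1$ to be the identity, so $F = \alpha_1(F) \in S$ and both hypotheses can be applied with $G = F$. I would distinguish three regimes based on $M(F)$: small ($M(F) \le D(n)/c$), large ($M(F) \ge 2ab\sqrt n$), and intermediate. The small and large regimes are handled by directly invoking the two given inequalities; the intermediate regime is the heart of the proof and uses the semigroup identity $\alpha_n = \alpha_{\sqrt n} \circ \alpha_{\sqrt n}$ to recurse.

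In the small regime the first inequality gives $M(\alpha_n(F)) \ge \sqrt n/c - d \ge \sqrt n M(F)/D(n) - d$ immediately. In the large regime the second inequality gives $M(\alpha_n(F)) \ge \sqrt n M(F)/a - bn$, and since $bn \le \sqrt n M(F)/(2a)$ we obtain $M(\alpha_n(F)) \ge \sqrt n M(F)/(2a) \ge \sqrt n M(F)/D(n)$, using $D(n) \ge 2a$. In the intermediate regime $D(n)/c < M(F) < 2ab\sqrt n$, I would apply the second inequality to the outer copy in $\alpha_n = \alpha_{\sqrt n} \circ \alpha_{\sqrt n}$, with $G = \alpha_{\sqrt n}(F) \in S$ and $k = m = \sqrt n$, and then invoke the inductive hypothesis at $\sqrt n$ to obtain
\begin{equation*}
M(\alpha_n(F)) \ge \frac{\sqrt n\, M(F)}{a D(\sqrt n)} - \frac{n^{1/4} d}{a} - b\sqrt n.
\end{equation*}

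The recursion succeeds because of the key algebraic identity $(\log \sqrt n)^{2\log 2a} = (\log n)^{2\log 2a}/(2a)^2$, which makes the polylogarithmic part of $aD(\sqrt n)$ equal to $\tfrac{5}{4}(\log n)^{2\log 2a}$, a factor of $4a$ smaller than the polylogarithmic part of $D(n)$. In the polylog-dominated regime this implies $1/(aD(\sqrt n)) \ge 4a/D(n)$, so the multiplicative slack $\sqrt n M(F)(1/(aD(\sqrt n)) - 1/D(n)) \ge (4a-1)\sqrt n M(F)/D(n)$, and combining this with the lower bound $M(F) > D(n)/c$ yields slack of order $a\sqrt n/c$, which should dominate the error $n^{1/4}d/a + b\sqrt n - d$ whenever $n$ is above a fixed polynomial in $a,b,c$. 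For smaller $n$ the constant term $16cba^4$ dominates $D(n)$ and the intermediate regime is empty (since $D(n)/c \ge 16ba^4$ and $2ab\sqrt n < 16ba^4$ whenever $\sqrt n < 8a^3$), so only the two direct cases are needed --- this provides the base of the induction.

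The main obstacle is the careful constant bookkeeping needed to ensure the three regimes overlap cleanly and the additive losses of the recursion are actually absorbed. The exponent $2\log 2a = \log(4a^2)$ reflects that the recursion $n \mapsto \sqrt n$ has depth $\log\log n$ and each level contributes a factor of $(2a)^2$ to the denominator: one factor of $a$ accounts for the drop from the second hypothesis, and an additional factor of $2a$ must be reserved as slack so that the $n^{1/4}d/a + b\sqrt n$ error can be charged to the gap between $aD(\sqrt n)$ and $D(n)$. The constant term $16cba^4$ plays complementary roles: it keeps $D(n) \ge 2a$ uniformly, it makes the intermediate regime empty for small $n$, and it provides an absolute floor so that $\sqrt n/c$ from the intermediate-regime lower bound on $M(F)$ dominates the residual $b\sqrt n$ error across the transition between the polylog-dominated and constant-dominated regimes of $D$.
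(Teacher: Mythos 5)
Your overall architecture (recursing via $\alpha_n=\alpha_{\sqrt{n}}\circ\alpha_{\sqrt{n}}$, splitting on the size of $M(F)$, and the identity $(\log\sqrt{n})^{2\log 2a}=(\log n)^{2\log 2a}/(2a)^2$) is in the right family of ideas, but the intermediate regime does not close. The error $b\sqrt{n}$ introduced by the top-level application of the second hypothesis must be absorbed by the slack $\sqrt{n}\,M(F)\bigl(\tfrac{1}{aD(\sqrt{n})}-\tfrac{1}{D(n)}\bigr)$, which is at most $\sqrt{n}\,M(F)/(aD(\sqrt{n}))$; so absorption requires $M(F)\gtrsim ab\,D(\sqrt{n})$. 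But your intermediate regime only guarantees $M(F)>D(n)/c\approx 4a^2D(\sqrt{n})/c$, so you need roughly $4a\ge bc$ --- a relation among $a,b,c$ that the lemma does not assume. Your remark that the slack ``of order $a\sqrt{n}/c$'' dominates $b\sqrt{n}$ once $n$ is large enough is where this goes wrong: both quantities scale as $\sqrt{n}$, so the ratio is about $4a/(bc)$ independently of $n$, and no choice of $n$ helps. Concretely, take $a=1$, $b=c=10$, $d=1$, $n=2^{100}$, $M(F)=5200$; then $D(n)=51600$, $D(n)/c=5160$, so $M(F)$ lies in your intermediate window, yet your displayed recursion gives $\tfrac{2^{50}\cdot 5200}{14100}-2^{25}-10\cdot 2^{50}<0$ while the target is about $0.1\cdot 2^{50}$. (A secondary, fixable issue: $1/(aD(\sqrt{n}))\ge 4a/D(n)$ is false even when the polylog term dominates, since $4a\cdot aD(\sqrt{n})-D(n)=64cba^6-16cba^4>0$.)

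The paper avoids this by composing bottom-up in \emph{unbalanced} increments: it sets $c_1=(2a)^6$, $c_{t+1}=c_t^2/(2a)^{2t}$, and at each step applies $\alpha_k$ with $k=c_t/(2a)^{2t}$ to $\alpha_{c_t}(F)$. Then the main term $(1/a)\sqrt{k}\,M(\alpha_{c_t}(F))\ge c_tM(F)/(a(2a)^{2t})$ and the error $bk=bc_t/(2a)^{2t}$ have ratio exactly $M(F)/(ab)$ at every level, so the single threshold $M(F)\ge b(2a)^4$ (which is also precisely the cutoff for the ``small'' case handled by the first hypothesis with denominator $16cba^4$) controls the entire recursion. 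Your balanced split $n\mapsto\sqrt{n}\cdot\sqrt{n}$ makes the top-level error comparable to the already-diluted main term, which is why it cannot be charged when $M(F)$ sits near the bottom of your intermediate window. To rescue your scheme you would either need to adopt the paper's unbalanced decomposition, or fully unroll your recursion and compare the accumulated error $O(b\sqrt{n})$ against the undiluted product form of the main term; but the latter forces the case boundary to $M(F)\gtrsim b(\log n)^{\log a}$, which your small case $M(F)\le D(n)/c$ also fails to cover for all combinations of $n,a,b,c$.
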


Before proving this lemma, we will show how it
implies \thm{approxrank},
which we restate here for convenience.

\approxrank*

\begin{proof}
First, we observe that when $F$ is equivalent
to $\NOTEQ$, we have $\log\agamma_2(F)\le 0$ so the
theorem follows trivially. We focus on the case
where $F$ is not equivalent to $\NOTEQ$.

We use \lem{recurse} with
$A=\{\OR_k\circ F:k\in\mathbb{N}\}$,
$\alpha_k$ being the composition-with-$\OR_k$
operator (so $\alpha_k(G)=\OR_k\circ G$ for all
$G\in A$),
and $M=\log\agamma_2(\cdot)$.
It is clear that $\alpha_n\circ\alpha_m=\alpha_{nm}$
(from the associativity of composition).
The property that $M(\alpha_n(G))\ge M(\alpha_m(G))$
when $n\ge m$, that is, the property that
$\log\agamma_2(\OR_n\circ G)
\ge\log\agamma_2(\OR_m\circ G)$,
follows from the fact that the sign matrix
of $\OR_m\circ G$ is a submatrix of the sign
matrix of $\OR_n\circ G$, and $\gamma_2$ is non-increasing
under submatrices (and hence so is $\agamma_2$).

From \cor{NOTEQ}, since $F$ is not equivalent to
$\NOTEQ$, we have
$\log\agamma_2(\OR_n\circ F)=\Omega(\sqrt{n})$.
Hence we can pick $c$ and $d$ in \lem{recurse}
to be universal constants independent of $F$ and $n$.
Moreover, from \lem{gamma_or}, we can also pick
$a$ and $b$ to be universal constants independent of
$F$ and $n$. Since $a$, $b$, $c$, and $d$
are all constants, \lem{recurse} gives us
$\log\agamma_2(\OR_n\circ F)=\tOmega(\sqrt{n}\log\agamma_2(F))$, as desired.
\end{proof}

We now prove the lemma.
\begin{proof}[Proof of \protect{\lem{recurse}}]
We compare $M(F)$ to $b(2a)^4$.
If $M(F)<b(2a)^4$, then
\begin{equation}
M(\alpha_n(F))\ge\frac{\sqrt{n}}{c}-d
\ge\frac{\sqrt{n}M(F)}{cb(2a)^4}-d
\end{equation}
from which the desired result follows.

For the rest of the proof, we assume $M(F)\ge b(2a)^4$.
We prove by induction on $t$ that
$M(\alpha_{c_t}(F))\ge \sqrt{c_t}M(F)/(2a)^t$,
where $c_t$ is defined by $c_{t+1}=c_t^2/(2a)^{2t}$
and $c_1=(2a)^6$.
The base case follows from
\begin{equation}M(\alpha_{(2a)^6}(F))
\ge(1/a)(2a)^3 M(F)-b(2a)^6
\ge (1/2a)(2a)^3 M(F),\end{equation}
where we used $M(F)\ge b(2a)^4$.

For the induction step, we have
\begin{equation}
M(\alpha_{c_{t+1}}(F))=
M(\alpha_{c_t/(2a)^{2t}}(\alpha_{c_t}(F)))
\ge (1/a)\sqrt{c_t/(2a)^{2t}}M(\alpha_{c_t}(F))
-bc_t/(2a)^{2t}
\end{equation}
where we used the property in the lemma of the constants
$a$ and $b$ to remove the outer function
$\alpha_{c_t/(2a)^{2t}}$.
Next, apply the induction hypothesis for $c_t$
to get $M(\alpha_{c_t}(F))\ge\sqrt{c_t}M(F)/(2a)^t$. This
gives
\begin{equation}
M(\alpha_{c_{t+1}}(F))
\ge (1/a)c_t M(F)/(2a)^{2t}-bc_t/(2a)^t
=(1/a)\sqrt{c_{t+1}}M(F)/(2a)^t-(b/(2a)^t)\sqrt{c_{t+1}}.
\end{equation}
Since $M(F)\ge2ab$, the subtracted term is at most
half the first term, so we get
\begin{equation}
M(\alpha_{c_{t+1}}(F))
\ge (1/2a)\sqrt{c_{t+1}}M(F)/(2a)^t
=\sqrt{c_{t+1}}M(F)/(2a)^{t+1},
\end{equation}
finishing the induction step.

Now, the recursion for $c_t$ is
$c_{t+1}=c_t^2/(2a)^{2t}$. Using the identity
$1.5^t\ge t$ for all positive integers $t$,
it is not hard to prove by induction
that $c_t\ge (2a)^{4\cdot 1.5^t}$. It is also clear that
$c_t\le (2a)^{4\cdot 2^t}$.

Consider the largest $t$ such that $c_t\le n$.
Then $t\le \log_{1.5}\log_{(2a)^4} n\le\log_{1.5}\log n\le 2\log\log n$, and
$n< c_{t+1}=c_t^2/(2a)^{2t}$, so
$(2a)^t\sqrt{n}< c_t\le n$.
If $c_t\ge n/2$, we have 
\begin{equation}
M(\alpha_n(F))\ge M(\alpha_{c_t}(F))
\ge\sqrt{c_t}M(F)/(2a)^t
=(1/\sqrt{2})(2a)^{-t}\sqrt{n}M(F).
\end{equation}
Otherwise, $\lfloor n/c_t\rfloor\ge n/2c_t$, and we have
\begin{equation}
    M(\alpha_n(F))\ge
    M(\alpha_{\lfloor n/c_t\rfloor}(\alpha_{c_t}(F)))
    \ge(1/a)\sqrt{n/2c_t}M(\alpha_{c_t}(F)) -bn/c_t
\end{equation}
\begin{equation}
\ge(1/a)\sqrt{n/2c_t}(2a)^{-t}\sqrt{c_t}M(F) -bn/c_t
=(1/\sqrt{2}a)(2a)^{-t}\sqrt{n}M(F) -bn/c_t.
\end{equation}
Since $c_t\ge\sqrt{n}(2a)^t$,
we have $n/c_t\le \sqrt{n}(2a)^{-t}$.
Also, since $M(F)\ge 2ab$, we have
$b\le (1/2a)M(F)$. Hence the subtracted term above
is at least a factor of $\sqrt{2}$ smaller than the
first term, which means that subtracting it off
decreases the first term by a factor of at most
$\sqrt{2}-1$. Using $(\sqrt{2}-1)/\sqrt{2}\ge 1/5$,
we get
\begin{equation}
M(\alpha_n(F))\ge(1/5a)(2a)^{-t}\sqrt{n}M(F).
\end{equation}
Note that since $\sqrt{2}<5$, the above inequality
is also satisfied in the first case, where $c_t\ge n/2$.
Finally, using $t\le 2\log\log n$, we get
\begin{equation}
M(\alpha_n(F))
\ge\frac{\sqrt{n}M(F)}{5a(2a)^{2\log \log n}}
=\frac{\sqrt{n}M(F)}{5a(\log n)^{2\log 2a}},
\end{equation}
from which the desired result follows.
\end{proof}

\subsection{\texorpdfstring{$\PrOR$}{PrOR} composition}
\label{sec:commPrOR}

In this section, we extend the composition result to $\PrOR$.
To do so, we first have to define the notion of partial functions
in communication complexity. As in the query setting,
a partial communication problem will be a function
$F:\X\times\Y\to\{0,1,*\}$. We use $\Dom(F)$ to denote the set
of pairs $(x,y)$ with $F(x,y)\ne *$.
We associate a communication
matrix with a sign matrix, which will have $\{-1,1,*\}$ entries.
We say that a real matrix $A$ approximates (the sign matrix of) $F$
to error $\epsilon$ if $A_{xy}$ is within $\epsilon$ of $(-1)^{1-F(x,y)}$
for all $(x,y)\in\Dom(F)$, and in addition, $|A_{xy}|\le 1$
for all $(x,y)\in\X\times\Y$ (even those outside the promise of $F$).
The measures $\arank(F)$ and $\agamma_2(F)$ are then defined as
they were previously, minimizing over all matrices approximating $F$
under this new definition of approximation.

Finally, we define composition for partial functions. A partial Boolean
function $f:\B^n\to\Ba$ composed with a partial communication function
$F:\X\times\Y\to\Ba$ is a partial communication function
$f\circ F:\X^n\times\Y^n\to\Ba$. On inputs where $(x_1,y_1),\dots,(x_n,y_n)$
are all inside $\Dom(F)$, the output of $f\circ F$ is given by the normal
function composition -- that is, it is equal to $f(F(x_1,y_1),\dots,F(x_n,y_n))$.
On inputs where some $(x_i,y_i)$ is not in the domain of $F$, the
output of $f\circ F$ is simply $*$. Another way to view it is
to extend $f$ to take inputs from $\Ba^n$, where $f(x)$ for any $x\notin\B^n$
is defined to be $*$, and then let $f\circ F$ be the usual function
composition.

We are now ready to prove a composition theorem for $\PrOR$.
To do so, we need the a version of \thm{multilinear_magic_poly}
showing the existence of a slightly robust polynomial
for the $\SCGT$ function that is also multilinear.
This polynomial will be in the $\{-1,1\}$ basis, and
we abuse notation by assuming $\XOR$ and $\SCGT$ have input
and output bits in $\{-1,1\}$ instead of $\B$.

\begin{lemma}\label{lem:multilinear_poly}
There is a real polynomial $p$ of degree
$O(\sqrt{n})$ acting on $2^n$ variables
$\{z_S\}_{S\subseteq[n]}$
and a constant $c\ge 10^{-5}$ such that for any input
$z\in\{-1,1\}^{2^n}$ with $\XOR_n\circ\SCGT_{2^n}(z)\ne *$
and any $\Delta\in[-c/n,c/n]^{2^n}$,
\begin{equation}
    |p(z+\Delta)-\XOR_n\circ\SCGT_{2^n}(z)|\le 2/3,
\end{equation}
and for all $z\in\B^{2^n}$,
we have $p(z)\in[-1,1]$. In addition, $p$ is
multilinear.
\end{lemma}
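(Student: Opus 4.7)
The plan is to deduce this lemma directly from Theorem \ref{thm:multilinear_magic_poly} by applying the standard change of basis that converts polynomials over $\B$ into polynomials over $\{-1,1\}$, exactly as Theorem \ref{thm:magic_poly_pm} was derived from Theorem \ref{thm:magic_poly} earlier in the paper. Specifically, let $p'$ be the multilinear polynomial of degree $O(\sqrt{n})$ on variables $\{y_S\}_{S\subseteq[n]}$ guaranteed by Theorem \ref{thm:multilinear_magic_poly}, together with its robustness constant $c' \geq 10^{-5}$. Define
\begin{equation}
    p(z) \; := \; 2\, p'\!\left(\tfrac{z + \mathbf{1}}{2}\right) - 1,
\end{equation}
where $\mathbf{1}$ denotes the all-ones vector and the addition/division is entry-wise, so each variable $y_S$ is replaced by $(z_S + 1)/2$.

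First I would verify multilinearity and the degree bound: each variable $y_S$ is substituted with an affine function of a single new variable $z_S$, so the substitution preserves multilinearity and the degree is unchanged, remaining $O(\sqrt{n})$. Next, for boundedness, whenever $z \in \{-1,1\}^{2^n}$ the vector $y = (z + \mathbf{1})/2$ lies in $\B^{2^n}$, and by Theorem \ref{thm:multilinear_magic_poly} we have $p'(y) \in [0,1]$, so $p(z) = 2p'(y) - 1 \in [-1,1]$.

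For the robustness guarantee, fix $z \in \{-1,1\}^{2^n}$ with $\XOR_n \circ \SCGT_{2^n}(z) \neq *$ (interpreted in the $\pm 1$ encoding with $0 \leftrightarrow -1$ and $1 \leftrightarrow 1$, consistent with the paper's convention $(-1)^{1-F(x,y)}$), and let $\Delta \in [-c/n, c/n]^{2^n}$ where we choose $c := c'/2$. Then $(z+\Delta+\mathbf{1})/2 = y + \Delta/2$ with $\Delta/2 \in [-c'/(2n), c'/(2n)]^{2^n} \subseteq [-c'/n, c'/n]^{2^n}$, so by Theorem \ref{thm:multilinear_magic_poly},
\begin{equation}
    \bigl|p'(y + \Delta/2) - (\XOR_n \circ \SCGT_{2^n})_{\B}(y)\bigr| \le 1/3.
\end{equation}
Multiplying by $2$ and translating shows that $|p(z + \Delta) - \XOR_n\circ\SCGT_{2^n}(z)| \leq 2/3$ in the $\pm 1$ encoding, as required. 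The constant $c$ remains at least $10^{-5}$ (possibly after decreasing $c'$ slightly, which is absorbed into the existing constant from Theorem \ref{thm:multilinear_magic_poly}).

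There is no real obstacle: the lemma is a mechanical $\B \to \{-1,1\}$ translation of Theorem \ref{thm:multilinear_magic_poly}, and the only small care needed is tracking the factor of $2$ that arises in both the input perturbation (which shrinks by $1/2$ on the $\B$ side) and the output tolerance (which doubles when mapping $[0,1]$-valued outputs to $[-1,1]$-valued outputs). Both effects go in the favorable direction for us, so a single halving of the constant $c$ suffices.
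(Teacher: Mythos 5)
Your proposal is correct and is essentially the paper's own proof: Lemma~\ref{lem:multilinear_poly} is obtained from Theorem~\ref{thm:multilinear_magic_poly} by exactly the affine substitution $y_S = (z_S+1)/2$ on the inputs and $t \mapsto 2t-1$ on the output, which preserves multilinearity, degree, and boundedness, and converts the $[0,1]$ error guarantee into the $[-1,1]$ one. The only quibble is that halving the robustness constant is unnecessary (and risks dropping below $10^{-5}$): since the perturbation shrinks by a factor of $2$ when passed to the $\B$ side, taking $c = c'$ already suffices.
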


\begin{proof}
The existence of this polynomial follows directly
from \thm{multilinear_magic_poly}. All we need to do
is change bases from $\B$ to $\{-1,1\}$, which we can do
by applying the variable substitution $x\to 2x-1$ to all
input variables and to the output.
\end{proof}

The proof of the following composition lemma for $\PrOR$
will closely follow the proofs of \lem{gamma_or} (the analogous
composition theorem for $\OR$) and of \thm{approxdegPrOR}
(the composition theorem for $\PrOR$ in query complexity).

\begin{lemma}\label{lem:gamma_pror}
	Let $F:\X\times\Y\to\{0,1,*\}$ be a partial communication problem.
	Then
	\begin{equation}\log\agamma_2(\PrOR_n\circ F)\ge\Omega(\sqrt{n}\log\agamma_2(F)/\log n)-O(n/\log n).\end{equation}
\end{lemma}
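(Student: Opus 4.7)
The plan is to mirror the proof of \lem{gamma_or}, but replacing the CGT polynomial \thm{magic_poly_pm} with the multilinear SCGT polynomial \lem{multilinear_poly}, and adapting the composition argument to partial functions using the multilinearity/convex-combination trick from the proof of \thm{approxdegPrOR}. Because the SCGT polynomial has only $c/n$ robustness (rather than constant robustness), we will need to amplify the approximating matrix first, which is what introduces the $\log n$ loss in the final bound.

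Assume $F$ is non-trivial (otherwise the bound is vacuous or trivial from a small-$\agamma_2(F)$ base case handled as in \lem{gamma_or}), and fix $(a,b)\in\Dom(F)$ with $F(a,b)=0$. Let $T:=\log\agamma_2(\PrOR_n\circ F)$ and let $A$ be an optimal approximating matrix. Amplify $A$ entry-wise by the univariate polynomial from footnote \ref{footnote:amppoly} of degree $O(\log n)$ (applied via the Hadamard product), obtaining a matrix $A'$ that approximates $\PrOR_n\circ F$ to error $c/n$ while preserving $|A'_{xy}|\le 1$ and satisfying $\log\gamma_2(A')=O(T\log n)$. For each $S\subseteq[n]$, form $A'_S$ exactly as in \lem{gamma_or}: restrict $A'$ to the rows/columns where $x_i=a$ and $y_i=b$ for $i\notin S$, then tensor with the all-ones matrix $J$ in the ignored positions and rearrange. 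Then $\log\gamma_2(A'_S)\le\log\gamma_2(A')=O(T\log n)$, and $A'_S$ $(c/n)$-approximates the $\PrOR$ over the $S$-coordinates whenever all those coordinates lie in $\Dom(F)$.

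Next, use the Hadamard product to plug the family $\{A'_S\}$ into the multilinear polynomial $p$ from \lem{multilinear_poly}, obtaining a matrix $B$ with $B_{xy}=p(v_{xy})$ for $v_{xy}=((A'_S)_{xy})_{S\subseteq[n]}$. For any $(x,y)$ with every $(x_i,y_i)\in\Dom(F)$, those coordinates of $v_{xy}$ indexed by $S$ with $\sum_{i\in S}F(x_i,y_i)\in\B$ lie within $c/n$ of the correct $\pm1$ value, while the remaining coordinates only lie in $[-1,1]$. Write $v_{xy}$ as a convex combination of vectors obtained by replacing each ``bad'' coordinate with $\pm 1$; by multilinearity of $p$, $p(v_{xy})$ is the same convex combination of the values of $p$ on those $\pm 1$-completions. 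Each completion satisfies the SCGT promise (with hidden string $(F(x_i,y_i))_i$) up to a $c/n$ perturbation, so \lem{multilinear_poly} guarantees each such value is within $2/3$ of $\XOR_n\circ F(x,y)$, and hence so is $B_{xy}$. For inputs outside $\Dom(F)^n$, multilinearity plus $p\in[-1,1]$ on $\{-1,1\}^{2^n}$ yields $|B_{xy}|\le 1$, so $B$ approximates $\XOR_n\circ F$ to error $2/3$.

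Applying \lem{poly_matrix} to the degree-$O(\sqrt{n})$ polynomial $p$ on $N=2^n$ variables, together with the coefficient bound $\log C=O(\sqrt{n}\cdot n)=O(n^{1.5})$ from \thm{coeff_bounded}, gives
\begin{equation}
\log\agamma_2(\XOR_n\circ F)\;\le\;\log\gamma_2(B)\;=\;O\!\left(\sqrt{n}\,T\log n \,+\, n^{1.5}\right).
\end{equation}
Combining this with Sherstov's $\XOR$ theorem (\thm{xor}), which yields $\log\agamma_2(\XOR_n\circ F)=\Omega(n\log\agamma_2(F))$ after disposing of its universal additive constant as in \lem{gamma_or}, and dividing through by $\sqrt{n}\log n$ produces exactly $T\ge\Omega(\sqrt{n}\log\agamma_2(F)/\log n)-O(n/\log n)$. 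The main obstacle is the third paragraph: justifying that $B$ actually approximates $\XOR_n\circ F$ despite the matrices $A'_S$ taking arbitrary real values in $[-1,1]$ (rather than near-Boolean values) on the ``bad'' subsets. The multilinearity of $p$, together with its $[-1,1]$ boundedness on the Boolean cube, is precisely what makes the convex-combination argument go through and is the reason we could not simply apply Sherstov's robustification in place of \lem{multilinear_poly}.
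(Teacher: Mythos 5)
Your proposal is correct and follows essentially the same route as the paper's proof: amplify the approximating matrix to error $c/n$ at an $O(\log n)$-factor cost in $\log\gamma_2$, build the restricted matrices $A_S$ using the zero input $(a,b)$, compose with the multilinear $\SCGT$ polynomial via the Hadamard product, justify correctness by the convex-combination/multilinearity argument, bound $\log\gamma_2(B)$ via \lem{poly_matrix}, and finish with Sherstov's $\XOR$ theorem. The handling of the degenerate small-$\agamma_2(F)$ case and the disposal of the additive constant from \thm{xor} also match the paper's treatment.
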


\begin{proof}
Let $C$ be the constant from \thm{xor}. If
$\log\agamma_2(F)\le C+1$, the theorem holds trivially
by picking the constant in the $O(n)$ term to be larger than the
constant in the $\Omega(\sqrt{n}\log\agamma_2(F)/\log n)$ term
by a factor of at least $C+1$. Therefore, it suffices
to prove the result for functions $F$ satisfying
$\log\agamma_2(F)> C+1$.

We follow the proof of \lem{gamma_or} fairly closely.
Fix $F$ with $\log\agamma_2(F)> C+1$.
$F$ cannot be constant, so there is
an input pair $(a,b)$ such that $F(a,b)=0$.
Let $G=\PrOR_n\circ F$. For each $S\subseteq[n]$, let
$G_S$ be the function $\PrOR$ applied to the copies of
$F$ in the set $S$.
Let $A$ be a matrix approximating
$G$ to error $c/n$ with minimum $\gamma_2$ norm,
where $c$ is the constant from \lem{multilinear_poly}.
That is, $A_{xy}\in[-1,1]$ for all
$x\in\X,y\in\Y$, and $A_{xy}$ is within $c/n$
of $(-1)^{1-G(x,y)}$ for $(x,y)\in\Dom(G)$.

We take a moment to upper bound $\gamma_2(A)$.
Note that  $\log\agamma_2(\cdot)$ can be amplified
by plugging the matrix into a univariate amplification
polynomial (see \lem{amp}). A univariate polynomial
can amplify constant error to error $\epsilon$ using degree
$O(\log 1/\epsilon)$ (see footnote \ref{footnote:amppoly}). 
Moreover, if we plug in
a matrix with gamma $2$ norm $\ell$
into a polynomial of degree $d$ (using the Hadamard product),
the gamma $2$ norm of the result will be at most $\ell^d$
times the total sum of the absolute values of the coefficients
of the polynomial. Since a bounded univariate
polynomial of degree $d$
has coefficients that are at most $4^d$ (see, for example,
\cite{She13b}), we conclude the gamma $2$ norm of the result
is at most $d(4\ell)^d$. In our case, $d=O(\log n)$,
so we conclude that
$\log\gamma_2(A)=O(\log\agamma_2(G)\cdot\log n)$
(since one way to construct a good approximation matrix
like $A$ is to start with a $2/3$-approximation matrix
with logrank equal to $\log\agamma_2(G)$ and then amplify
it at $O(\log n)$ cost).

As before, we construct the matrices $A_S$ for each
$S\subseteq[n]$ as follows. First, let $A'_S$ be the submatrix
of $A$ we get by restricting $A$ to rows $x$ and columns $y$
satisfying $x_i=a$ and $y_i=b$ for all $i\notin S$. This effectively fixes the values on the copies
of $F$ outside the set $S$ to always be $0$. Next,
let $J$ denote the all-ones matrix with the same dimensions
as $F$, and tensor product $A'_S$ with $J$ $n-|S|$ times
(once for each $i\notin S$). This effectively adds $n-|S|$
inputs to $F$ that are always ignored. Finally, rearrange
the rows and columns of the resulting matrix so that the
ignored inputs are in the positions $i\notin S$. The final
matrix is $A_S$.

It is not hard to see that $A_S$ approximates $G_S$ to the
usual $2/3$ error for all $S$. Moreover, since $\gamma_2$
is invariant under rearrangements of rows and columns and
under tensor products with $J$, and since it is nonincreasing
under restrictions to a submatrix, we have
$\gamma_2(A_S)\le\gamma_2(A)$ for all $S$.

From \lem{multilinear_poly}, we have a polynomial $p$
of degree $d=O(\sqrt{n})$ that
(approximately) computes $\XOR_n\circ\SCGT_{2^n}$
with robustness $c/n$. We plug in the matrices $A_S$
into the variables $z_S$ of $p$ using the Hadamard
product as the matrix product, to get a matrix $B$,
and apply \lem{poly_matrix}. This gives $\log\agamma_2(B)=O(\deg(p)\log\agamma_2(A)+\deg(p)\log(2^n))
=O(\sqrt{n}\log n\log\agamma_2(G)+n^{1.5})$.

We now wish to show that $B$ approximates the sign matrix
of $\XOR_n\circ F$. That is, we need to show each entry
of $B$ is within $2/3$ of the corresponding entry of the sign
matrix. Fix an entry $(x,y)$ of $B$ with $x=(x_1,x_2,\dots,x_n)$
and $y=(y_1,y_2,\dots,y_n)$. Then $B_{xy}$ is the result
of applying $p$ to the variables $z_S=(A_S)_{xy}$.
We know that $(A_S)_{xy}$ approximates $G_S(x,y)$ for
$S$ such that $(x,y)\in\Dom G_S$. For $S$ such that $(x,y)\notin\Dom(G_S)$, we have $(A_S)_{xy}\in[-1,1]$.
Call the $S$ such that $(x,y)\in\Dom(G_S)$ \emph{good}
and the rest \emph{bad}.
As in the proof of \thm{approxdegPrOR},
we can the vector $(v_S)_{S\subseteq[n]}$ with $v_S=(A_S)_{xy}$
as a convex combination
of vectors that agree with $z_S$ on all good $S$ and take
$\{-1,1\}$ values on all bad $S$. The evaluation
of $p$ on each of these support vectors will then be
within $2/3$ of the sign matrix of $\XOR_n\circ F$
at the entry $(x,y)$, since these support vectors are within
$c/n$ of being integer points that satisfy the $\SCGT$
promise. Since $p$ is multilinear, it follows that it evaluates
to within $2/3$ of the right value on $v$ itself as well.
Hence $B_{xy}$ approximates the sign matrix of $\XOR_n\circ F$,
as we hoped.

\thm{xor} tells us that
$\log\gamma_2(B)=\Omega(n\log\agamma_2(F))$. But we also know
that
$\log\agamma_2(B)=O(\sqrt{n}\log n\log\agamma_2(G)+n^{1.5})$,
from which the desired result follows.
\end{proof}

As before, we can replace the additive $O(n)$ with a multiplicative term;
this time, we lose a quasi-polylogarithmic factor in $n$ instead of merely
a polylogarithmic factor.

\begin{theorem}\label{thm:pror_gamma}
	For all $F:\X\times\Y\to\B$,
	\begin{equation}\log\agamma_2(\PrOR_n\circ F)
	\ge n^{1/2-o(1)}\log\agamma_2(F).\end{equation}
\end{theorem}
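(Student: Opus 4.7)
The plan is to bootstrap Lemma \ref{lem:gamma_pror} using the recursive framework of Lemma \ref{lem:recurse}. Set $\alpha_k(G) := \PrOR_k \circ G$ and $M := \log\agamma_2(\cdot)$. The crucial algebraic fact making Lemma \ref{lem:recurse} applicable is the self-reducibility $\alpha_n\circ\alpha_m=\alpha_{nm}$: as partial functions, $\PrOR_n\circ\PrOR_m\circ G$ and $\PrOR_{nm}\circ G$ coincide, because the shared promise ``at most one of the $nm$ copies of $G$ outputs $1$'' is equivalent to ``at most one of the $n$ groups has a copy outputting $1$, and within that group at most one copy outputs $1$.'' Monotonicity of $M$ under $\alpha_k$ follows by replacing redundant copies of $F$ with a fixed $0$-input (assuming $F$ is not identically $1$; otherwise $M(F)\le0$ and the theorem is trivial).

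For the multiplicative hypothesis of Lemma \ref{lem:recurse}, I invoke Lemma \ref{lem:gamma_pror} with the uniform parameters $a := \lceil c_1\log n\rceil$ and $b := c_2$ (for universal constants $c_1,c_2$), valid for all $k\le n$. For the floor hypothesis $M(\alpha_k(G))\ge\sqrt{k}/c-d$, I argue by cases on $F$: if $F$ is equivalent to $\NOTEQ$ then $\log\agamma_2(F)\le0$ and the target inequality is trivial; otherwise, by the argument in Corollary \ref{cor:NOTEQ}, $F$ contains $\AND^{\CC}$ as a subproblem, so every $G=\PrOR_{k'}\circ F\in S$ satisfies $\alpha_k(G)=\PrOR_{kk'}\circ F\supseteq \PrOR_{kk'}\circ\AND^{\CC}$, which is unique-disjointness on $kk'$ bits. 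Razborov's lower bound, which actually applies to unique-disjointness, gives $\log\agamma_2(\PrOR_{kk'}\circ\AND^{\CC})=\Omega(\sqrt{kk'})\ge\Omega(\sqrt{k})$, furnishing constants $c,d$ independent of $k$.

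Plugging into the conclusion of Lemma \ref{lem:recurse} and computing with $a=\Theta(\log n)$,
\begin{equation}
16cba^4+5a(\log n)^{2\log 2a}=O(\log^4 n)+O(\log n)\cdot(\log n)^{O(\log\log n)}=2^{O((\log\log n)^2)}=n^{o(1)},
\end{equation}
since $(\log\log n)^2=o(\log n)$. Therefore
\begin{equation}
\log\agamma_2(\PrOR_n\circ F)\ge \frac{\sqrt{n}\log\agamma_2(F)}{n^{o(1)}}-d=n^{1/2-o(1)}\log\agamma_2(F)-O(1).
\end{equation}
The residual $-O(1)$ is absorbed into the exponent: if $\log\agamma_2(F)\ge 1$ it loses only a constant factor (which can be swallowed by slightly enlarging the $o(1)$), and if $0<\log\agamma_2(F)<1$ the floor $\log\agamma_2(\PrOR_n\circ F)=\Omega(\sqrt{n})$ itself already dominates $n^{1/2-o(1)}\log\agamma_2(F)$ for large $n$ and suitable $o(1)$.

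The main obstacle is the careful bookkeeping around the non-constant parameter $a=\Theta(\log n)$: this is exactly the feature that turns the $\polylog(n)$ loss in Theorem \ref{thm:approxrank} into a quasi-polylog loss here, and it is precisely why the bound takes the form $n^{1/2-o(1)}$ rather than $\sqrt{n}/\polylog(n)$. A secondary, but purely technical, point is citing Razborov's unique-disjointness bound (a slight strengthening of Theorem \ref{thm:disjointness}) to anchor the floor condition; without this one would instead have to re-derive the $\Omega(\sqrt{n})$ lower bound for $\PrOR_n\circ\AND^{\CC}$ using the multilinear robust polynomial of Lemma \ref{lem:multilinear_poly} together with the inner-product lower bound, by mimicking the proof of Theorem \ref{thm:disjointness}.
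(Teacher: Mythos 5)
Your proposal is correct and follows essentially the same route as the paper: apply \lem{recurse} with $\alpha_k = \PrOR_k\circ(\cdot)$ using the exact identity $\PrOR_s\circ\PrOR_t=\PrOR_{st}$, feed in \lem{gamma_pror} with $a=\Theta(\log n)$ (whence the quasi-polylogarithmic, i.e.\ $n^{o(1)}$, loss), and anchor the floor condition via the unique-disjointness lower bound after splitting into the $\NOTEQ$-equivalent case and the case where $F$ contains $\AND^{\cc}$. Your write-up is in fact more explicit than the paper's on the bookkeeping (the $(\log n)^{O(\log\log n)}=2^{O((\log\log n)^2)}=n^{o(1)}$ computation and the absorption of the additive $-d$), but there is no substantive difference in approach.
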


We note that we do not prove this theorem for partial functions
due to a technicality: we cannot handle partial functions which do not
contain $\AND^{\cc}$ as a subproblem yet are not equivalent to
$\NOTEQ$. It is possible that some such function exists that has
$\agamma_2(F)>1$ and yet behaves like $\NOTEQ$ in that
$\log\agamma_2(\OR_n\circ F)=O(1)$. However, we note that our proof
\emph{does} apply to partial functions $F$ that contain the two-bit
$\AND^{\cc}$ as a subproblem or can otherwise be shown to obey
$\agamma_2(\PrOR_n\circ F)=\Omega(\sqrt{n})$.

\begin{proof}
    We use \lem{recurse}.
	The key to the proof is the observation that
	\begin{equation}\PrOR_s\circ\PrOR_t=\PrOR_{st}.\end{equation}
	This equality holds exactly: the function on the left has the same
	domain as the function on the right, and the two give the same values
	on that domain.
	
	This means we can apply \lem{recurse} in the same way
	as we did for
	\thm{polylog_gamma}; the only difference is that the term $a$
	is no longer constant, but instead as large as $O(\log n)$ due to
	the loss in \lem{gamma_pror}. Since the final loss in
	\thm{polylog_gamma} was $(\log n)^{O(\log a)}$, this means we lose
	a quasi-polylogarithmic factor in $n$.
	
	An additional requirement for applying \lem{recurse}
	is that $\log\agamma_2(\PrOR_n\circ F)=\tOmega(\sqrt{n})$.
	For this, note that if $F$ contains the two-bit
	$\AND^{\cc}$ function as a subproblem, then
	$\PrOR_n\circ F$ contains unique disjointness
	as a subproblem, which satisfies the desired lower bound
	(one can show this directly by a modification
	of \thm{disjointness}, losing a log factor,
	or using the known lower bound for unique set disjointness~\cite{Raz03,She11}).
	If $F$ does not contain the two-bit $\AND^{\cc}$ function
	as a subproblem,
	and if $F$ is a total function, then $F$ is equivalent
	to $\NOTEQ$. In this case, $\log\agamma_2(F)\le 0$
	and the theorem follows trivially.
\end{proof}

\subsection{Symmetric function composition}
\label{sec:commsym}
We have reached the final result of this section:
a composition theorem for approximate gamma $2$ norm
with arbitrary symmetric functions on the outside.

\begin{theorem}\label{thm:sym_gamma}
	For all $F:\X\times\Y\to\B$ and all symmetric functions
	$g:\B^n\to\B$,
	\begin{equation}\log\agamma_2(g\circ F)
	\ge \adeg(g)^{1-o(1)}\log\agamma_2(F).
	\end{equation}
\end{theorem}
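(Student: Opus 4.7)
The plan is to closely mirror the proof of \thm{approxdegsym}, translating each step from bounded approximate degree to the logarithm of approximate $\gamma_2$ norm. First, I would invoke Paturi's theorem to identify the Hamming weight $k\le n/2$ at which the symmetric function $g$ changes value between $|x|=k$ and $|x|=k+1$, so that $\adeg(g)=\Theta(\sqrt{nk})$. Then $g$ contains either $\PrTH_n^k$ or its negation as a sub-function; since negating the sign matrix preserves $\gamma_2$, $\log\agamma_2$ is invariant under output negation, and hence
\begin{equation}
\log\agamma_2(g\circ F)\ge\log\agamma_2(\PrTH_n^k\circ F).
\end{equation}
Just as in the query proof, $\PrTH_{2k}^k\circ\PrOR_{n/2k}$ is a sub-function of $\PrTH_n^k$, obtained by partitioning the $n$ inputs into $2k$ blocks of size $n/2k$ and restricting to inputs where each block has Hamming weight $0$ or $1$. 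Since fixing inputs corresponds to passing to a submatrix of the sign matrix and $\gamma_2$ is non-increasing under submatrices, we get
\begin{equation}
\log\agamma_2(\PrTH_n^k\circ F)\ge\log\agamma_2(\PrTH_{2k}^k\circ\PrOR_{n/2k}\circ F).
\end{equation}

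Next, I would apply \thm{pror_gamma} to the inner $\PrOR$ composition, yielding $\log\agamma_2(\PrOR_{n/2k}\circ F)\ge (n/2k)^{1/2-o(1)}\log\agamma_2(F)$. To finish, I would need a composition theorem for the outer $\PrTH_{2k}^k$ acting on an arbitrary (partial) communication problem $H$, of the form
\begin{equation}
\log\agamma_2(\PrTH_{2k}^k\circ H)\ge\tOmega\bigl(k\cdot\log\agamma_2(H)\bigr),
\end{equation}
which is the communication analog of \thm{Sherstov_compose} applied to the linear-degree function $\PrTH_{2k}^k$ (for which $\bdeg(\PrTH_{2k}^k)=\Theta(k)$). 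Chaining the two bounds through $H=\PrOR_{n/2k}\circ F$ produces
\begin{equation}
\log\agamma_2(g\circ F)\ge\tOmega\bigl(k\cdot (n/2k)^{1/2-o(1)}\log\agamma_2(F)\bigr)=\adeg(g)^{1-o(1)}\log\agamma_2(F),
\end{equation}
as required, since $\adeg(g)=\Theta(\sqrt{nk})$.

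The main obstacle is establishing the outer $\PrTH_{2k}^k$ composition theorem in the communication setting. The query analog \thm{Sherstov_compose} is proven via Sherstov's dual block method, which builds a dual witness for $g\circ h$ out of dual witnesses for $g$ and $h$. One natural route is to combine a dual witness for $\bdeg(\PrTH_{2k}^k)=\Omega(k)$ with a witness for large $\agamma_2(H)$ through a pattern-matrix-style product, producing a dual object that certifies large $\agamma_2$ of the composition. An alternative, more in the spirit of the rest of the paper, would be to adapt the ``quantum upper bounds imply classical lower bounds'' machinery of \sec{commOR} and \sec{commPrOR}: replace Belovs' algorithm for $\SCGT$ with a quantum algorithm tailored to threshold-style queries of $H$, and derive a multilinear slightly-robust polynomial analogous to \lem{multilinear_poly} for the corresponding decision version. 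Either approach should yield the needed bound with at most polylogarithmic overhead, which is harmlessly absorbed into the $n^{o(1)}$ factor in the final estimate.
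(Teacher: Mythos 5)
Your proposal follows essentially the same route as the paper: Paturi's theorem to locate the critical Hamming layer $k$, restriction to $\PrTH_n^k$ and then to $\PrTH_{2k}^k\circ\PrOR_{n/2k}$, \thm{pror_gamma} for the inner $\PrOR$ composition, and a linear-degree outer composition theorem for $\PrTH_{2k}^k$. The one thing you flag as the ``main obstacle'' --- a $\gamma_2$ analogue of \thm{Sherstov_compose} for outer functions of linear approximate degree --- is not something the paper proves; it cites it directly from Sherstov~\cite{She12}, which already contains the communication-complexity version (proved exactly via the dual-witness/pattern-matrix machinery you sketch as your first route), so no new quantum algorithm for threshold queries is needed.
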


We note that as for \thm{pror_gamma}, this theorem generalizes to partial
functions $F$ so long as $F$ contains the two-bit $\AND^{\cc}$
and two-bit $\OR^{\cc}$
as subproblems or can otherwise be shown to obey
$\agamma_2(\PrOR_n\circ F)=\Omega(\sqrt{n})$
and $\agamma_2(\PrAND_n\circ F)=\Omega(\sqrt{n})$.
We do not analyze partial functions $g$ at all, and leave this for future work.

\begin{proof}
	The proof mirrors that of \thm{approxdegsym}. Summarizing,
	we use a theorem of Paturi to characterize $\adeg(g)$ in terms
	of the most central Hamming layer $k$ such that $g$ behaves
	differently on Hamming layers $k$ and $k+1$.
	We then restrict $g$ to those Hamming layers; this reduces the
	problem to showing a composition theorem with $\PrTH_n^k$
	the outside.
	By negating the function if necessary, we
	assume without loss of generality that $k\le n/2$.
	A further restriction to a promise reduces the problem
	to showing a composition theorem with $\PrTH_{2k}^k\circ\PrOR_{n/2k}$
	on the outside. The composition theorem for $\PrOR_{n/2k}$
	follows from \thm{pror_gamma}. Finally, the composition
	theorem for $\PrTH_{2k}^k$ follows from the fact that
	$\adeg(\PrTH_{2k}^k)=\Omega(k)$ (as shown by Paturi) together with
	Sherstov's composition theorem \cite{She12}
	that applies when the approximate
	degree of the outer function is linear, a communication
	analogue of \thm{Sherstov_compose}.
\end{proof}

\section{Extensions of our results}

In this section we show two extensions of our main results.

\subsection{Extension to \texorpdfstring{$\OR$}{OR} of different functions}
\label{sec:unbalanced}

In \sec{query}, we showed a lower bound on the approximate degree of the OR of $n$ copies of a function $f$ (\thm{approxdeg}).
In this section, we will extend this result to a non-uniform version of this theorem for the case when the functions $f_i$ in the $\OR$ are possibly different. 

For $n$ Boolean functions $f_1,\ldots,f_n$, let $\OR_n \circ \, (f_1,f_2,\ldots,f_n)$ denote the function $\bigvee_{i=1}^n f_i(x_i)$, where $x_i$ is an input to $f_i$ and $x_i$'s have disjoint variables for different $i$. These functions may be on different input sizes, hence the $x_i$ may be of different sizes. We completely characterize the approximate degree of this function.

\begin{restatable}{theorem}{approxdegnu}
\label{thm:approxdegnu}
For any Boolean functions $f_1,f_2,\ldots,f_n$, we have 
\begin{equation}
    \adeg\Bigl(\OR_n \circ \, (f_1,f_2,\ldots,f_n)\Bigr) 
    = \Theta\Bigl(\sqrt{{\textstyle\sum_i} \adeg(f_i)^2}\Bigr).
\end{equation}
\end{restatable}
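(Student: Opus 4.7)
The plan is to prove matching upper and lower bounds of $\Theta(\sqrt{\sum_i d_i^2})$, where $d_i := \adeg(f_i)$. For the upper bound, I would construct a polynomial of degree $O(\sqrt{\sum_i d_i^2})$ by robustifying each approximating polynomial $p_i$ for $f_i$ (via \thm{Sherstovrobust}) and composing with a polynomial for $\OR_n$ corresponding in spirit to a quantum OR-finding routine with varying ``costs'' $d_i$. A careful degree accounting (where each $d_i$ contributes to the total in the $\ell_2$ sense rather than the worst-case sense) yields the claimed upper bound.

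For the lower bound, my plan is to extend the proof of \thm{approxdeg} so that it handles arbitrary subsets of the functions. Let $T := \adeg(\OR_n \circ (f_1, \ldots, f_n))$, fix an approximating polynomial $q$ of degree $T$, and pick any subset $S \subseteq [n]$. Substituting a zero-input of $f_i$ (which we may assume exists, exactly as in the proof of \thm{approxdeg}) for each $x_i$ with $i \notin S$ yields a polynomial $q_S$ of degree at most $T$ that approximates $\OR_{|S|} \circ (f_i)_{i \in S}$. Plugging the $q_S$'s into the robust polynomial for $\XOR_{|S|} \circ \CGT_{2^{|S|}}$ from \thm{magic_poly} produces a polynomial of degree $O(T\sqrt{|S|})$ approximating $\XOR_{|S|} \circ (f_i)_{i \in S}$, and the parity bound of \thm{Sherstovparity} then yields $T\sqrt{|S|} \ge \Omega(\sum_{i \in S} d_i)$, i.e.
\[
T \;\geq\; \Omega\!\left(\frac{1}{\sqrt{|S|}} \sum_{i \in S} d_i\right) \quad \text{for every } S \subseteq [n].
\]
This is a direct strengthening of \eq{ORmin}, which one can recover by Cauchy--Schwarz.

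To turn this family of subset bounds into $T \geq \Omega(\sqrt{\sum_i d_i^2})$, I would sort $d_1 \geq d_2 \geq \cdots \geq d_n$ and bucket by scale: let $B_j := \{i : d_i \in [2^{j-1}, 2^j)\}$. Since $\sum_i d_i^2 = \Theta(\sum_j |B_j|\cdot 4^j)$ and there are $O(\log\max_i d_i)$ nonempty buckets, the heaviest bucket $B_{j^*}$ carries an $\Omega(1/\log\max_i d_i)$-fraction of the total mass. Applying the displayed bound with $S = B_{j^*}$ (all indices in $S$ having $d_i \asymp 2^{j^*}$, so the $\ell_1$ bound is nearly tight against Cauchy--Schwarz within $S$) gives $T \ge \tOmega(\sqrt{\sum_i d_i^2})$. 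To sharpen this to the advertised $\Theta$, one combines the bucket bounds across scales by observing that the subset bound applied to a \emph{prefix} of the sorted sequence and the trivial bound $T \geq \max_i d_i$ are simultaneously available, and in particular rule out the adversarial profile (such as $d_i \sim \sqrt{n/i}$) on which any single bucket loses.

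The main obstacle is closing the $\sqrt{\log \max_i d_i}$ factor produced by the naive bucketing step. This requires interpolating carefully between the $\ell_\infty$-type bound $T \geq \max_i d_i$ (coming from $\adeg(\OR_n \circ (f_i)) \ge \adeg(f_{i^*})$ for any single $i^*$) and the $\ell_1$-type subset bound above, so as to cover the smoothly-varying regime without accumulating logarithmic loss. As an application, iterating the resulting tight composition theorem through the levels of a constant-depth read-once formula immediately yields $\adeg(\text{formula}) = \Omega(\sqrt{n})$.
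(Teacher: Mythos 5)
Your subset bound $T \ge \Omega\bigl(\frac{1}{\sqrt{|S|}}\sum_{i\in S}\adeg(f_i)\bigr)$ for every $S\subseteq[n]$ is correctly derived (it follows from the same substitution-plus-$\CGT$ argument as \eq{ORmin}, using the full strength of \thm{Sherstovparity}), but the family of all such bounds, even together with $T\ge\max_i\adeg(f_i)$, is provably \emph{insufficient} to give $T=\Omega\bigl(\sqrt{\sum_i\adeg(f_i)^2}\bigr)$. The adversarial profile you yourself name, $d_i=\adeg(f_i)\sim\sqrt{n/i}$, is a genuine counterexample rather than something that can be "ruled out": there $\sum_i d_i^2=\Theta(n\log n)$, so the target is $\Omega(\sqrt{n\log n})$, yet by Cauchy--Schwarz every subset bound is at most $O\bigl(\sqrt{\sum_{i\in S}d_i^2}\bigr)$, every prefix $S=[m]$ gives only $\frac{1}{\sqrt m}\sum_{i\le m}\sqrt{n/i}=\Theta(\sqrt n)$, every dyadic bucket gives $\Theta(\sqrt n)$, and $\max_i d_i=\sqrt n$. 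So \emph{no} combination of the quantities you have access to exceeds $O(\sqrt n)$, and the $\sqrt{\log}$ loss cannot be closed by any interpolation between the $\ell_1$ and $\ell_\infty$ bounds. You need more information than these scalar inequalities provide.

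The paper closes this gap with a self-reducibility/regrouping trick that you are missing. Writing $d_i=\adeg(f_i)^2$, $k=\mathrm{lcm}(d_1,\dots,d_n)$, $\ell=\sum_i d_i$, and $F=\OR_n\circ(f_1,\dots,f_n)$, one considers $G=\OR_k\circ F=\OR_{nk}\circ(f_1,\dots,f_1,\dots,f_n,\dots,f_n)$ and regroups the $nk$ copies into $\ell$ blocks, where each block is $\OR_{k/d_i}$ of copies of a single $f_i$ and hence has approximate degree $\Theta(\sqrt{k/d_i}\cdot\adeg(f_i))=\Theta(\sqrt k)$ --- identical across blocks. Because all blocks are now balanced, the weak min-based lower bound \eq{ORmin} and the max-based upper bound coincide, giving $\adeg(G)=\Theta(\sqrt{\ell k})$; comparing with $\adeg(G)=\Theta(\sqrt k\,\adeg(F))$ yields both directions of $\adeg(F)=\Theta(\sqrt\ell)$ simultaneously. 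This also makes your separate upper-bound sketch (a weighted, variable-cost robust $\OR$ polynomial) unnecessary: the paper only ever invokes the unweighted bounds \eq{nulower} and \eq{nuupper}.
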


\begin{proof}
Our proof will only the following two facts:
\begin{align}
 &\adeg(\OR_n \circ \, (f_1,f_2,\ldots,f_n))=\Omega(\sqrt{n} \min_i \adeg(f_i)), ~ \mathrm{and} \label{eq:nulower}\\
 &\adeg(\OR_n \circ \, (f_1,f_2,\ldots,f_n))=O(\sqrt{n} \max_i \adeg(f_i)). \label{eq:nuupper}
\end{align}
The first equation is the same as \eq{ORmin}, proved at the end of \sec{query}. The second equation follows from Sherstov's robust polynomial construction (\thm{Sherstovrobust}), since we can take a robust polynomial for $\OR_n$ of degree $O(\sqrt{n})$ and plug in for the \th{i} variable the approximating polynomial for $f_i$.

To see how these imply the claim, consider the function $F= \OR_n \circ \, (f_1,f_2,\ldots,f_n)$.
Let 
\begin{equation}
    d_i = \adeg(f_i)^2.
\end{equation}
Now let $k$ be the least common multiple of the numbers $d_i$, which is well defined as the each $d_i$ is a positive integer. Consider the function $G=\OR_k \circ F$. By \eq{nulower} and \eq{nuupper}, we have that
\begin{equation}
\adeg(G) = \Theta(\sqrt{k}\,\adeg(F))\label{eq:Gupper}.
\end{equation}
Now since $\OR$ is associative, we can also write $G$ as 
\begin{equation}
G= \OR_{k}\circ F = \OR_{nk}\circ (\underbrace{f_1,\ldots,f_1}_\text{$k$ times},\underbrace{f_2,\ldots,f_2}_\text{$k$ times},\ldots, \underbrace{f_n,\ldots, f_n}_\text{$k$ times}).
\end{equation}

Now let us regroup the $k$ copies of $f_i$ into smaller $\OR$s such that each group has the same total weight, where we think of $f_i$ having weight $d_i$. 
So the $k$ copies of $f_i$ is regrouped into $d_i$ groups, each containing $k/d_i$ copies of $f_i$. 
Let $\ell = \sum_{i=1}^{n} d_i$. We can rewrite $G$ as
\begin{align}
G = \OR_{\ell}\circ \Big(&\overbrace{\OR_{k/d_1}\circ(f_1,\ldots f_1),\ldots,\OR_{k/d_1}\circ (f_1,\ldots f_1)}^\text{$d_1$ times}, \nonumber \\
&\ldots, \underbrace{\OR_{k/d_n}\circ (f_n,\ldots f_n),\ldots,\OR_{k/d_n}\circ (f_n,\ldots f_n)}_\text{$d_n$ times} \Big). \label{eq:group}
\end{align}

For example, if $d_1=1$, $d_2=2$, and $d_3=3$, then $k=6$ and this would be 
\begin{align*}
G &= \OR_{18}\circ (f_1,f_1,f_1,f_1,f_1,f_1,f_2,f_2,f_2,f_2,f_2,f_2, f_3,f_3,f_3,f_3,f_3,f_3)\\
&= \OR_{6}\circ \left( \OR_{6} (f_1,\ldots,f_1),\OR_{3}(f_2,f_2,f_2), \OR_3(f_2,f_2,f_2), \OR_2(f_3,f_3), \OR_2(f_3,f_3),\OR_2(f_3,f_3)\right). 
\end{align*}

Now by \eq{nulower} and \eq{nuupper}, we can compute the approximate degree of each of these groups. We have for all $i\in[n]$,
\begin{equation}
\adeg(\OR_{k/d_i} \circ f_i) = \Theta \left(\sqrt{k/d_i} \adeg(f_i)\right) = \Theta\bigl(\sqrt{k}\bigr). \label{eq:a22}
\end{equation}
Using this, we can apply \eq{nulower} and \eq{nuupper} to \eq{group}, and get
\begin{equation}
\adeg(G)= \Theta(\sqrt{\ell k})\label{eq:Glower}.
\end{equation}
Combining equations \eq{Gupper} and \eq{Glower}, we get
\begin{equation}
\adeg(F) = \Theta\left(\sqrt{\textstyle\sum_i \adeg(f_i)^2}\right).\qedhere
\end{equation}
\end{proof}

\subsection{Extension to quantum information complexity}
\label{sec:QIC}

In this section we prove an $\OR$-composition theorem for quantum information complexity and in particular, we establish \thm{QIC}, which we state below: 
\begin{restatable}{theorem}{QICdisj}
\label{thm:QIC}
Let $\DISJ_n$ be the set disjointness function and let $\QIC(F,\epsilon)$ denote the $\epsilon$-error distribution-free quantum information complexity of $F$. Then
\begin{equation}
\QIC (\DISJ_n,1/3) = \Omega(\sqrt{n}/\log n).    
\end{equation}
\end{restatable}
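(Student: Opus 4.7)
The plan is to mirror the reduction used in \thm{disjointness}, but working directly with quantum information cost instead of approximate rank. Suppose $\Pi$ is a protocol for $\DISJ_n$ with error $1/3$ and $\QIC(\Pi)=c$. For each subset $S\subseteq[n]$, Alice and Bob can derive from $\Pi$ a protocol $\Pi_S$ computing the restricted disjointness $\bigvee_{i\in S}(x_i\wedge y_i)$ by simply zeroing out their input bits outside $S$; this restriction cannot increase the quantum information cost, so $\QIC(\Pi_S)\le c$.

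The next step is to apply Belovs' quantum algorithm for $\SCGT_{2^n}$, modified to output the XOR of the hidden string, viewing the $n$ bits $(x_i\wedge y_i)_{i\in[n]}$ as the hidden string. The algorithm makes $O(\sqrt{n})$ subset-$\OR$ queries, and each such query is exactly an instance of restricted disjointness of the form $\Pi_S$. Using a Buhrman--Cleve--Wigderson-style simulation, we compose the $\Pi_S$'s into a protocol $\Pi'$ for $\IP_n = \XOR_n\circ\AND^{\CC}$ that makes $O(\sqrt{n})$ coherent invocations of the $\Pi_S$'s.

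Controlling $\QIC(\Pi')$ is where the main work lies. By a direct sum property of quantum information cost (working from the information-theoretic definition), $O(\sqrt{n})$ coherent invocations of protocols of information cost at most $c$ contribute $O(\sqrt{n}\cdot c)$ to the total information cost. The subtlety, analogous to the one in \lem{gamma_or} for $\agamma_2$, is that the subset register $S$ of Belovs' algorithm must be accessible to both players in order to perform the input restriction inside each query; a naive distribution of $S$ would cost $\Omega(n)$ bits per query, destroying the bound. A recursive argument exploiting the self-similarity $\OR_n=\OR_{\sqrt{n}}\circ\OR_{\sqrt{n}}$, in the spirit of \lem{recurse}, should reduce this overhead to a multiplicative polylogarithmic factor, yielding $\QIC(\Pi') = \tO(\sqrt{n}\cdot c)$.

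Finally, I would combine this with a lower bound $\QIC(\IP_n) = \Omega(n)$, which can be proved directly from the information-theoretic definition via a direct sum argument: $\IP_n$ is the XOR of $n$ independent copies of $\AND^{\CC}$, each of constant $\QIC$, and QIC is additive under XOR composition (the QIC analogue of \thm{xor}). Combining the upper and lower bounds gives $c = \tOmega(\sqrt{n})$, matching the stated $\Omega(\sqrt{n}/\log n)$. The principal obstacle I expect is bounding the QIC overhead of the BCW-style simulation in the quantum information framework: unlike approximating matrices, protocols must be invoked coherently, and both the direct-sum additivity of QIC across repeated coherent invocations and the subset-register bookkeeping must be carefully accounted for. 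The recursive $\log$-loss trick from \lem{recurse} should absorb the subset overhead, but making this rigorous at the level of quantum information cost rather than matrix norms is the delicate part.
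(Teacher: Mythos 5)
Your high-level architecture (Belovs' algorithm driving coherent invocations of the assumed cheap protocol, followed by a hardness result for the composed task) matches the paper's, but one of your two divergences from it is a genuine gap.

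The gap: you route the argument through $\IP_n=\XOR_n\circ\AND^{\CC}$ and invoke ``QIC is additive under XOR composition'' as a QIC analogue of \thm{xor}. No such XOR lemma for quantum information complexity is established in the paper, and it is not an elementary consequence of the definitions; the direct sum theorem for QIC (\lem{directsum}) applies to the task of computing \emph{all} $n$ copies, i.e.\ $F^n$, not to the XOR of the copies, which could a priori be much easier. The paper sidesteps this entirely: since Belovs' algorithm learns the whole hidden string $(x_i\wedge y_i)_{i\in[n]}$, Bob simply outputs all $n$ answers, so the composed protocol $\tau$ solves the direct-sum problem $F^n$ with $F=\AND^{\CC}$, and one concludes via $\QIC(F^n,\epsilon)\ge n\,\QIC(F,\epsilon)$ together with $\QIC(\AND^{\CC},1/3)=\Omega(1)$. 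Taking the XOR discards exactly the structure that makes the known lower bound applicable; you should keep the full output string.

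Second, your concern that the subset register $S$ must be accessible to both players, and the proposed recursive fix via \lem{recurse}, is unnecessary for disjointness: $\AND^{\CC}$ has an all-zeroes column ($y_i=0$ forces the output to $0$), so Bob --- who runs Belovs' algorithm and holds $S$ --- restricts to $S$ by zeroing out his own inputs, and Alice never needs to learn $S$. The recursion is only invoked in the paper for general $F$ lacking such a column (the corollary following \thm{QIC_OR}), not for $\DISJ_n$. Conversely, you underweight the steps that actually require work and account for the $\log n$ loss: the error must first be driven down to $1/\poly(n)$ (\lem{error_reduction}); the protocol must be cleaned up by uncomputation so that $O(\sqrt{n})$ coherent reuses do not accumulate correlated garbage (\lem{clean}); and because each invocation of $\Pi$ inside $\tau$ sees an induced input distribution $\nu$ different from the original $\mu$, one needs the quantifier-switched form of QIC (\lem{minimax}) before summing the per-invocation conditional mutual information terms (the chain-rule computation culminating in the paper's bound of $O(I\log n)$ per query).
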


\subsubsection{Preliminaries}

In this section we assume the reader is familiar with quantum information complexity. 
The reader is referred to \cite{Touchette} for more details.

Let us start by recalling the definition of quantum information complexity.
Let $\Pi$ be a quantum communication protocol and let $\mu$ be a probability distribution over  inputs $(x,y)$.
At the end of round $i$ of the protocol, we assume there are three registers $A_i$, $B_i$, and $C_i$.
$A_i$ is with Alice and $B_i$ is with Bob and $C_i$ was sent as a message in the $i^{\text{th}}$ round by either Alice or Bob depending on whether $i$ is odd or even, respectively. 
Let $\ket{\psi^{i,x,y}}_{A_i,B_i,C_i}$ be the joint state on the three registers on input $(x,y)$. 
The total purified state at the end of round $i$ is given by
\begin{equation}
\ket{\psi^i}_{X,Y,R, A_i,B_i,C_i} = \sum_{x,y} \sqrt{\mu(x,y)} \ket{x}_X \ket{y}_Y \ket{x,y}_R \ket{\psi^{i,x,y}}_{A_i,B_i,C_i}.
\end{equation}
Then the quantum information cost of $\Pi$ on the distribution $\mu$, $\QIC(\Pi,\mu)$, is given by
\begin{equation}
\QIC(\Pi, \mu) = \sum_{i \: \text{odd}} I(R; C_i| Y, B_i)_{\psi^i} + \sum_{i \: \text{even}} I(R; C_i| X, A_i)_{\psi^i}.
\end{equation}
Note that our definition of quantum information cost is off by a factor of $2$ from the definition in \cite{Touchette}. However, for simplicity we ignore this factor, since our lower bounds ignore constant factors anyway.

Now the quantum information complexity of a function $F$ on distribution $\mu$ with error $\epsilon$, denoted $\QIC(F,\mu, \epsilon)$, is defined as 
\begin{equation}
\QIC(F,\mu,\epsilon) = \inf_{\substack{\Pi:~\Pi \textrm{ computes } F\\ \textrm{with error at most } \epsilon}} \QIC(\Pi,\mu),
\end{equation}
the infimum of quantum information costs of protocols $\Pi$ (with respect to $\mu$) which compute $F$ with error at most $\epsilon$. 
Finally, the distribution-free quantum information complexity of $F$, $\QIC(F, \epsilon)$, is defined as
\begin{equation}
\QIC(F, \epsilon) = \max_{\mu} \QIC(F,\mu, \epsilon),
\end{equation}
the maximum over distributions $\mu$ of $\QIC(F,\mu, \epsilon)$. If $\epsilon$ is unspecified, it is taken to be $1/3$, and we use $\QIC(F)$ to mean $\QIC(F,1/3)$.

We will need the following basic lemmas about quantum information cost and quantum information complexity. 
The first lemma is about switching the quantifiers in the definition of quantum information complexity.

\begin{lemma}[\cite{BGKMT15}]\label{lem:minimax} Consider the following alternate definition of quantum information complexity:
\begin{equation}
\widetilde{\textnormal{QIC}}(F, \epsilon) = \inf_{\substack{\Pi:~\Pi \textrm{ computes } F\\ \textrm{with error at most } \epsilon}} \: \max_{\mu} \: \QIC(\Pi, \mu).
\end{equation}
Then 
\begin{equation}
\widetilde{\textnormal{QIC}}(F, 2\epsilon) \le 2 \cdot \QIC(F, \epsilon).    
\end{equation}
\end{lemma}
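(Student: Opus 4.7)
The plan is to view the statement as a minimax swap. Setting $C := \QIC(F,\epsilon)$, the hypothesis says $\max_\mu \inf_\Pi \QIC(\Pi,\mu) \le C$, where the infimum is over protocols with error at most $\epsilon$ on $\mu$; the goal reverses the order of the quantifiers at the cost of a factor of $2$ in both cost and error. I would encode the two constraints into a single bicriteria payoff
\begin{equation}
\phi(\Pi,\mu) \;:=\; \frac{\QIC(\Pi,\mu)}{C} \;+\; \frac{\mathrm{err}_\mu(\Pi)}{\epsilon},
\end{equation}
so that by hypothesis $\inf_\Pi \phi(\Pi,\mu) \le 2$ for every $\mu$, hence $\max_\mu \inf_\Pi \phi(\Pi,\mu) \le 2$. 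The key move is to swap $\inf$ and $\max$ to produce a single (possibly public-coin) protocol $\Pi^\star$ with $\phi(\Pi^\star,\mu) \le 2$ for every $\mu$. Since both summands are non-negative, this immediately yields $\QIC(\Pi^\star,\mu) \le 2C$ and $\mathrm{err}_\mu(\Pi^\star) \le 2\epsilon$ for every $\mu$, which is exactly the claim.

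To swap the quantifiers I would apply Sion's minimax theorem. The maximizer's strategy set---the simplex of distributions on $\X\times\Y$---is compact and convex. The minimizer's strategy set consists of distributions over quantum protocols (i.e., public-coin protocols), which is convex; crucially, a public-coin mixture is itself a valid quantum protocol. For fixed $\Pi$, the quantity $\mathrm{err}_\mu(\Pi)$ is linear in $\mu$, and the conditional mutual informations appearing in $\QIC(\Pi,\mu)$ are concave in $\mu$ once the quantum channels implementing $\Pi$ are fixed; hence $\phi$ is concave in $\mu$. For fixed $\mu$, the error of a mixed protocol is linear in the mixing weights, and $\QIC(\mathrm{mix},\mu) \le \E_\Pi[\QIC(\Pi,\mu)]$ by the chain rule applied to the public coin, so $\phi$ is convex in the mixing weights. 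These ingredients fit the hypotheses of Sion's theorem.

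The main obstacle I anticipate is compactness on the minimizer's side. The space of quantum communication protocols is not naturally compact---there is no a priori bound on the number of rounds, nor on the dimensions of the message and memory registers. I would handle this by first truncating to protocols with at most $r$ rounds and local dimension at most $d$, running Sion on this compact slice, and then taking $d,r \to \infty$ with a lower-semicontinuity argument on $\QIC$ to pass to the limit; alternatively, one can argue that the infimum in the definition of $\QIC(F,\epsilon)$ is approximately achieved by protocols whose size depends only on $\epsilon$ and $C$, so only effectively finitely many need to be considered. A secondary technical point is that the concavity of each conditional mutual information term $I(R;C_i|Y,B_i)$ in $\mu$ must be checked with care, since the conditioning registers are quantum and themselves depend on $\mu$; this should follow from the standard concavity of quantum mutual information in the input state, applied with the protocol-induced channel on the input registers held fixed.
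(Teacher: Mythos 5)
The paper does not actually prove \lem{minimax}: it is imported as a black box from the cited reference \cite{BGKMT15}, where it is established (following Braverman's analogous classical statement) by precisely the argument you outline --- fold information cost and error into a single two-term payoff so that the factor $2$ in both parameters falls out of non-negativity of the summands, allow public-coin mixtures of protocols on the minimizing side, and apply a minimax theorem against the simplex of input distributions. So your route is the intended one. Two refinements. First, the concavity of $\QIC(\Pi,\mu)$ in $\mu$ is cleaner than an appeal to ``concavity of mutual information in the input state'': because $R$ holds a classical copy of $(x,y)$, each term satisfies $I(R;C_i\mid Y,B_i)=S(C_i\mid Y B_i)_{\rho(\mu)}-\sum_{x,y}\mu(x,y)\,S(C_i\mid Y B_i)_{\tau^{x,y}}$, where the states $\tau^{x,y}$ are independent of $\mu$ and $\rho(\mu)$ is affine in $\mu$; the first term is concave by concavity of conditional entropy and the second is linear, so each summand (and hence $\QIC(\Pi,\mu)$) is concave in $\mu$. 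Second, the compactness obstacle is milder than you fear: Sion's theorem needs compactness of only \emph{one} strategy set, and the simplex of distributions on $\X\times\Y$ supplies it. What you still owe on the protocol side is convexity of the strategy set (public-coin mixtures, which you have, and which are again legitimate protocols once the coin is absorbed into the shared entanglement) together with quasi-convexity and lower semicontinuity of the payoff in the mixing weights; this is exactly where your truncation to bounded rounds and dimensions, or a restriction to a countable family of $\delta$-suboptimal witness protocols, is the standard and correct patch. With those two points tightened, the proof is complete and agrees with the one in \cite{BGKMT15}.
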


The next lemma is about cleaning up a protocol, using the standard ``uncomputation'' trick, so that all the registers except an output bit are essentially returned to their original states. The proofs in~\cite{cleve1999quantum, hoza2017quantum} are for quantum communication cost but the same proofs work for quantum information cost as well~\cite{TL17}.

\begin{lemma}\label{lem:clean}
Suppose $\Pi$ is a quantum protocol such that $\QIC(\Pi, \mu) \le I$ for all $\mu$ and that computes a Boolean function $f$ with error at most $ \epsilon$. Then there is a cleaned up version of $\Pi$, $\Pi'$ which satisfies $\QIC(\Pi', \mu) \le 2I$ and the following property. Suppose $\ket{\psi}_{A B}$ is the entanglement shared at the start of $\Pi'$ and $\ket{\phi^{x,y}}_{A B B_{\textnormal{out}}}$ be the final state of the protocol $\Pi'$ on inputs $x,y$ with register $A$ held by Alice and registers $B, B_{\textnormal{out}}$ held by Bob. Also let  $\ket{\widetilde{\phi^{x,y}}}_{A B B_{\textnormal{out}}}$ be the state $\ket{\psi}_{A B} \otimes \ket{f(x,y)}_{B_{\textnormal{out}}}$. Then it holds that
\begin{equation}
\|\phi^{x,y} - \widetilde{\phi^{x,y}}\|_{\textnormal{tr}} \le 16 \sqrt{\epsilon}
\end{equation}
for all inputs $x,y$.
\end{lemma}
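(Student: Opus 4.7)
The plan is to implement the standard uncomputation trick for $\Pi$ at the level of quantum information cost. I would define $\Pi'$ in three stages: (1) run $\Pi$ in full; (2) apply a CNOT from the designated output qubit of $\Pi$, which at this point is in Bob's register, into a fresh ancilla qubit $B_{\textnormal{out}}$ (initialized to $\ket{0}$ and held by Bob) that is never transmitted; (3) run the time-reversed protocol $\Pi^{-1}$, in which the messages travel in the reverse order and each player applies the adjoints of their local unitaries in reverse order. The register $B_{\textnormal{out}}$ is passively carried along through stage (3) and is untouched by either party's operations.

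For the information-cost bound, I would show round-by-round that $\QIC(\Pi^{-1},\mu) = \QIC(\Pi,\mu)$. The key observation is that the global purified state $\ket{\psi^i}_{X,Y,R,A_i,B_i,C_i}$ is determined by the initial joint input state and the unitaries played so far, so the state after round $i$ of $\Pi^{-1}$ coincides with the state after round $T-i$ of $\Pi$. Consequently, the conditional mutual information terms appearing in Touchette's definition of $\QIC$ for $\Pi^{-1}$ are precisely the same terms as for $\Pi$, just in reverse order (with the sender/receiver roles flipped for odd/even rounds as appropriate). Since the CNOT in stage (2) is local to Bob and therefore contributes zero information about $R$ to any message register, summing over the three stages gives $\QIC(\Pi',\mu) \le I + 0 + I = 2I$.

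For the trace-distance bound, I would use a gentle-measurement argument. Let $\ket{\phi_0^{x,y}}_{AB}$ denote the pure state at the end of stage (1), and decompose it along the output qubit as
\begin{equation}
\ket{\phi_0^{x,y}} = \alpha\,\ket{f(x,y)}\ket{\xi_0} + \beta\,\ket{1-f(x,y)}\ket{\xi_1},
\end{equation}
where $|\beta|^2 \le \epsilon$ by correctness of $\Pi$. Replacing $\ket{\phi_0^{x,y}}$ with its normalized ``clean branch'' $\ket{\widehat\phi_0^{x,y}} \propto \ket{f(x,y)}\ket{\xi_0}$ incurs trace-distance error at most $2\sqrt{\epsilon}$. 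On the clean branch, stage (2) deterministically writes $f(x,y)$ into $B_{\textnormal{out}}$, and stage (3) exactly inverts stage (1), restoring the $AB$ portion to the original shared entanglement $\ket{\psi}_{AB}$. Since unitaries preserve trace distance, the error at the end of $\Pi'$ between $\phi^{x,y}$ and $\widetilde{\phi^{x,y}} = \ket{\psi}_{AB}\otimes\ket{f(x,y)}_{B_{\textnormal{out}}}$ is bounded by $O(\sqrt{\epsilon})$; tracking constants loosely (the $2\sqrt{\epsilon}$ approximation enters once before stage (2) and is carried through stage (3)) comfortably yields the advertised $16\sqrt{\epsilon}$.

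The main obstacle I expect is the QIC-under-reversal claim. Unlike quantum communication cost, which is symmetric under reversal by inspection (messages of the same lengths are sent), the definition of $\QIC$ involves conditional mutual informations that are not manifestly invariant under reversing the protocol. Verifying $\QIC(\Pi^{-1},\mu) = \QIC(\Pi,\mu)$ requires using the purity of the global state together with standard entropic identities (the chain rule and the fact that for pure tripartite states, certain conditional entropies are related by sign/symmetry) to rewrite each reverse-round term as a forward-round term. This is precisely the extension of the \cite{cleve1999quantum,hoza2017quantum} argument from QCC to QIC that is carried out in \cite{TL17}, which is why the paper cites it rather than reproving the lemma.
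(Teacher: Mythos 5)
Your proposal is correct and follows exactly the route the paper intends: the paper gives no proof of this lemma at all, only the remark that it is the standard uncomputation trick of \cite{cleve1999quantum,hoza2017quantum} carried over to quantum information cost via \cite{TL17}, and that is precisely what you reconstruct (run $\Pi$, CNOT the output into $B_{\textnormal{out}}$, run $\Pi^{-1}$, and use purity of the global state --- i.e., the duality $I(R;C_i\mid Y,B_i)=H(C_i\mid Y,B_i)+H(C_i\mid X,A_i)=I(R;C_i\mid X,A_i)$ for the purified state --- to equate each reverse-round term with the corresponding forward-round term, giving the factor of $2$). The only nit is in your trace-distance accounting: the clean-branch substitution costs you twice, not once (once when you replace $\ket{\phi_0^{x,y}}$ by $\ket{\widehat{\phi}_0^{x,y}}$ before the CNOT, and once more because $\Pi^{-1}$ applied to the clean branch only \emph{approximately} restores $\ket{\psi}_{AB}$, so a triangle inequality is needed at the end), but the resulting $O(\sqrt{\epsilon})$ bound is still comfortably within the stated $16\sqrt{\epsilon}$.
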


The next lemma is about reducing the error probability and its effect on quantum information complexity \cite{BGKMT15}.

\begin{lemma}
\label{lem:error_reduction}
For any problem $F$, $\QIC(F,\epsilon) = O(\QIC(F,1/3) \cdot \log(1/\epsilon))$.
\end{lemma}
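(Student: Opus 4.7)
The plan is to amplify the success probability of an optimal protocol by sequential independent repetition followed by majority voting. Fix an arbitrary input distribution $\mu$ on $(x,y)$ and let $\Pi_0$ be a quantum protocol computing $F$ with error at most $1/3$ on $\mu$ whose information cost $\QIC(\Pi_0,\mu)$ approaches $\QIC(F,\mu,1/3)$. I would construct a new protocol $\Pi$ that runs $\Pi_0$ sequentially $k = O(\log(1/\epsilon))$ times on the same inputs $(x,y)$, using a completely fresh copy of the pre-shared entanglement for each invocation, and that outputs the majority of the $k$ per-run outputs. A Chernoff bound gives error at most $\epsilon$ on $\mu$ once the constant in $k$ is large enough.

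The substantive step is to show $\QIC(\Pi,\mu) \le k \cdot \QIC(\Pi_0,\mu)$, i.e.\ additivity of information cost under sequential independent repetition. I would group the rounds of $\Pi$ according to which of the $k$ invocations they belong to and analyze each group separately. At the start of the $j$-th invocation, the global purified state factors as the input-registers state tensored with (i) the fresh entanglement for invocations $j,\dots,k$, which is independent of the reference register $R$, and (ii) the registers left behind by invocations $1,\dots,j-1$, which lie inside the local registers $A_i$ and $B_i$ at the appropriate round. Because the fresh entanglement is independent of $R$ and the earlier-invocation leftovers are already part of the conditioning in $I(R;C_i\mid X,A_i)$ or $I(R;C_i\mid Y,B_i)$, the rounds of invocation $j$ contribute exactly $\QIC(\Pi_0,\mu)$ to the total. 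Summing over $j$ and taking the maximum over $\mu$ at the end yields the distribution-free bound $\QIC(F,\epsilon) = O(\QIC(F,1/3)\cdot\log(1/\epsilon))$.

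The main obstacle is making the additivity claim rigorous in the presence of the measurements that produce the per-invocation classical outputs. The cleanest remedy is to defer all intermediate measurements to the end of $\Pi$, keeping the outputs of the first $k-1$ invocations in superposition, and to measure only at the very end to compute the final majority vote. Information cost is monotone under such deferral, and with measurements deferred the fresh entanglement for each invocation is genuinely independent of all earlier registers, so tracing out the earlier-invocation leftovers recovers the canonical initial state of $\Pi_0$ for the next run. This argument mirrors the analogous classical amplification for information complexity and is standard in the quantum setting, so no new conceptual ideas are required.
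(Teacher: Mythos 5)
The paper does not actually prove this lemma---it imports it from \cite{BGKMT15}---so there is no internal proof to compare against. Your construction (sequential independent repetition with fresh entanglement for each run, coherent majority at the end, and additivity of the information cost across invocations) is the standard route and the right one. Two peripheral points are fine: the protocol model is unitary until the final measurement, so ``deferring measurements'' is automatic; and since the error in the definition is worst-case over inputs rather than averaged over $\mu$, the Chernoff step is legitimate (with distributional error, majority voting alone would not amplify).

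The genuine gap is in your justification of the key step. You assert that invocation $j$ contributes \emph{exactly} $\QIC(\Pi_0,\mu)$ because ``the earlier-invocation leftovers are already part of the conditioning'' and the fresh entanglement is independent of $R$. But conditional mutual information is not monotone under enlarging the conditioning system: $I(R;C_i\mid Y,B_i,\mathrm{junk}_B)$ need not equal, nor be bounded by, the standalone quantity $I(R;C_i\mid Y,B_i)$ merely because the junk sits in the conditioning---this is precisely the point where the argument could fail. The correct move, which is exactly what the paper does in the proof of \thm{QIC_OR}, is the opposite one: pull the leftover registers \emph{out} of the conditioning and into the reference slot via the chain rule, $I(R;C_i\mid Y,B_i,\mathrm{junk}_B)\le I(R,\mathrm{junk};C_i\mid Y,B_i)$, and then observe that $(R,\mathrm{junk})$ purifies the marginal on the working registers of invocation $j$ in the same way the canonical reference does in a standalone run of $\Pi_0$; by \prop{remove_pur} (Uhlmann plus unitary invariance of conditional mutual information) the augmented quantity equals the standalone one. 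With that substitution each invocation contributes at most $\QIC(\Pi_0,\mu)$ (an upper bound, not an equality, which is all you need), and the rest of your argument yields $\QIC(F,\mu,\epsilon)=O(\log(1/\epsilon))\,\QIC(F,\mu,1/3)$ for every $\mu$, hence the lemma.
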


The next proposition states an elementary fact about quantum conditional mutual information which can proven by combining Uhlmann's theorem and unitary equivalence of quantum conditional mutual information (e.g. see \cite{BGKMT15}).

\begin{proposition}\label{prop:remove_pur}
Let $\ket{\psi}_{R,A,B,C}$ and $\ket{\phi}_{\widetilde{R},A,B,C}$ be two pure states s.t. the marginals of $\psi$ and $\phi$ on the registers $A,B,C$ are equal i.e. $\psi_{A,B,C} = \phi_{A,B,C}$. Then
\begin{equation}
I(R; C|B)_{\psi} = I(\widetilde{R}; C|B)_{\phi}    
\end{equation}
\end{proposition}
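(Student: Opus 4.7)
The plan is to prove this by expanding the conditional mutual information into a sum of von Neumann entropies and observing that each term depends only on marginals that are shared between $\psi$ and $\phi$, using the Schmidt (purification) symmetry $S(X) = S(\bar X)$ for pure states. This is a pure-state calculation, so no deep tool like Uhlmann's theorem is actually needed at the level of a proof sketch; the equality of marginals together with the purity of $\psi$ and $\phi$ already forces every relevant entropy to match.

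First I would write
\begin{equation*}
I(R;C|B)_{\psi} = S(RB)_\psi + S(BC)_\psi - S(RBC)_\psi - S(B)_\psi,
\end{equation*}
and the analogous expression for $I(\widetilde R; C|B)_\phi$. The terms $S(BC)$ and $S(B)$ depend only on the marginal on $A,B,C$ (in fact only on the marginal on $B,C$), and by hypothesis $\psi_{A,B,C} = \phi_{A,B,C}$, so these two terms agree between $\psi$ and $\phi$. For the remaining two terms, I would use that $\psi$ is pure on $R,A,B,C$, which gives $S(RB)_\psi = S(AC)_\psi$ and $S(RBC)_\psi = S(A)_\psi$; similarly, purity of $\phi$ on $\widetilde R, A, B, C$ gives $S(\widetilde R B)_\phi = S(AC)_\phi$ and $S(\widetilde R B C)_\phi = S(A)_\phi$. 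Since $S(AC)$ and $S(A)$ are also functions of the $A,B,C$-marginal alone, the equality $\psi_{A,B,C} = \phi_{A,B,C}$ yields $S(RB)_\psi = S(\widetilde R B)_\phi$ and $S(RBC)_\psi = S(\widetilde R B C)_\phi$. Adding these up gives $I(R;C|B)_\psi = I(\widetilde R;C|B)_\phi$.

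There is essentially no obstacle; the only thing to be careful about is that the purifying registers $R$ and $\widetilde R$ may have different dimensions, but this is irrelevant because we never compare their entropies directly, only their entropies in combination with other registers, and these are pinned down via purity to entropies of the complementary systems inside $A,B,C$. As a remark, an equivalent route would be to invoke Uhlmann's theorem to produce an isometry $V : R \to \widetilde R$ with $(V \otimes I_{ABC})\ket{\psi} = \ket{\phi}$, and then use unitary/isometric invariance of conditional mutual information on the $R$ register; I would mention this only as a sanity check, since the entropy-counting argument above is shorter and avoids introducing $V$ at all.
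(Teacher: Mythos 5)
Your proof is correct, but it takes a genuinely different route from the paper. The paper proves this proposition by combining Uhlmann's theorem with the unitary (isometric) equivalence of quantum conditional mutual information: since $\psi_{A,B,C}=\phi_{A,B,C}$, the states $\ket{\psi}$ and $\ket{\phi}$ are two purifications of the same density matrix, so Uhlmann's theorem yields an isometry acting only on the purifying register taking one to the other, and invariance of $I(R;C\,|\,B)$ under isometries on $R$ finishes the argument --- exactly the route you relegate to a closing remark. Your main argument instead expands $I(R;C\,|\,B)_\psi = S(RB)_\psi + S(BC)_\psi - S(RBC)_\psi - S(B)_\psi$ and uses purity of the global state to trade $S(RB)_\psi = S(AC)_\psi$ and $S(RBC)_\psi = S(A)_\psi$, so that every term is a function of the common marginal $\psi_{A,B,C}=\phi_{A,B,C}$; this is a complete and correct proof, and your observation that possible dimension mismatch between $R$ and $\widetilde{R}$ is harmless (since their entropies only ever enter through complements inside $A,B,C$) is the right thing to check. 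The trade-off is mild: your entropy-counting argument is more elementary and self-contained (no Uhlmann, no isometry constructed), while the paper's route is more structural --- it exhibits an explicit isometric relation between the two purifications, which is the kind of statement that is reusable beyond entropic quantities --- but for this proposition the two are equally valid and yours is arguably shorter.
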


The next lemma is a direct sum theorem for quantum information complexity. 

\begin{lemma}[\cite{Touchette}]\label{lem:directsum}
Let $F$ be a (possibly partial) Boolean function and $\mu$ be a distribution. Then 
\begin{equation}
\QIC(F^n, \epsilon) \ge n \cdot \QIC(F, \epsilon).    
\end{equation}
\end{lemma}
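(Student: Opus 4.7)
The plan is a standard direct sum via coordinate embedding together with a quantum chain rule for conditional mutual information. Let $\nu$ be a distribution attaining $\QIC(F,\epsilon) = \QIC(F,\nu,\epsilon)$. Since $\QIC(F^n,\epsilon) = \max_\mu \QIC(F^n,\mu,\epsilon) \ge \QIC(F^n,\nu^n,\epsilon)$, it suffices to prove the product-distribution version $\QIC(F^n,\nu^n,\epsilon) \ge n\cdot\QIC(F,\nu,\epsilon)$. I would do this by taking an arbitrary $\epsilon$-error protocol $\Pi$ for $F^n$ with $\QIC(\Pi,\nu^n) = I$ and constructing, for each coordinate $j \in [n]$, an $\epsilon$-error protocol $\Pi_j$ computing $F$ on input distribution $\nu$, so that $\sum_{j=1}^n \QIC(\Pi_j,\nu) \le I$. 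Averaging then produces some $j$ with $\QIC(\Pi_j,\nu) \le I/n$, from which the bound follows.

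For the embedding, on input $(x,y)\sim \nu$ to $F$, Alice and Bob place $(x,y)$ in coordinate $j$ and fill in the other $n-1$ input pairs from $\nu$ using a carefully chosen mix of public and private randomness: for coordinates $i<j$ both halves $(x^i,y^i)$ are drawn from shared public randomness, while for coordinates $i>j$ Alice privately samples $x^i$ and Bob privately samples $y^i$. With the resulting full input $(x^1,\dots,x^n,y^1,\dots,y^n)$ they simulate $\Pi$ and output its $j$-th output bit. The induced distribution on the simulated input is exactly $\nu^n$, so correctness and the $\epsilon$ error bound are inherited from $\Pi$.

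The core step is the chain-rule bound $\sum_j \QIC(\Pi_j,\nu) \le \QIC(\Pi,\nu^n)$. Consider an Alice-to-Bob round $i$ of $\Pi$ (the other direction is symmetric). Let $C_i$ be the message register, $B_i$ Bob's workspace just before the message, and $X=(X^1,\dots,X^n)$, $Y=(Y^1,\dots,Y^n)$ the input registers in the purified state $\ket{\psi^i}$. The round's contribution to $\QIC(\Pi,\nu^n)$ is $I(X;C_i \mid Y,B_i)$, which by the quantum chain rule for conditional mutual information satisfies
\begin{equation}
I(X;C_i \mid Y, B_i) = \sum_{j=1}^n I\!\left(X^j;\, C_i \,\middle|\, X^{<j},\, Y,\, B_i\right).
\end{equation}
Under the embedding defining $\Pi_j$, the registers $X^{<j}$ and all of $Y$ are accessible to Bob (either via shared randomness, for the $i<j$ coordinates, or via his private sampling, for the $Y^i$ with $i>j$, or as the real input $Y^j$). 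Applying \prop{remove_pur} to pass from the purification of all of $X$ in $\Pi$ to the purification of $X^j$ alone in $\Pi_j$ identifies the $j$-th summand on the right with the round-$i$ contribution to $\QIC(\Pi_j,\nu)$. Summing over $j$ recovers the full per-round quantity, and then summing over rounds yields $\sum_j \QIC(\Pi_j,\nu) \le \QIC(\Pi,\nu^n) = I$.

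The main obstacle is precisely choosing the public/private split for the ``other'' coordinates so that the conditioning in $\QIC(\Pi_j,\nu)$ lines up, round by round, with the chain-rule decomposition of $\QIC(\Pi,\nu^n)$ for both directions of communication simultaneously. The asymmetric convention above (public for $i<j$, private for $i>j$, mirrored across Alice and Bob) is the quantum analogue of the Braverman--Rao classical embedding; once it is in place, \prop{remove_pur} and the quantum chain rule do all the work, and the rest is bookkeeping: sum the per-round identities, take the minimum over $j$, and combine with the definition of distribution-free $\QIC$.
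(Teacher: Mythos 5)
There is a genuine gap at the heart of your embedding. The distribution $\nu$ that (nearly) attains the distribution-free $\QIC(F,\epsilon)$ is in general a \emph{correlated} distribution on $\X\times\Y$, and your filling rule for coordinates $i>j$ --- Alice privately samples $x^i$ and Bob privately samples $y^i$ --- can only produce the product of the marginals in those coordinates, never a correlated $\nu$. So the claim that ``the induced distribution on the simulated input is exactly $\nu^n$'' is false in general, and with it the key identification: the summand $I(X^j;C_i\mid X^{<j},Y,B_i)$ is evaluated on the state of $\Pi$ run on $\nu^n$, while the round-$i$ contribution to $\QIC(\Pi_j,\nu)$ is evaluated on the state produced by your embedding, and these states differ whenever $\nu$ is correlated, so \prop{remove_pur} cannot equate them. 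Note also that your split is not actually the Braverman--Rao embedding: there, for each ``other'' coordinate exactly one player's half is public and the other player samples his half \emph{conditioned} on it (say $X^{<j}$ public with Bob sampling $Y^{<j}$ given $X^{<j}$, and $Y^{>j}$ public with Alice sampling $X^{>j}$ given $Y^{>j}$), and the two directions of communication are handled by applying the chain rule in opposite orders over $j$; that asymmetric, conditional sampling is precisely what keeps both the per-coordinate distribution equal to $\nu$ and the conditioning registers on the correct player's side. Your scheme would be fine if $\nu$ were a product distribution, but that cannot be assumed, and even with the correct split there remain quantum-specific issues to settle (where the purifications of the publicly and privately sampled coordinates live, and why the extra conditioning registers do not change the relevant conditional mutual informations), which is why this direct sum is a nontrivial theorem.

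For comparison, the paper does not reprove this statement at all: it cites Touchette's direct sum theorem for quantum information complexity and only remarks that the two differences in the present formulation (no prior, and the error model demanding all $n$ copies be correct except with probability $\epsilon$) are easily absorbed, essentially because worst-case error is inherited coordinate-wise, as in your correctness argument. If you want a self-contained proof you would in effect be redoing Touchette's argument, and the embedding must be the conditional, Braverman--Rao-style one rather than the public/independent-private split you propose.
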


The direct sum usually gives an equality but there are two differences between the above lemma and the usual direct sum statement: absence of a prior and different error model (where we demand that all the copies are solved except w.p.\ $\le \epsilon$). But it is easy to see that the usual direct sum statement (in \cite{Touchette}) implies the above inequality.

\subsubsection{\texorpdfstring{$\OR$}{OR} composition}

Now we are ready to prove our general result, an $\OR$-composition theorem for quantum information complexity.

\begin{theorem}\label{thm:QIC_OR}
Let $F: \mathcal{X} \times \mathcal{Y} \rightarrow \{0,1\}$ be a Boolean function that has an all-zeroes column. Then
\begin{equation}
\QIC(\OR_n\circ F, 1/3) \ge \Omega\left(\frac{\sqrt{n}}{\log(n)} \cdot \QIC(F, 1/3)\right).    
\end{equation}
\end{theorem}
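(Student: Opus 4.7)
The plan is to follow the general template used for \thm{disjointness} and \thm{approxrank}, with quantum information cost in place of approximate rank or $\gamma_2$. I would assume a too-good-to-be-true protocol for $\OR_n\circ F$, use the all-zero column of $F$ to masquerade it as coherent subroutines computing $\OR_S\circ F$ for every $S\subseteq[n]$, plug those subroutines into Belovs' algorithm for $\CGT_{2^n}$ (\thm{BelovsCGT}) to obtain a quantum communication protocol $\Xi$ that solves $n$ independent copies of $F$, and then apply the quantum direct-sum \lem{directsum}.

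Setting things up: applying \lem{error_reduction} and then the min-max \lem{minimax}, I obtain a single protocol $\Pi$ for $\OR_n\circ F$ whose error is at most $1/n^{10}$ and which satisfies $\QIC(\Pi,\mu)=O(\log n)\cdot \QIC(\OR_n\circ F,1/3)$ for \emph{every} input distribution $\mu$. The cleanup \lem{clean} then yields a clean variant $\widetilde{\Pi}$ (with essentially the same information-cost guarantee) whose final state is $O(n^{-4.5})$-close to $\ket{\psi}_{AB}\otimes\ket{\OR_n\circ F(x,y)}_{B_\text{out}}$ on every input. Fix $(a,b)\in\X\times\Y$ with $F(a,b)=0$; for each $S\subseteq[n]$ the masked protocol $\widetilde{\Pi}_S$ computing $\OR_S\circ F$ is obtained by Alice and Bob locally substituting $a$ (respectively $b$) for their $i$-th coordinate whenever $i\notin S$, and this requires no extra communication.

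Belovs' algorithm makes $T=O(\sqrt{n})$ coherent $\OR$-queries whose index register $S$ is quantum. I would implement each query inside a communication protocol by having Alice and Bob hold entangled copies of $S$ and run $\widetilde{\Pi}_S$ controlled on $S$, followed by $\widetilde{\Pi}_S^{-1}$ to uncompute the workspace in BCW-style; the cleanup property ensures only the output qubit survives each invocation. After amplifying Belovs' success probability at constant multiplicative cost, the resulting protocol $\Xi$ computes $F^n$ with error at most $1/3$. The heart of the proof is bounding $\QIC(\Xi,\nu)$ for an arbitrary $\nu$ on $(\X\times\Y)^n$: the chain rule for quantum conditional mutual information, together with \prop{remove_pur} applied to discard the purifications of the cleanup auxiliaries, lets me write $\QIC(\Xi,\nu)$ as a sum of the $2T$ individual costs of the $\widetilde{\Pi}$ and $\widetilde{\Pi}^{-1}$ invocations. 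Each summand is at most the uniform bound $O(\log n)\cdot \QIC(\OR_n\circ F,1/3)$, so $\QIC(\Xi,\nu)=O(\sqrt{n}\log n)\cdot \QIC(\OR_n\circ F,1/3)$ for every $\nu$.

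Combining this with the direct-sum chain $\max_\nu \QIC(\Xi,\nu)\ge \QIC(F^n,1/3)\ge n\cdot \QIC(F,1/3)$ obtained from \lem{directsum} yields
\begin{equation}
n\cdot \QIC(F,1/3)\;\le\; O(\sqrt{n}\log n)\cdot \QIC(\OR_n\circ F,1/3),
\end{equation}
which rearranges to the claim. The main obstacle is the additivity step: conditional mutual information does not decompose naively across coherent subroutine calls when workspace is reused, since garbage from earlier invocations generally appears as conditional information in later messages and breaks the linear accounting. The cleanup \lem{clean} is exactly what prevents this, by guaranteeing that after each $\widetilde{\Pi}_S^{-1}$ call only the single output qubit is non-trivial, and that this qubit is immediately consumed by the Belovs circuit; carefully propagating the small cleanup errors through the trace-distance continuity of conditional mutual information is the primary technical task required to turn this plan into a full proof.
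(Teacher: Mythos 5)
Your overall architecture matches the paper's: reduce the error of a too-good protocol for $\OR_n\circ F$ (\lem{error_reduction}), switch quantifiers (\lem{minimax}), clean it up (\lem{clean}), feed it as the query oracle into Belovs' algorithm to solve $F^n$, bound the information cost of the composed protocol round by round, and finish with the direct sum \lem{directsum}. However, there is a genuine gap in how you implement the masked subroutine for a subset $S$. You fix a single pair $(a,b)$ with $F(a,b)=0$ and have \emph{both} Alice and Bob substitute on coordinates $i\notin S$. But $S$ is a quantum register living inside Belovs' algorithm, which is executed by one party (Bob in the paper); Alice has no access to it. Your fix of ``Alice and Bob hold entangled copies of $S$'' does not work: producing such a copy requires transmitting an $n$-qubit register to Alice at every one of the $O(\sqrt{n})$ queries, an $\Omega(n^{1.5})$ communication (and information) overhead that destroys the bound. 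This is exactly the obstruction the paper flags in the introduction, and it is the entire reason the theorem hypothesizes an all-zeroes \emph{column} $y^*$ rather than merely a zero entry: with $F(x,y^*)=0$ for all $x$, only Bob --- who holds $S$ --- needs to modify his input (replacing $Y_i$ by $y^*$ for $i\notin S$, coherently controlled on his own register), while Alice always runs on her full input $X_1,\dots,X_n$. Fixing both coordinates as in \lem{gamma_or} is fine for the non-constructive matrix arguments, where the ``algorithm'' is a polynomial that formally knows $S$, but not for an actual two-party protocol.

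A secondary point: you correctly identify that the additivity of information cost across the $O(\sqrt{n})$ coherent invocations is the technical heart, but your proposal leaves it as ``the primary technical task.'' The paper's resolution is specific: it proves the stronger bound in which the conditioning registers of the garbage $\widetilde{A}^j,\widetilde{B}^j$ and of $Y$ are moved into the ``$R$'' slot of the conditional mutual information, and then invokes \prop{remove_pur} to replace the purification, using crucially that during the $j$-th simulation the unitaries act only on the fresh registers controlled on $(X,\widetilde{Y})$ and never touch $Y,\widetilde{A}^j,\widetilde{B}^j$. Without that structural observation the per-invocation bound $O(I\log n)$ does not follow from the uniform-over-$\mu$ guarantee, since the input distribution seen by the $j$-th invocation is correlated with the accumulated garbage. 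As written, your proposal would need both of these repairs before it constitutes a proof.
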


The above theorem implies the following OR lemma for general functions as a corollary. The proof (of the reduction from general functions to functions with an all-zeroes column) is similar to the self-reducibility arguments in the derivation of \thm{polylog_gamma} and  \lem{gamma_or} from \cor{clean_compose}. So we skip it to avoid repetition of the same arguments.

\begin{corollary}
Let $F: \mathcal{X} \times \mathcal{Y} \rightarrow \{0,1\}$ be a Boolean function. Then
\begin{equation}
\QIC(\OR_n\circ F, 1/3) \ge n^{1/2 - o(1)} \cdot \left( \QIC(F, 1/3) - n^{o(1)} \right).    
\end{equation}
\end{corollary}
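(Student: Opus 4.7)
The plan is to follow the template of the $\OR$-composition proofs in \sec{queryOR} and \sec{commOR}: a hypothetical low-QIC protocol for $\OR_n\circ F$ is compiled through Belovs' $O(\sqrt{n})$-query quantum algorithm for $\CGT_{2^n}$ (\thm{BelovsCGT}) into a low-QIC protocol that simultaneously outputs all $n$ values $F(x_i,y_i)$, i.e., that solves $F^n$. The direct-sum theorem $\QIC(F^n,\epsilon)\ge n\cdot\QIC(F,\epsilon)$ (\lem{directsum}) then forces the QIC of $\OR_n\circ F$ to be large.

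\textbf{Setup.} Let $\Pi$ be an optimal $1/3$-error protocol for $\OR_n\circ F$ with $\QIC(\OR_n\circ F,1/3)=:I$. First, by \lem{minimax} we may replace $\Pi$ by a protocol satisfying $\QIC(\Pi,\mu)\le 2I$ for \emph{every} distribution $\mu$ (at a small increase in error that is absorbed shortly). Next, using \lem{error_reduction}, amplify the error down to $\epsilon=c/n$ for a small constant $c>0$, producing a protocol $\Pi_0$ with $\QIC(\Pi_0,\mu)=O(I\log n)$ for every $\mu$. Finally, apply \lem{clean} to get a cleaned protocol $\Pi_1$ of the same asymptotic QIC that, on input $(x,y)$, leaves the joint state within trace distance $O(\sqrt\epsilon)=O(1/\sqrt n)$ of $\ket{\psi}_{AB}\otimes\ket{\OR_n\circ F(x,y)}_{B_{\textnormal{out}}}$. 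For each $S\subseteq[n]$, let $\Pi_1^S$ denote the protocol obtained from $\Pi_1$ by hard-coding the $F$-inputs outside $S$ to the all-zeros column $(a,b)$ promised by hypothesis; each $\Pi_1^S$ approximately computes $\OR_S\circ F$, with $\QIC$ at most that of $\Pi_1$ since hard-coding inputs is non-increasing in information cost.

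\textbf{Compiling Belovs' algorithm.} Now plug the $\Pi_1^S$'s into Belovs' algorithm as oracle queries: each of the $T=O(\sqrt n)$ oracle calls for subset $S$ is implemented by running the corresponding $\Pi_1^S$ coherently in superposition over $S$, using its designated output qubit as the oracle answer and then uncomputing. This yields a genuine quantum communication protocol, executed jointly by Alice and Bob, whose output register holds a $1/3$-close approximation to the full vector $(F(x_1,y_1),\dots,F(x_n,y_n))$, i.e., it solves $F^n$. Information cost is additive across rounds, so the total QIC of the compiled protocol is $O(T\cdot I\log n)=O(\sqrt n\,I\log n)$; the trace-distance error is $O(T\sqrt\epsilon)=O(\sqrt c)$ by the standard BBBV-style hybrid error composition, which is at most $1/3$ for $c$ small enough. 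Combining with \lem{directsum} gives
\begin{equation*}
n\cdot\QIC(F,1/3)\;\le\;\QIC(F^n,1/3)\;\le\;O(\sqrt n\,I\log n),
\end{equation*}
which rearranges to $I\ge\Omega(\sqrt n/\log n\cdot\QIC(F,1/3))$, as claimed.

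\textbf{Main obstacle.} The technically delicate point is formalizing the ``compilation'' of Belovs' algorithm into a quantum communication protocol. Belovs' algorithm internally maintains the query register $S$ in quantum superposition, so executing an oracle call for an $S$ held coherently requires Alice and Bob to run $\Pi_1^S$ controlled on $S$, which in turn requires careful bookkeeping of which registers live on which side and how the pre-shared entanglement is used to keep the $S$-register jointly accessible without paying for classical transmission of $S$. Two additivity claims must be established simultaneously: (i) the QIC of the compiled protocol equals the sum of QICs of the individual sub-protocol invocations (up to constants), which follows from the sum-over-rounds definition of $\QIC$ together with the cleanness property from \lem{clean} ensuring each invocation leaves the workspace essentially in a product form with the answer qubit; and (ii) the trace-distance error of the compiled protocol is bounded by $O(T\sqrt\epsilon)$, which is the standard hybrid estimate for a quantum algorithm run with $\sqrt\epsilon$-close oracles. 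Once these two ingredients are nailed down, the proof is assembled exactly as above.
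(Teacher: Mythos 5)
Your argument reproves \thm{QIC_OR}, not the corollary. The corollary is stated for an \emph{arbitrary} Boolean function $F:\X\times\Y\to\B$, whereas your proof invokes ``the all-zeros column $(a,b)$ promised by hypothesis'' --- no such hypothesis is present in the corollary, and many functions (e.g.\ the two-bit $\OR^{\cc}$, or $\NOTEQ$) have a zero entry but no all-zeroes column. The all-zeroes column is not a cosmetic convenience: in the compiled protocol the query set $S$ lives in superposition in Bob's register and is unknown to Alice, so only Bob can substitute his inputs outside $S$. He needs a single input $y^*$ with $F(x,y^*)=0$ for \emph{all} $x$ in order to zero out those copies regardless of what Alice holds. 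If $F$ merely has a zero entry $(a,b)$, Alice would also have to swap in $a$ on the coordinates outside $S$, which she cannot do without learning $S$ --- and communicating $S$ costs an additive $O(n)$ (per query) in information. This is exactly why the corollary's bound is weaker than the theorem's: the $n^{1/2-o(1)}$ factor and the $-n^{o(1)}$ term come from converting that additive loss into a multiplicative one via the self-reducibility recursion ($\OR_n=\OR_{\sqrt n}\circ\OR_{\sqrt n}$, i.e.\ \lem{recurse}), which is the step the paper identifies as the content of the corollary and which your proof omits entirely. The fact that you end up with the \emph{stronger} conclusion $\Omega(\sqrt n/\log n)\cdot\QIC(F,1/3)$ for all $F$ is itself a sign that the general case has not been addressed.

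Within the all-zeroes-column case your sketch is essentially the paper's proof of \thm{QIC_OR}, with two caveats. First, write the substitution as Bob-only (fixing $Y_i=y^*$ for $i\notin S$), not as hard-coding a pair $(a,b)$. Second, ``information cost is additive across rounds'' is the crux rather than a one-liner: the QIC of the $j$-th invocation is measured on a state entangled with the garbage of all previous invocations and with the superposed query register, and one must pass to a fresh purification (the paper's \prop{remove_pur} together with a chain-rule/monotonicity step) to bound each invocation's contribution by $\QIC(\Pi,\nu)$ for the induced input distribution $\nu$. You correctly flag this as the main obstacle, so I count it as an acknowledged sketch rather than an error; the missing reduction from general $F$ is the genuine gap.
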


\thm{QIC_OR} also implies \thm{QIC}, which lower bounds the quantum information complexity of disjointness. As mentioned before, this was already known \cite{BGKMT15} (and in fact without the log factor) but our proof is simpler and more intuitive. 

\QICdisj*

This follows from \thm{QIC_OR}  by plugging in $F = \text{AND}$, which does have an all-zeroes column, and observing that $\QIC(\text{AND}, 1/3) \ge \Omega(1)$ because the protocol needs to learn the value of $\text{AND}$ (with some accuracy). Note that here quantum information complexity is measured w.r.t. an arbitrary distribution in contrast with the quantum information complexity w.r.t. distributions having tiny mass on inputs evaluating to $1$ which arise in the study of disjointness (see \cite{BGKMT15} and the references therein) where quantum information complexity can approach zero (for unbounded round protocols).

Now we are ready to prove the general result of this section, \thm{QIC_OR}.

\begin{proof}[Proof of \protect{\thm{QIC_OR}}]
Suppose $\QIC(\OR_n\circ F, 1/3) = I$. 
By \lem{error_reduction}, $\QIC(\OR_n\circ F, 1/n^4) \le O(I \cdot \log(n))$. By the combination of \lem{minimax} and \lem{clean}, there is an almost clean protocol $\Pi$ s.t. $\QIC(\Pi, \mu, 2/n^4) \le O(I \cdot \log(n))$ for all distributions $\mu$. 

We will use this protocol $\Pi$ as a black box along with Belovs' algorithm (with error $1/5$) for combinatorial group testing (\thm{BelovsCGT}) to design a protocol $\tau$ for solving $n$ copies of $F$, with quantum information cost $\le  O(I \cdot \sqrt{n} \log(n))$, which will imply the lower bound we need using \lem{directsum}. 

Suppose Alice and Bob want to solve $n$ copies of $F$ on inputs $X_1,\ldots, X_n$ and $Y_1,\ldots, Y_n$ jointly distributed according to some distribution $\mu$ (with bounded probability of error). Bob will run Belovs' algorithm (with the goal of learning the string $F(X_1,Y_1),\ldots, F(X_n, Y_n)$) with a query to $\vee_{i \in S} F(X_i, Y_i)$ simulated by running $\Pi$ with Alice where Alice's input is $X_1,\ldots, X_n$ and Bob's input is $(Y_i)_{i \in S}$ and fixed to $y^*$ outside $S$, where $y^*$ is the input corresponding to the all-zeroes column in $F$, i.e. $F(x,y^*) = 0$ for all $x$. It is not hard to see that this protocol $\tau$ allows Bob to predict $F(X_1, Y_1),\ldots, F(X_n, Y_n)$ except with probability $O(1/n) + 1/5$ (using the fact that $\Pi$ is almost clean). Also one can prove that the quantum information cost of $\tau$ is $\le O(\sqrt{n} \cdot I \log(n))$. To see this, suppose the (unnormalized) state at the end of the $i^{\text{th}}$ round of the $j^{\text{th}}$ simulation of $\Pi$ (while running $\tau$) be
\begin{equation}
\ket{\psi^{j, i}}_{X,Y,\widetilde{Y}, R, \widetilde{A}^j, \widetilde{B}^j, A^j_i, C^j_i, B^j_i} = \sum_{x,y} \ket{x}_X \ket{y}_Y \ket{\widetilde{y}}_{\widetilde{Y}} \ket{x, y}_R \ket{\phi^{j,x,y}}_{\widetilde{A}^j, \widetilde{B}^j} \ket{\psi^{j,i,x,\widetilde{y}}}_{A^j_i, C^j_i, B^j_i}
\end{equation}
Here $X,Y$ registers contain the actual inputs to protocol $\tau$ while the registers $X, \widetilde{Y}$ contain the inputs to $\Pi$ in the $j^{\text{th}}$ simulation. The registers $\widetilde{A}^j, \widetilde{B}^j$ contain the garbage left from previous simulations of $\Pi$ including the previous answers, Bob's query register $S$ for the current round etc. Note that in the $j^{\text{th}}$ simulation of $\Pi$, Alice and Bob apply a sequence of unitaries on registers $A^j_i, C^j_i, B^j_i$ controlled on the registers $X, \widetilde{Y}$ without touching the registers $Y, \widetilde{A}^j, \widetilde{B}^j$. This will be crucial. Now the quantum information cost of $\tau$ is the following:
\begin{equation}
\sum_{j=1}^q \left(\sum_{i \: \textnormal{odd}} I(R; C^j_i| Y, \widetilde{Y}, B^j_i, \widetilde{B}^j)_{\psi^{j,i}} + \sum_{i \: \textnormal{even}} I(R; C^j_i| X, A^j_i, \widetilde{A}^j)_{\psi^{j,i}} \right)
\end{equation}
where $q$ is the query cost of Belovs' algorithm ($q \le O(\sqrt{n})$). We will prove that for every $j$,
\begin{align}
\sum_{i \: \textnormal{odd}} I(R; C^j_i| Y, \widetilde{Y}, B^j_i, \widetilde{B}^j)_{\psi^{j,i}} + \sum_{i \: \textnormal{even}} I(R; C^j_i| X, A^j_i, \widetilde{A}^j)_{\psi^{j,i}} \le O(I \cdot \log(n)) \label{eq:aeqn10}
\end{align}
which will complete the proof. We will in fact prove a stronger statement:
\begin{align}
    \sum_{i \: \textnormal{odd}} I(R, Y, \widetilde{A}^j, \widetilde{B}^j; C^j_i| \widetilde{Y}, B^j_i)_{\psi^{j,i}} + \sum_{i \: \textnormal{even}} I(R, Y, \widetilde{A}^j, \widetilde{B}^j; C^j_i| X, A^j_i)_{\psi^{j,i}} \le O(I \cdot \log(n)) \label{eq:aeqn11}
\end{align}
from which \eq{aeqn10} follows by applying the chain rule and positivity of quantum conditional mutual information. Now note that the marginal state on registers $X, \widetilde{Y}$ in $\psi^{j,i}$ is a classical distribution since there is a copy of $x$ in $R$ and $\widetilde{y}$ is a deterministic function of $y$ and the value in query register $S$ inside $\widetilde{B}^j$. Denote this distribution by $\nu$. Consider the following alternate states:
\begin{equation}
\ket{\phi^{j,i}}_{X, \widetilde{Y}, \widetilde{R}, A^j_i, C^j_i, B^j_i} = \sum_{x,\widetilde{y}} \sqrt{\nu(x, \widetilde{y})} \ket{x}_X  \ket{\widetilde{y}}_{\widetilde{Y}} \ket{x, \widetilde{y}}_{\widetilde{R}}  \ket{\psi^{j,i,x,\widetilde{y}}}_{A^j_i, C^j_i, B^j_i}
\end{equation}
Because $\QIC(\Pi, \nu) \le O(I \cdot \log(n))$, we get that
\begin{align}
     \sum_{i \: \textnormal{odd}} I(\widetilde{R}; C^j_i| \widetilde{Y}, B^j_i)_{\phi^{j, i}} + \sum_{i \: \textnormal{even}} I(\widetilde{R}; C^j_i| X, A^j_i)_{\phi^{j,i}} = \QIC(\Pi, \nu) \le O(I \cdot \log(n)) \label{eq:a12}
\end{align}
Also note that the marginal states on registers $X, \widetilde{Y}, A^j_i, C^j_i, B^j_i$ are the same in the two states $\phi^{j,i}$ and $\psi^{j,i}$. This along with \prop{remove_pur} implies that
\begin{align}
    &\: \: \: \: \: \: \sum_{i \: \textnormal{odd}} I(\widetilde{R}; C^j_i| \widetilde{Y}, B^j_i)_{\phi^{j, i}} + \sum_{i \: \textnormal{even}} I(\widetilde{R}; C^j_i| X, A^j_i)_{\phi^{j,i}} \nonumber \\
    &= \sum_{i \: \textnormal{odd}} I(R, Y, \widetilde{A}^j, \widetilde{B}^j; C^j_i| \widetilde{Y}, B^j_i)_{\psi^{j,i}} + \sum_{i \: \textnormal{even}} I(R, Y, \widetilde{A}^j, \widetilde{B}^j; C^j_i| X, A^j_i)_{\psi^{j,i}} \label{eq:a14}
\end{align}
Combining \eq{a12} and \eq{a14} gives us \eq{aeqn11} which completes the proof.
\end{proof}

\section*{Acknowledgements}
We would like to thank Mark Bun and Justin Thaler for helpful discussions and feedback on an early draft of this work. 
We would also like to thank Harry Buhrman for bringing reference~\cite{buhrman1998lower} to our attention. R.K.\ would like to thank Jeongwan Haah for helpful discussions regarding the proof of \thm{coeff_bounded}.

Some of this work was performed while the first two authors were students at the Massachusetts Institute of Technology and the last author was a postdoctoral associate at the Massachusetts Institute of Technology.
This work was partially supported by ARO grant W911NF-12-1-0541, NSF grant CCF-1410022, NSF grant CCF-1629809, and a Vannevar Bush faculty fellowship.

\appendix
\section{Coefficients of bounded polynomials}
\label{app:coeff_bounded}

In this section, we prove an elementary result about the maximum absolute value of coefficients of (not necessarily multilinear) multivariate polynomials that are bounded inside the unit cube. Our main result is the following theorem.

\begin{theorem}
\label{thm:coeff_bounded}
Let $p$ be a polynomial with real coefficients on $n$ variables with degree $d$ such that for all $x\in[0,1]^n$, $|p(x)|\leq 1$. Then the magnitude of any coefficient of $p$ is at most $(2 d)^{3d}$,
and the sum of magnitudes of all coefficients of $p$ is at most $(2 (n+d))^{3d}$.
\end{theorem}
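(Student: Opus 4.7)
The plan is to express each coefficient $a_\alpha$ of $p=\sum_\alpha a_\alpha x^\alpha$ as a scaled partial derivative at the origin, $a_\alpha=\frac{1}{\alpha!}\,(\partial^\alpha p)(0)$, and to bound these partial derivatives by iterating Markov's inequality in each variable. Recall that for any univariate polynomial $q$ of degree at most $d$ with $\|q\|_{[-1,1]}\le 1$, Markov's inequality gives $\|q'\|_{[-1,1]}\le d^2$; the substitution $x=(y+1)/2$ then shows that any such polynomial bounded by $1$ on $[0,1]$ satisfies $\|q'\|_{[0,1]}\le 2d^2$ (the extra factor of $2$ comes from the chain rule).

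For the per-coefficient bound, fix $\alpha$ with $|\alpha|\le d$ and apply the univariate bound to $p$ viewed as a polynomial in a single variable $x_i$, with the remaining variables held fixed in $[0,1]$, iterating this $\alpha_i$ times for each $i$ in turn. Partial differentiation never increases degree in any variable, so Markov's inequality stays legitimate at each step, and the bounds multiply to give
\begin{equation*}
\bigl\|\partial^\alpha p\bigr\|_{\infty,\,[0,1]^n}\le (2d^2)^{|\alpha|}.
\end{equation*}
Evaluating at $0$ and dividing by $\alpha!\ge 1$ yields $|a_\alpha|\le (2d^2)^{|\alpha|}\le 2^d d^{2d}$, which is at most $(2d)^{3d}$ since the ratio $(2d)^{3d}/(2^d d^{2d})$ equals $(4d)^d\ge 1$.

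The sum bound follows by combining this per-coefficient estimate with the count of monomials of total degree at most $d$ in $n$ variables, which is $\binom{n+d}{d}\le (n+d)^d$:
\begin{equation*}
\sum_\alpha |a_\alpha|\le (n+d)^d\cdot 2^d d^{2d}=\bigl(2d^2(n+d)\bigr)^d\le \bigl(2(n+d)\bigr)^{3d},
\end{equation*}
where the last inequality uses $d\le 2(n+d)$ to bound $d^{2d}$ by $(2(n+d))^{2d}$. No substantive obstacle arises: once one commits to pulling coefficients out as derivatives, the only point to check is that the iterated Markov argument is legitimate (it is, since degrees cannot grow under partial differentiation), and the remainder is elementary arithmetic and counting of monomials.
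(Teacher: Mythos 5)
Your proof is correct, but it takes a genuinely different route from the one in the paper. The paper proceeds by multivariate Lagrange interpolation: it constructs, for each point $\alpha$ of the grid $\{\alpha : d\alpha\in\{0,\dots,d\}^n,\ \sum_i\alpha_i\le 1\}$, an interpolating polynomial $p_\alpha$ of degree $d$ with explicitly bounded coefficients, writes $p=\sum_\alpha c_\alpha p_\alpha$ with $|c_\alpha|=|p(\alpha)|\le 1$, and reads off the coefficient bounds; to make the per-coefficient bound independent of $n$ it first restricts to the at most $d$ variables appearing in the monomial by zeroing out the rest. You instead extract each coefficient as $\tfrac{1}{\alpha!}(\partial^\alpha p)(0)$ and control the iterated partial derivatives by repeated application of Markov's inequality (rescaled to $[0,1]$, whence the factor $2$), using that partial differentiation does not increase the degree in any variable so the constant $2d^2$ is valid at every step and the bounds multiply to $(2d^2)^{|\alpha|}$. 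Your per-coefficient bound is automatically independent of $n$, so you avoid the paper's variable-restriction step, and your final arithmetic ($2^d d^{2d}\le(2d)^{3d}$ and $(2d^2(n+d))^d\le(2(n+d))^{3d}$) checks out, as does the monomial count $\binom{n+d}{d}\le(n+d)^d$. The only external ingredient you invoke is the Markov brothers' inequality, whereas the paper's argument is self-contained modulo elementary algebra; on the other hand your route is shorter and the dependence of each coefficient on its own multi-degree $|\alpha|$ is more transparent.
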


For univariate polynomials a similar theorem is known; see~\cite{She13b} for an elementary proof of this fact.
Our \thm{coeff_bounded} follows from the following lemma, which at first looks like a weaker result.

\begin{lemma}
\label{lem:coeff_bounded} 
Let $p$ be a polynomial with real coefficients on $n$ variables with degree $d$ such that for all $x\in[0,1]^n$, $|p(x)|\leq 1$. Then the magnitude of any coefficient of $p$ is at most $(2nd(n+d))^d$.
\end{lemma}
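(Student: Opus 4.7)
The plan is to use Markov's inequality iteratively to bound all partial derivatives of $p$ on the cube $[0,1]^n$, and then extract each monomial coefficient by evaluating an appropriate partial derivative at the origin.

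First, I would invoke the classical univariate Markov inequality: if $q$ is a real polynomial of degree at most $d$ with $|q(x)|\leq M$ for all $x\in[-1,1]$, then $|q'(x)|\leq d^2 M$ on $[-1,1]$. Applying the affine change of variables $y=2x-1$ yields the $[0,1]$ version: if $|q(x)|\leq M$ on $[0,1]$, then $|q'(x)|\leq 2d^2 M$ on $[0,1]$.

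Next, I would lift this to the multivariate setting by restriction. Fix any coordinate $i$ and any choice of the remaining variables $(x_j)_{j\ne i}\in[0,1]^{n-1}$; then $p$ viewed as a polynomial in $x_i$ alone has degree at most $d$ and is bounded by $1$ on $[0,1]$, so Markov yields $|\partial_{x_i} p(x)|\leq 2d^2$ at every $x\in[0,1]^n$. Iterating this pointwise argument along successive derivatives (each application preserves the bound on degree per variable and costs one factor of $2d^2$), I conclude
\[
|\partial^\alpha p(x)|\leq (2d^2)^{|\alpha|} \qquad \text{for all } x\in[0,1]^n \text{ and every multi-index } \alpha \text{ with } |\alpha|\leq d.
\]

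Finally, the Taylor formula at the origin gives $c_\alpha = (\partial^\alpha p)(0)/\alpha!$, so
\[
|c_\alpha|\leq \frac{(2d^2)^{|\alpha|}}{\alpha!}\leq (2d^2)^d \leq (2nd(n+d))^d,
\]
where the last inequality uses $d\leq n(n+d)$ for $n\geq 1$. This yields the lemma, and in fact gives the stronger $n$-independent bound $(2d^2)^d$. The only non-trivial ingredient is Markov's inequality itself, which is a classical result with several elementary proofs (e.g.\ via Chebyshev polynomials); the rest of the argument is a routine iteration and direct computation, so I do not foresee a substantial obstacle. If one wanted to avoid appealing to Markov and stay closer in spirit to Lagrange-interpolation-based proofs of the univariate case, the main difficulty would be that naive multivariate Lagrange interpolation on the product grid $\{0,1/d,\ldots,1\}^n$ produces $(d+1)^n$ summands and thereby a bound doubly exponential in $n$; the benefit of the Markov-based route is that it manipulates the polynomial pointwise rather than expanding a sum over grid points.
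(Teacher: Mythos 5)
Your proof is correct, and it takes a genuinely different route from the paper's. The paper proves the lemma by explicit multivariate Lagrange interpolation: it constructs, for each point $\alpha$ of the simplex grid $\{\alpha : d\alpha\in\{0,1,\dots,d\}^n,\ \sum_i\alpha_i\le 1\}$ --- which has only $\binom{n+d}{d}$ points, not the $(d+1)^n$ of the product grid you worry about at the end --- a degree-$d$ interpolator $p_\alpha$ whose coefficients have magnitude at most $d^d(2n)^d$, expands $p=\sum_{\alpha} p(\alpha)\,p_\alpha$, and multiplies the two bounds; this is entirely elementary linear algebra, but the factor $\binom{n+d}{d}$ is what produces the $n$-dependence in the stated bound. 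Your route instead imports the Markov brothers' inequality, iterates it coordinate by coordinate to get $|\partial^\alpha p(x)|\le(2d^2)^{|\alpha|}$ on the cube, and reads off coefficients via $c_\alpha=\partial^\alpha p(0)/\alpha!$. All the steps check out: the restriction of $p$ to an axis-parallel line has degree at most $d$, each application of Markov only needs the bound for degree at most $d$ (with the subsequent derivatives bounded by the previous stage's sup-norm rather than by $1$, which you handle correctly), and the Taylor identity at the origin is exactly right. The payoff is a strictly stronger, dimension-free bound $(2d^2)^d$, which trivially implies $(2nd(n+d))^d$; the price is reliance on Markov's inequality as a black box, whereas the paper's argument is self-contained. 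Note also that the paper's Theorem~\ref{thm:coeff_bounded} only ever invokes the lemma after reducing to $n\le d$ variables by zeroing out variables absent from the monomial, so the $n$-dependence of the lemma is harmless there --- but your version would let one skip that reduction for the single-coefficient bound.
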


Let us first prove the main result (\thm{coeff_bounded}) from this lemma.

\begin{proof}[Proof of \protect{\thm{coeff_bounded}}]
Suppose we want to bound the coefficient of some monomial $M$. Since $M$ has degree at most $d$, at most $d$ variables appear in $M$. 
We can set all remaining variables (that do not appear in $M$) to $0$ and this reduces the number of variables to at most $d$. 
This new polynomial is also bounded on the cube and hence we can apply \lem{coeff_bounded} with $n=d$ to get the desired bound.

To bound the total magnitude of all coefficients,
we first need to count the number of coefficients.
The number of monomials of degree $d$
on $n$ variables is $\binom{n+d-1}{d}$.
This is smaller than $\binom{n+d}{d}$.
We upper bound the latter by
$\left(\frac{e(n+d)}{d}\right)^d$,
which is a standard inequality for binomial coefficients. Multiplying by the bound
from \lem{coeff_bounded}, we get
an upper bound of $(2en(n+d)^2)^d$
on the magnitude of all coefficients of monomials
of degree $d$. Some monomials may have degree
smaller than $d$; we therefore need to sum the
above over degrees from $d$ down to $0$. But this
series decreases faster than a geometric series --
indeed, assuming $n,d\ge 1$, it decreases strictly
faster than a geometric series that decreases by
a factor of $8e\ge 21$ each term. Such a geometric
series has a sum at most $21/20$ times
the largest term. Replacing $2e$ by $8$ in the equation is an increase by a factor larger
than $21/20$, so we can safely upper bound the
total magnitude of all coefficients by
$(8n(n+d)^2)^d\le (2(n+d))^{3d}$.
\end{proof}

The remainder of this section is devoted to proving \lem{coeff_bounded}. The main idea is to use multivariate Lagrange interpolation similar to the proof for univariate case in \cite{She13b}. We first develop some Lagrange interpolators and prove some basic properties about them.

Consider the set of points 
\begin{equation}
S = \{\alpha \in \mathbb{R}^n: d \alpha \in \{0,1,\ldots, d\}^n \: \text{and} \: \sum_i \alpha_i \le 1\}.    
\end{equation}
Note that the elements of $S$ are in one-to-one correspondence with degree $\le d$ monomials in $n$ variables and $|S| = \binom{n+d}{d}$.
We first define Lagrange interpolators with respect to the points in $S$.

\begin{proposition}
\label{prop:p_alpha_definition}
For every $\alpha \in S$, there is a degree $d$ polynomial $p_{\alpha}$ s.t. $p_{\alpha}(\alpha) = 1$ and $p_{\alpha}(\beta) = 0$ for all $\beta \in S$ with $\beta \neq \alpha$.
\end{proposition}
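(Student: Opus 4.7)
The plan is to write down an explicit Lagrange-type interpolator. Enumerate $S$ by $n$-tuples of nonnegative integers: each $\alpha\in S$ corresponds uniquely to $(a_1,\dots,a_n)$ with $a_i\in\{0,1,\dots,d\}$, $\sum_i a_i\le d$, via $\alpha_i=a_i/d$. Given such $\alpha$, I would define
\begin{equation}
p_\alpha(x)=\Biggl(\prod_{i=1}^n\prod_{j=0}^{a_i-1}\frac{x_i-j/d}{a_i/d-j/d}\Biggr)\cdot\Biggl(\prod_{k=0}^{d-\sum_i a_i-1}\frac{1-\sum_\ell x_\ell-k/d}{1-\sum_\ell a_\ell/d-k/d}\Biggr),
\end{equation}
with the convention that an empty product equals $1$ (this handles the degenerate cases $a_i=0$ and $\sum_i a_i=d$). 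Each denominator is a strictly positive number, so $p_\alpha$ is a well-defined real polynomial.

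First I would check the degree bound: the first double product contributes $\sum_i a_i$ linear factors, the second contributes $d-\sum_i a_i$ linear factors, so $\deg p_\alpha=d$. Next, evaluating at $x=\alpha$, every factor collapses to $1$, giving $p_\alpha(\alpha)=1$.

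The core calculation is showing $p_\alpha(\beta)=0$ for each $\beta\in S\setminus\{\alpha\}$, with $\beta$ corresponding to integers $(b_1,\dots,b_n)$. I would split into two cases. If there exists some coordinate $i$ with $b_i<a_i$, then $b_i/d\in\{0/d,1/d,\dots,(a_i-1)/d\}$, so one of the factors $x_i-j/d$ in the first product vanishes at $x=\beta$. Otherwise $b_i\ge a_i$ for every $i$; since $\beta\ne\alpha$, at least one inequality is strict, and summing gives $\sum_i b_i>\sum_i a_i$. Because $\beta\in S$ we also have $\sum_i b_i\le d$, so $d-\sum_i b_i$ is an integer in $\{0,1,\dots,d-\sum_i a_i-1\}$; thus one of the factors $1-\sum_\ell x_\ell-k/d$ in the second product vanishes at $x=\beta$.

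There is no real obstacle here beyond choosing the construction so that the total degree stays at $d$; the subtle point is using the auxiliary linear form $1-\sum_\ell x_\ell$ to knock out precisely those $\beta\in S$ not already killed by coordinatewise factors, which is what keeps the total degree at $d$ rather than blowing up to something like $nd$. With this proposition in hand, the $p_\alpha$ form a dual basis to the evaluation functionals at points of $S$, which is exactly what is needed to set up the Lagrange interpolation argument feeding into \lem{coeff_bounded}.
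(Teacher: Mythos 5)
Your construction is correct and is essentially the paper's own proof: the paper builds the same interpolator $q_\alpha(x)=\prod_{j=k+1}^{d}\bigl(\sum_i x_i - j/d\bigr)\prod_i\prod_{j_i=0}^{d\alpha_i-1}(x_i-j_i/d)$ and normalizes by $q_\alpha(\alpha)$ at the end, whereas you normalize each linear factor individually and write the ``global'' factors in the equivalent form $1-\sum_\ell x_\ell-k/d$. The case analysis (some $b_i<a_i$ versus $b_i\ge a_i$ for all $i$) matches the paper's argument exactly.
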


\begin{proof}
We first define polynomials $q_{\alpha}$ that satisfy all the properties of $p_\alpha$, except that $q_{\alpha}(\alpha) \neq 0$ instead of $q_{\alpha}(\alpha)=1$. We then define $p_{\alpha}(x)$ to be $q_{\alpha}(x)/q_{\alpha}(\alpha)$. Let us first construct $q_{\alpha}$ for all $\alpha$ such that $\sum_i \alpha_i = 1$. We define $q_{\alpha}$ as
\begin{equation}
q_\alpha(x) = \prod_{i=1}^n \prod_{j_i = 0}^{d\alpha_i - 1} (x_i - j_i/d).    
\end{equation}
Clearly $q_{\alpha}$ has degree $d$ and $q_{\alpha}(\alpha) \neq 0$ because all the terms of the form $(\alpha_i - j_i/d)$ are nonzero by construction. But why is $q_\alpha(\beta) = 0$ for all $\beta \in S$ with $\beta \neq \alpha$? Since $\sum_i \alpha_i = 1$ and $\sum_i \beta_i \le 1$ as well as $\beta \neq \alpha$, it follows that there is some $i\in [n]$ s.t. $\beta_i < \alpha_i$ and hence $\beta_i = j_i/d$ for some $j_i \le d \alpha_i - 1$ (due to the integrality of $d \beta_i$ and $d \alpha_i$). For that choice of $i$, the term $(\beta_i - j_i/d)$ in the product above will be zero.

Now let us construct the polynomial corresponding to $\alpha=0^n$, $q_{0^n}$. Here the polynomial is quite simple as well.
\begin{equation}
q_{0^n}(x) = \prod_{j=1}^d \left(\sum_{i=1}^n x_i - j/d\right).    
\end{equation}
It is clear that $q_{0^n}(\beta) = 0$ for all $\beta \in S$ with $\beta \neq 0$. Also $q_{0^n}$ has degree $d$ and $q_{0^n}(0) \neq 0$.

Now let us construct the polynomial in the general case, which will be a combination of the two cases discussed above. Suppose $\sum_{i} \alpha_i = k/d$. Then the polynomial is as follows.
\begin{equation}
q_\alpha(x) = \prod_{j=k+1}^d \left(\sum_{i=1}^n x_i - j/d\right) \prod_{i=1}^n \prod_{j_i = 0}^{d\alpha_i - 1} (x_i - j_i/d)    
\end{equation}
It is clear that the degree of the polynomial is $d$ and that $q_\alpha(\alpha) \neq 0$. The property that $q_\alpha(\beta) = 0$ for $\beta \in S$, $\beta \neq \alpha$ follows from a combination of arguments given above.
\end{proof}

We now list some properties of the polynomials $p_{\alpha}$ which will be useful for us.

\begin{proposition}\label{prop:p_alpha_properties} The following statements hold for the polynomials constructed in \prop{p_alpha_definition}.
\begin{enumerate}
    \item The collection of polynomials $\{p_{\alpha}\}_{\alpha \in S}$ forms a basis for the space of $n$-variate degree-$d$ polynomials.
    \item For each $\alpha \in S$, the magnitude of coefficients of $p_{\alpha}$ is at most $d^d (2n)^d$.
\end{enumerate}
\end{proposition}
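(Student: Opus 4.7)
\textbf{Proof plan for Proposition~\ref{prop:p_alpha_properties}.}

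For part~1, I would use a standard dimension-count argument. The space of $n$-variate real polynomials of total degree at most $d$ has dimension equal to the number of monomials $x_1^{e_1}\cdots x_n^{e_n}$ with $e_1+\cdots+e_n\le d$, which is $\binom{n+d}{d}=|S|$. Since we have exactly $|S|$ polynomials $\{p_\alpha\}_{\alpha\in S}$ living in this space, it suffices to verify linear independence. If $\sum_\alpha c_\alpha p_\alpha\equiv 0$, then evaluating at any $\beta\in S$ and invoking the interpolation property $p_\alpha(\beta)=\delta_{\alpha\beta}$ from Proposition~\ref{prop:p_alpha_definition} immediately gives $c_\beta=0$.

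For part~2, my plan is to write $p_\alpha=q_\alpha/q_\alpha(\alpha)$, upper-bound the coefficients of the numerator, and lower-bound the denominator in magnitude. For the numerator, I would use the elementary fact that the $\ell_1$-norm of the coefficient vector is submultiplicative under polynomial multiplication. Each of the $d$ linear factors of $q_\alpha$ has coefficient $\ell_1$-norm at most $n+1$ (for factors of the form $\sum_i x_i - j/d$ with $j\le d$) or at most $2$ (for factors of the form $x_i-j_i/d$ with $j_i<d$), so the coefficients of $q_\alpha$ sum in absolute value to at most $(n+1)^d$, and hence every individual coefficient is at most $(n+1)^d\le(2n)^d$ in magnitude.

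For the lower bound on $|q_\alpha(\alpha)|$, I would evaluate explicitly. Setting $k:=d\sum_i\alpha_i$, the first product gives
\begin{equation}
\prod_{j=k+1}^d\Bigl|{\textstyle\sum_i}\alpha_i-j/d\Bigr|=\prod_{j=k+1}^d\frac{j-k}{d}=\frac{(d-k)!}{d^{d-k}},
\end{equation}
while the second product gives $\prod_i (d\alpha_i)!/d^{d\alpha_i}$. Since $\sum_i d\alpha_i=k$, the denominators combine to $d^d$, yielding $|q_\alpha(\alpha)|=(d-k)!\prod_i(d\alpha_i)!/d^d\ge 1/d^d$. Dividing the two bounds gives the claim: every coefficient of $p_\alpha$ has absolute value at most $d^d(n+1)^d\le d^d(2n)^d$.

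The routine obstacle is really just keeping the bookkeeping straight between the two types of linear factors in $q_\alpha$ and verifying that neither telescoping product in the computation of $q_\alpha(\alpha)$ picks up an extra hidden factor. Everything else is standard; no deep ideas are needed beyond the observation that the interpolation nodes in $S$ are lattice points so the factors of $q_\alpha(\alpha)$ evaluate to ratios of small integers over $d$, making the product of factorials easy to read off.
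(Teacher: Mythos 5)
Your proposal is correct and follows essentially the same route as the paper: linear independence by evaluation at the nodes of $S$ for part 1, and for part 2 the submultiplicativity of the coefficient $\ell_1$-norm over the $d$ linear factors of $q_\alpha$ combined with the lower bound $|q_\alpha(\alpha)|\ge d^{-d}$. The only cosmetic difference is that you compute $q_\alpha(\alpha)$ explicitly as a ratio of factorials over $d^d$, while the paper rescales $q_\alpha$ by $d^d$ and simply observes that every factor of $d^d q_\alpha(\alpha)$ is a nonzero integer.
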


\begin{proof}
For item $1$, it suffices to prove that the polynomials $p_\alpha$ are linearly independent since the number of polynomials is the same as the dimension of the space. Suppose we have a linear combination of these polynomials which is $0$:
$\sum_{\alpha \in S} c_{\alpha} p_{\alpha} = 0$.
Evaluating this linear combination at all $\beta \in S$, we get
\begin{equation}
c_\beta = c_{\beta} p_{\beta}(\beta) = 0,    
\end{equation}
which proves their linear independence.

To bound the coefficients of $p_\alpha(x) = q_\alpha(x)/q_\alpha(\alpha)$, first note that we can instead scale $q_\alpha$ up by a factor of $d^d$. This scaled up version of $q_\alpha$ is the following:
\begin{equation}
d^d q_\alpha(x) = \prod_{j=k+1}^d \left(d\sum_{i=1}^n x_i - j\right) \prod_{i=1}^n \prod_{j_i = 0}^{d\alpha_i - 1} (d x_i - j_i).    
\end{equation}
It's easy to see that $|d^d q_\alpha(\alpha)| \geq 1$, since each term in the product is a nonzero integer. Hence to upper bound the coefficients of $p_\alpha$, it is sufficient to upper bound the coefficients of $d^d q_\alpha$. 

A crude upper bound on the magnitude of each coefficient in the above polynomial is $d^d (n+1)^d \le d^d (2n)^d$. One way to see this is that the polynomial is a product of degree-$1$ polynomials, which are of the form $(d\sum_{i=1}^n x_i - j)$ or $(d x_i - j_i)$. For each of these, the sum of magnitudes of all coefficients is at most $d(n+1)$. When taking the product of several polynomials, the sum of magnitudes of all coefficients is submultiplicative, hence the sum of magnitudes of all coefficients in the above polynomial is $d^d (n+1)^d \le d^d (2n)^d$.
\end{proof}

Now we are ready to prove \lem{coeff_bounded}. 

\begin{proof}[Proof of \protect{\lem{coeff_bounded}}]
Since $\{p_\alpha\}_{\alpha \in S}$ is a basis for the space of degree-$d$ polynomials on $n$ variables (by \prop{p_alpha_properties}), we know that the given bounded polynomial $p$ can be written as a linear combination of these polynomials
\begin{equation}
p = \sum_{\alpha \in S} c_\alpha p_\alpha,    
\end{equation}
for some real numbers $c_\alpha$. By evaluating the above expression at $\beta \in S$, we can compute the coefficient $c_\beta$ and we get
\begin{equation}
|c_\beta| = |p(\beta)| \le 1.    
\end{equation}
Since $p$ is a linear combination of the $\binom{n+d}{d}$ polynomials $p_\alpha$ with coefficients $|c_\alpha|\leq 1$, we get that the magnitude of the largest coefficient of $p$ is at most $\binom{n+d}{d} d^d (2n)^d \le (2nd(n+d))^d$, where we used the fact that the largest coefficient of any $p_\alpha$ is at most $d^d (2n)^d$ (using \prop{p_alpha_properties}).
\end{proof}

\section{Quantum query complexity of \texorpdfstring{$\SCGT$}{SCGT}}
\label{app:belovsSCGT}

In this appendix we formally prove \thm{strong_Belovs}, which follows from \cite{Bel15}.

\strongBelovs*

Before proving this, let us recall the definition of $\SCGT$ (\defn{SCGT}):

\scgt*

\begin{proof}
The lower bound on the complexity of $\SCGT$ follows from \thm{BelovsCGT}, since $\CGT$ is a special case of $\SCGT$.
To show the upper bound, we use the dual of the adversary bound~\cite{Bel15,Rei11,LMR+11}, which we now restate using notation convenient for this problem.

For each $S\subseteq [n]$, let $X_S$ be a $D\times D$ matrix, where $D$ is the domain of $\SCGT$ as defined above.
Then the quantum query complexity of $\SCGT_{2^n}$ is given by the following semidefinite program (SDP):
\begin{align}
    \text{minimize} \qquad &  \max_{z \in D} \displaystyle\sum_{S\subseteq[n]} X_S\llbracket z,z\rrbracket & \qquad \\
    \text{s.t.} \qquad & \displaystyle\sum_{S:z_S \neq z'_S} X_S\llbracket  z,z'\rrbracket =1 & \forall z,z' \in D \text{ with } x(z)\neq x(z')\\
    \text{and} \qquad  & X_S \succcurlyeq 0  & \forall S\subseteq [n] 
\end{align}
Following Belovs, we used the notation $X_S\llbracket   z,z'\rrbracket$ for $z,z'\in D$ to denote the $(z,z')$ entry of the matrix $X_S$. 
We will also talk about matrices where each entry is a function of some parameter $p \in \mathbb{R}$. 
For this usage, we will use the notation $X_S(p)$ to talk about the matrix, and $X_S(p)\llbracket   z,z'\rrbracket$ to talk about a particular entry of the matrix.

So to prove an upper bound on $Q(\SCGT)$ it suffices to exhibit $X_S$ obeying the above constraints whose objective value is $O(\sqrt{n})$. 
We construct these matrices in steps. We start by constructing vectors $\psi_S(p)$ of length $D$ as follows. For all $S\subseteq[n]$, and $p\in[0,1]$, define the vector $\psi_S(p)$ as
\begin{equation}
\psi_S(p)\llbracket z \rrbracket = \frac{1}{(1-p)^{|x(z)|/2}} \times \begin{cases} \sqrt[4]{np/(1-p)} & \sum_{i \in S} x(z)_i=0\\
\sqrt[4]{(1-p)/np} & \sum_{i \in S} x(z)_i=1\\
0 & \text{otherwise}
\end{cases}.    
\end{equation}
Now consider the following $D\times D$ rank-one matrices parameterized by $p\in[0,1]$:
\begin{equation}
Y_S(p)=\frac{p^{|S|}(1-p)^{n-|S|}}{2p} \psi_S(p) \psi^*_S(p).
\end{equation}
Finally, we define $X_S$ as
\begin{equation}
X_S = \int_0^1 Y_S(p) dp.
\end{equation}
We now claim these $X_S$ are a solution to the above SDP with objective value $O(\sqrt{n})$.

To see this, first note that since the $Y_S(p)$ are all rank 1 and hence positive semidefinite (PSD), and the PSD matrices form a convex cone, the $X_S$ matrices are PSD as well. This satisfies ones of the constraints of the SDP.

Next we show the objective value is $O(\sqrt{n})$. To see this, note that for any $z\in D$, we have that
\begin{align*}
Y_S(p)\llbracket  z,z\rrbracket  = \frac{p^{|S|}(1-p)^{n-|S|}}{2p(1-p)^{|x(z)|}} \times \begin{cases} \sqrt{np/(1-p)} & \sum_{i \in S} x(z)_i=0\\
\sqrt{(1-p)/np} & \sum_{i \in S} x(z)_i=1\\
0 & \text{o.w.}
\end{cases}
\end{align*}

Consider the Bernoulli distribution $P$ on $[n]$, and note that the probability of obtaining $S$ under distribution, denoted $P(S)$, is $p^{|S|}(1-p)^{n-|S|}$. One can see that summing the above over $S\subseteq [n]$ gives
\begin{align}
&\sum_{S\subseteq [n]} Y_S(p)\llbracket z,z\rrbracket  \nonumber \\
&= 
\frac{1}{2p(1-p)^{|x(z)|}} \left( \Pr_{S\sim P}\left[\sum_{i \in S} x(z)_i=0\right]\sqrt{\frac{np}{(1-p)}} + \Pr_{S\sim P}\left[\sum_{i \in S} x(z)_i=1\right]\sqrt{\frac{(1-p)}{np}} \right) \\
&= \frac{1}{2p(1-p)^{|x(z)|}} \left( (1-p)^{|x(z)|}\sqrt{\frac{np}{(1-p)}} + |x(z)|p(1-p)^{|x(z)|-1}\sqrt{\frac{(1-p)}{np}} \right) \\
&=\frac{1}{2} \left( \sqrt{\frac{n}{p(1-p)}} + |x(z)|\sqrt{\frac{1}{np(1-p)}} \right) \leq \sqrt{\frac{n}{p(1-p)}}.
\end{align}
Hence for all $z\in D$, we have that
\begin{equation}
\displaystyle\sum_{S\subseteq[n]} X_S\llbracket z,z\rrbracket \leq \int_0^1 dp \sqrt{\frac{n}{p(1-p)}} = \pi\sqrt{n}, 
\end{equation}
as desired. 

Finally we show these $X_S$ satisfy the remaining constraint. Suppose $z,z'\in D$ such that $x(z)\neq x(z')$, and consider $Y_S(p)\llbracket z,z'\rrbracket$. Since the $\psi_S$ vectors only have mass on points for which $\sum_{i \in S} x(z)_i\in\{0,1\}$, the only sets $S$ for which the value of $Y_S(p)\llbracket z,z'\rrbracket$ is nonzero are those for which either $\sum_{i \in S} x(z)_i=\sum_{i \in S} x(z')_i=0$, $\sum_{i \in S} x(z)_i=\sum_{i \in S} x(z')_i=1$, or $\sum_{i \in S} x(z)_i+\sum_{i \in S} x(z')_i=1$. 

Now by the definition of $\SCGT$, if $\sum_{i \in S} x(z)_i \in \{0,1\}$, then $z_S=\sum_{i \in S} x(z)_i$. Hence if $\sum_{i \in S} x(z)_i=\sum_{i \in S} x(z')_i=0$ or $\sum_{i \in S} x(z)_i=\sum_{i \in S} x(z')_i=1$, then we have that $z_S=z'_S$. Therefore if one considers the sum 
\begin{equation}\displaystyle\sum_{S:z_S \neq z'_S} Y_S(p)\llbracket z,z'\rrbracket\end{equation}
Then the only nonzero terms are those for which $\sum_{i \in S} x(z)_i+\sum_{i \in S} x(z')_i=1$.
Hence we have that 
\begin{align}
    \sum_{S:z_S \neq z'_S} Y_S(p)\llbracket z,z'\rrbracket &= \frac{\Pr_{S\sim P}[\sum_{i \in S} x(z)_i+\sum_{i \in S} x(z')_i=1]}{2p(1-p)^{|x(z)|+|x(z')|/2}} \\
    &=\frac{|x(z) \oplus x(z')|p (1-p)^{|x(z)\vee x(z')|-1}}{2p(1-p)^{|x(z)|+|x(z')|/2}} \\
    &=\frac{|x(z) \oplus x(z')|}{2} (1-p)^{|x(z)\vee x(z')|-1 - (|x(z)|+|x(z')|/2)} \\
    &= \frac{|x(z) \oplus x(z')|}{2} (1-p)^{\frac{|x(z) \oplus x(z')|}{2}-1 } 
\end{align}
Where $x(z) \oplus x(z')$ denotes the bitwise $\XOR$ of the strings and $x(z)\vee x(z')$ denotes their bitwise $\OR$.
Hence for all $z,z'$ such that $x(z)\neq x(z')$,
\begin{align}
    \sum_{S:z_S \neq z'_S} X_S\llbracket z,z'\rrbracket &= \int_0^1 dp \frac{|x(z) \oplus x(z')|}{2} (1-p)^{\frac{|x(z) \oplus x(z')|}{2}-1 } =1
\end{align}
Where we used that fact that $|x(z) \oplus x(z')|\geq 1$ as $x(z)\neq x(z')$, and that for any positive real $a>0$, we have $\int_0^1 a(1-p)^{a-1} dp =1$.
\end{proof}

\bibliographystyle{alphaurl}
\phantomsection\addcontentsline{toc}{section}{References} 
\renewcommand{\UrlFont}{\ttfamily\small}
\newcommand{\eprint}[1]{\small \upshape \tt \href{http://arxiv.org/abs/#1}{#1}}
\let\oldpath\path
\renewcommand{\path}[1]{\small\oldpath{#1}}
\bibliography{query}

\end{document}